\newtheorem*{theorem*}{Theorem}
\newtheorem{theorem}{Theorem}
\newtheorem{definition}[theorem]{Definition}
\newtheorem{example}[theorem]{Example}
\newtheorem{lemma}[theorem]{Lemma}
\newtheorem{proposition}[theorem]{Proposition}
\newcommand{\bR}{\mathbb{R}}
\newcommand{\cN}{\mathcal{N}}
\newcommand{\cM}{\mathcal{M}}
\newcommand{\cX}{\mathcal{X}}
\newcommand{\obj}{\operatorname{obj}}
\title{Approximate Group Fairness for Clustering
}
\author{
  Bo Li \\
  Department of Computing \\
  The Hong Kong Polytechnic University \\
  \texttt{comp-bo.li@polyu.edu.hk} \\
   \And
  Lijun Li \\
  Department of Computer Science \\
  City University of Hong Kong\\
  \texttt{lijunli1211@gmail.com} \\
  \AND
  Ankang Sun \\
  Warwick Business School  \\
  University of Warwick \\
  \texttt{phd18as@mail.wbs.ac.uk} \\
  \And
  Chenhao Wang\thanks{Corresponding author.} \\
  Beijing Normal University, Zhuhai \&\\
   BNU-HKBU United International College\\
  \texttt{chenhwang@bnu.edu.cn} \\
  \And
  Yingfan Wang \\
  Department of Computer Science \\
  Duke University \\
  \texttt{yingfan.wang@duke.edu} \\
}
\begin{document}
\maketitle

\begin{abstract}
We incorporate group fairness into the algorithmic centroid clustering problem, where $k$ centers are to be located to serve $n$ agents distributed in a metric space. We refine the notion of proportional fairness proposed in [Chen et al.,  ICML 2019] as {\em core fairness}, and $k$-clustering is in the core if no coalition containing at least $n/k$ agents can strictly decrease their total distance by deviating to a new center together. Our solution concept is motivated by the situation where agents are able to coordinate and utilities are transferable.  A string of existence, hardness and approximability results is provided. Particularly,  we propose two dimensions to relax core requirements: one is on the degree of distance improvement, and the other is on the size of deviating coalition.  For both relaxations and their combination, we study the extent to which relaxed core fairness can be satisfied in metric spaces including line, tree and general metric space, and design approximation algorithms accordingly.

\end{abstract}

\keywords{clustering \and group fairness \and metric space \and core stability}

\section{Introduction}

Motivated by various real-world machine learning algorithm deployments where the data points are real human beings who should be treated unbiasedly, fairness is increasingly concerned.
Most traditional algorithms mainly focus on the efficiency or profit, and thus fail to ensure fairness for individual point or collection of points.
Accordingly, the past several years have seen considerable efforts in developing fair learning algorithms \cite{chierichetti2017fair,chen2019proportionally,backurs2019scalable,bera2019fair}.

Following \cite{chen2019proportionally}, we revisit the group fairness in unsupervised learning -- specifically, centroid clustering. 
A canonical clustering problem is described as: 
given a metric space $\mathcal{X}$ with distance measure $d: \mathcal{X} \times \mathcal{X} \to {\mathbb R}^+ \cup \{0\}$, a multiset $\mathcal{N} \subseteq \mathcal{X}$ of $n$ data points (in this work, each data point is an {\em agent}), a set $\mathcal{M} \subseteq \mathcal{X}$ of possible centers and a positive integer $k$, 
the task is to find a subset $Y \subseteq \mathcal{M}$ of $k$ cluster centers and assign each data point to its closest center in $Y$. 
The commonly studied objective is to make data points to be as close to their assigned centers as possible. Standard algorithms, such as $k$-means and $k$-medians, solve the clustering problem by satisfying a global criterion, where individual or group-wise happiness has not been taken into consideration.
It has been noted that the globally efficient solutions are less preferred, especially when the application scenario is about public resources allocation \cite{conitzer2017fair,fain2018fair}.
We consider the following facility location problem proposed in \cite{chen2019proportionally,michaproportionally}.

\begin{example}
\label{example:city-suburb}
Suppose the government plans to build $k=11$ identical parks to serve the residents, where every resident's cost (e.g. gasoline) is proportional to the distance between her home and the closest park.  There is a dense urban center with a population of $10,000$ and $10$ small suburbs, each of which has a population of 100.
Suppose the suburbs are close to each other (e.g. 10km) compared with the distance between them and the urban center (e.g. 500km).

\end{example}

Accordingly, by $k$-means or $k$-medians algorithms, the government will build 1 park at the urban center, and 10 parks for each small suburb.
It is not hard to see such a plan is not fair: a single park is used to serve $10,000$ people in the urban area, but each small suburb of 100 people has its own a park.
This intuition is formalized by Moulin as the {\em principle of equal entitlement} \cite{moulin2004fair}: A solution is fair if it respects the entitlements of groups of agents, namely, every subset of agents that is of sufficiently large size is entitled to choose a center for themselves.
Such group-wise fairness is also referred as {\em core}, which has been extensively studied in game theory \cite{deng1994complexity,chalkiadakis2011computational}, social choice \cite{mckelvey1986covering,feldman2006welfare} and fair division \cite{abdulkadirouglu1998random,fain2018fair}.

Chen et al. \cite{chen2019proportionally} first formally and theoretically studied group fairness in the clustering problem.
Informally, a $k$-clustering is proportional if there is no coalition of agents with size at least $n/k$ such that by deviating together to a new center each member of this coalition gets strictly better off.
We note that proportionality overlooks the situation where the agents in a coalition can facilitate internal monetary transfers between themselves if they can file a justified claim to build a park at a new location that is better for the coalition overall.

In this work, we refine proportionality by {\em core with transferable utilities} or {\em core} for short.
Formally, the cost of agent $i \in \mathcal{N}$ induced by a cluster center $y \in \mathcal{M}$ is $d(i, y)$, i.e., the distance  between $i$ and $y$; and the cost induced by a $k$-clustering $Y$ is $d(i,Y) \triangleq \min_{y \in Y} d(i, y)$, i.e., the minimum distance from $i$ to any
cluster center in $Y$.

\begin{definition}[Core]
\label{def:core}
For any $k$-clustering $Y$, a group of agents $S$ with $|S| \geq \frac{n}{k}$ is called a {\em blocking coalition}, if there is a new center $y' \in \cM \setminus Y$ such that by deviating to $y'$ together, the
total distance of $S$ can be strictly decreased, i.e.,
$\sum_{i\in S}d(i,y') < \sum_{i\in S} d(i, Y)$.
A $k$-clustering is called in the {\em core} or a {\em core clustering} if there is no blocking coalition.
\end{definition}

It is not hard to verify that core fairness is stronger than proportionality \cite{chen2019proportionally} in the sense that a core clustering must be proportional, but not vice versa.
Particularly, in Example \ref{example:city-suburb}, although a proportional clustering builds 10 parks for the city center and 1 park for all the suburbs, the one for suburbs can be arbitrarily built at any suburb's location.
However, a core clustering will select a more preferred location for this park to serve the 10 suburbs by minimizing their total distance.
Finally, it can be shown that traditional learning algorithms (such as $k$-means) can be arbitrarily bad with respect to core fairness.

\subsection{Main Contribution}

As core clusterings are not guaranteed to exist for all instances, 
in this work, we provide two relaxation dimensions to weaken the corresponding requirements. 
The first dimension is in parallel with \cite{chen2019proportionally,michaproportionally}, where a valid blocking coalition should have large distance improvement.
In the second dimension, different from their works, 
we study the relaxation when the size of a valid blocking coalition is required to be large enough. 
We formalize the two relaxations in the following definition.

\begin{definition}[Approximate Core]
For $\alpha\ge 1$ and $\beta \ge 1$, we say a $k$-clustering $Y$ is in the {\em $(\alpha,\beta)$-core} or an {\em $(\alpha,\beta)$-core clustering} if there is no $S \subseteq \mathcal{N}$ with $|S| \ge \alpha \cdot  \frac{n}{k}$ and $y'\in \mathcal M\backslash Y$ such that $\beta \cdot\sum_{i \in S}d(i,y') < \sum_{i \in S}d(i,Y)$.
\end{definition}

We investigate to what extent these two relaxations can be satisfied by finding the smallest $\alpha$ and $\beta$ such that $(\alpha,\beta)$-core is nonempty.
The relaxation on the size of blocking coalitions is regarded as $\alpha$-dimension and that on the distance improvement as $\beta$-dimension.

We consider both general metric space and special cases, including real line and discrete tree.
Line and tree are two widely studied metric spaces, 
which is partly because they extensively exist in the real world. For example, line can be used to model the situations where people want to set proper temperatures for classrooms or schedule meeting times \cite{feldman2013strategyproof},
and trees can be used to model referral or query networks \cite{kleinberg2005query,babaioff2012bitcoin}.

As argued in \cite{michaproportionally},  though objectives like truthfulness \cite{alon2010strategyproof} and social welfare maximization \cite{feldman2013strategyproof} have received significant attention, fairness is largely overlooked.

\vspace{-4mm}
\begin{table}[htbp]
    \caption{Our results for $(1,\beta)$-core.}
    \vskip 0.15in
  \centering
  {\small  \begin{tabular}{|c|c|c|c|}
    \hline
       & Line & Tree & Metric Space \bigstrut\\
    \hline
    \multirow{2}{*}{Upper Bound}    & $O(\sqrt{n})$   & $O(\sqrt{n})$ & $2\lceil\frac nk\rceil+1$ \bigstrut\\
    & (Thm \ref{thm:up1}) & (Thm \ref{thm:999}) & (Thm \ref{thm::16}) \\
    \hline
    \multirow{2}{*}{Lower Bound}     & $\Omega(\sqrt{n})$  & $\Omega(\sqrt{n})$  & $\Omega(\sqrt{n})$  \bigstrut\\
    & (Thm \ref{thm:beta:lb}) & (Thm \ref{thm:999}) & (Thm \ref{thm:beta:lb}) \\
    \hline 
    \end{tabular}}
    \label{tab:1}
\end{table} 

\vskip 0.1in

\paragraph{$(1,\beta)$-Core.}
We first, in Section \ref{sec:beta}, study the relaxation in $\beta$-dimension, where $\alpha$ is fixed to be 1, i.e., $(1,\beta)$-core. Our results are summarized in Table \ref{tab:1}. 
Different to the study of proportionality in \cite{chen2019proportionally}, where a constant approximation can be guaranteed for any metric space, we show that for core fairness, the existence of $(1,o(\sqrt{n}))$-core clustering is not guaranteed, even in a real line. 
On the other hand, when the metric space is a real line or a tree, we present efficient algorithms that always output a $(1, O(\sqrt{n}))$-core clustering, 
and thus we get the optimal approximation algorithm by relaxing the distance requirement solely.

With respect to general metric space, we show that a greedy algorithm ensures $O(\frac nk)$-approximation.
Beyond the study for arbitrary number $k$, 

when $k \ge \frac{n}{2}$, we show that for any metric space, there is a polynomial time algorithm which returns a $(1,2)$-core clustering, whereas 
determining the existence of $(1,2-\epsilon)$-core clustering is NP-complete. 

\vspace{-4mm}
\begin{table}[htbp]
    \caption{Our results for $(\alpha,1)$-core. }
    \vskip 0.15in
  \centering
 {\small   \begin{tabular}{|c|c|c|c|}
    \hline
       & Line & Tree & Metric Space \bigstrut\\
    \hline
    \multirow{2}{*}{Upper Bound}   & $2$  & $2$ & $k$  \bigstrut\\
    & (Thm \ref{thm:up3line}) & (Thm \ref{thm:up3tree}) & (Thm \ref{thm:clique}) \\
    \hline
    \multirow{2}{*}{Lower Bound}     & $2$  & $2$  & $\min\{k,\max\{\frac k2,\frac n4\}\}$ \bigstrut\\
    & (Thm \ref{thm:alpha:line:lb}) & (Thm \ref{thm:up3tree}) & (Thm \ref{thm:clique}) \\
    \hline
    \end{tabular} }
     \label{tab:2}
\end{table}

\vskip 0.1in

\paragraph{$(\alpha,1)$-Core.}
Next, in Section \ref{sec:4}, we study the relaxation in $\alpha$-dimension, where $\beta$ is fixed to be 1, i.e., $(\alpha,1)$-core. Our results are summarized in Table \ref{tab:2}.
Different to the relaxation in $\beta$-dimension, we prove a $(2,1)$-core clustering is guaranteed to exist in line and tree spaces. We complement this result with a line instance where $(2-\epsilon,1)$-core is empty for any $\epsilon > 0 $. Thus our algorithms are optimal.

For general metric space, we observe that a trivial upper-bound for $\alpha$ is $k$, which can be guaranteed by 
placing the $k$ centers such that the total distance of all agents is minimized. We also complement this observation with a lower-bound instance where $(\alpha, 1)$-core is empty for any $\alpha \le \min\{k,\max\{\frac k2,\frac n4\}\}$, and thus our algorithm is tight up to a constant factor. 
Finally, we end this section by proving that determining the existence of $(\alpha,1)$-core clustering for any constant $\alpha\ge 1$ in general metric space is NP-complete.

\paragraph{$(\alpha,\beta)$-Core}
In Section \ref{sec:alpha-beta}, we integrate previous results and study the case when both dimensions can be relaxed. 
Intuitively, sacrificing the approximation ratio in one dimension should be able to improve that in the other.
We prove this intuition affirmatively by quantifying the tradeoff between the two relaxation dimensions. Specifically,
\begin{itemize}
    \item for line or tree space and any $\alpha\in(1,2]$, $(\alpha, \frac{1}{\alpha-1})$-core is always non-empty (Thm \ref{thm:ab});
    \item for general metric space and $\alpha>1$, $(\alpha,\max\{4, \frac{2}{\alpha -1} + 3\})$-core is always non-empty (Thm \ref{thm:ab_general_space}). 
\end{itemize}
 We want to highlight the significance of the above two results, especially for the general metric space, which is regarded as the major theoretical contribution of the current work.
The results in Sections \ref{sec:beta} and \ref{sec:4} imply that, in the general metric space, if $\alpha=1$ is not relaxed, the best possible approximation ratio for $\beta$ is $\Theta(\sqrt{n})$;
on the other hand, if $\beta =1$ is not relaxed, the best possible approximation ratio for $\alpha$ is $\max\{\frac k2,\frac n4\}$.
However, our results in this section show that if we sacrifice a small constant on the approximation for one dimension, we can guarantee constant approximation for the other dimension.
For example, by relaxing $\alpha$ to $2$, $(2, 5)$-core is always non-empty,
and by relaxing $\beta$ to $4$, $(3, 4)$-core is always non-empty.

\paragraph{Experiments}
Finally, in Section \ref{sec:experiment}, we conduct experiments to examine the performance of our algorithms.
We note that our algorithms have good theoretical guarantees in the worst case, but they may not find the fairest clustering for every instance. 
Accordingly, we first propose a two-stage algorithm to refine the clusters and then use synthetic and real-world data sets to show how much it outperforms classic ones regarding core fairness.
Actually, the second stage of our algorithm provides us an interface to balance fairness and social efficiency.
As shown by the experiments, our solution does not sacrifice much efficiency.

\subsection{Other Related Works}

{\em Cooperative Game Theory.} 
Core stability is widely studied in cooperative game theory with transferable utilities, where a collective profit is to be distributed among a set of agents \cite{chalkiadakis2011computational}.
Informally, a core payoff vector ensures no coalition wants to deviate from the grand coalition. 
Particularly, in Myerson games
\cite{DBLP:journals/mor/Myerson77,DBLP:conf/aaai/MeirZER13,  DBLP:conf/wine/Elkind14}, the agents
are located on a graph and the coalitions are required to be connected components in this graph.
The differences between our model and theirs include the following perspectives. 
First, instead of a distribution of a (divisible and homogeneous) profit, an outcome in our model is the locations for $k$ facilities. 
Second, the coalitions are required to be sufficiently large in our model instead of being connected. 
In cooperative games, when the core is empty, {\em cost of stability} \cite{DBLP:conf/sagt/BachrachEMPZRR09} or {\em least core} \cite{DBLP:journals/mor/MaschlerPS79} are studied, where additive relaxations are imposed to offer subsidies to agents if they stay in the grand coalition or penalize deviating coalitions. 
In our model, $\alpha$-dimension relaxation works on the size of blocking coalitions and $\beta$-dimension works on the distance improvement, and both relaxations are defined in the multiplicative way.

{\em Fairness Study in Machine Learning.} The necessity of incorporating fairness into machine learning algorithms has been well recognized in the recent decade. Various fairness concepts have been proposed based on different principles and to adapt to different machine learning tasks. For example, in another seminal work, Chierichetti et al. \cite{chierichetti2017fair} also studied fairness issue under clustering context but defined fairness as preserving equal representation for each protected class in every cluster based on disparate impact in US law. 
To satisfy their fairness requirement, Chierichetti et al. \cite{chierichetti2017fair} designed a two-step algorithm which first decomposes data points into fairlets and then runs classical clustering algorithms on those fairlets.
On one hand, the algorithm is improved by a number of subsequent works in the sense of approximation ratios \cite{harb2020kfc,bercea2019cost} and running time \cite{huang2019coresets,schmidt2018fair,backurs2019scalable}. 
On the other hand, Bera et al.\cite{bera2019fair}, Rosner et al. \cite{rosner2018privacy} and Braverman et al. \cite{braverman2019coresets} extended the setting in \cite{chierichetti2017fair} to allow overlap in protected groups, consider multiple protected features or address fairness towards remote data points.
Under the context of classification, fairness is captured by envy-freeness in \cite{DBLP:conf/nips/ZafarVGGW17,DBLP:conf/icml/UstunLP19,DBLP:conf/nips/BalcanDNP19}.
We refer readers to the survey by \cite{mehrabi2019survey} for a detailed discussion on fairness study in various machine learning contexts. 

{\em Fair Resource Allocation.} Group fairness is recently considered in the resource allocation field when resources are allocated among groups of agents.
Based on how groups are formed, the models can be classified into two categories.
The first one is when agents are partitioned into fixed groups and the fairness is only concerned with the pre-existing groups \cite{segal2019democratic,benabbou2019fairness,kyropoulou2020almost}.
The second one is to consider the fairness of arbitrarily formed groups of agents \cite{aziz2019almost,berliant1992fair,conitzer2019group,hossain2020designing}.
Our problem falls under the umbrella of the second category.
However, the items in our work are public and fairness is defined for collective utilities. 
Fair clustering problem is also related to public resource allocation, where resources can be shared by agents. 
For public resources, one popular research agenda is to impose (combinatorial) constraints on the allocations \cite{conitzer2017fair, cheng2019group,fain2018fair,li2020fair}.
However, in our setting, all centers will be built without constraints.

\section{Preliminaries}

Recall that $\cN \subseteq \cX $ is a multiset of $n$ agents in a metric space $(\mathcal X,d)$, where for any $\{x_1, x_2, x_3\} \subseteq \mathcal{X}$,  $d(x_1, x_2) + d(x_2, x_3) \ge d(x_1, x_3)$.
Note that we allow repetitions in $\cN$ which means multiple agents can be located at the same position. We refer to agents and their locations interchangeably, when no confusion arises.
$\mathcal M\subseteq \mathcal X$ is the set of  feasible locations for cluster centers.  
Our task is to find $Y\in \cM^k$ to place the $k$ centers such that the clustering is in the (approximate) core.
We first present an example where an exact core clustering does not exist, and illustrate the two relaxation  dimensions so that an approximate core clustering exists.

\begin{example}
Consider a complete graph $K_4=(V,E)$ with 4 vertices and the distance between any two vertices is~1. Let $\mathcal X=\mathcal M=V$. Suppose that {at} each vertex lies an agent (i.e., $\mathcal N=V$), and we want to locate $k=2$ centers to cluster these $n=4$ agents. 
First, we note that the (exact) core is empty for this instance, because for any $2$-clustering $Y\in \mathcal M^2$, the remaining $n/k=2$ agents (say $u,v$) in $V\backslash Y$ can form a blocking coalition and deviate to a new center $v\in V\backslash Y$ such that $d(u,v)+d(v,v)=1<2=d(u,Y)+d(v,Y)$. 
Second, for all $\beta\ge 2$, every solution $Y\in \mathcal M^2$ is a $(1,\beta)$-core clustering, because for any group $S\subseteq V$ with $|S|\ge 2$ and any possible deviating center $v \in V$, we have $\beta\sum_{i\in S}d(i,v)\ge2\ge \sum_{i\in S}d(i,Y)$. 
Finally, for all $\alpha>1$, every solution $Y\in \mathcal M^2$ is an $(\alpha,1)$-core clustering, because for any group $S\subseteq V$ with $|S|\ge \alpha\cdot\frac nk >2$ and any possible deviating center $v \in V$, we have $\sum_{i\in S}d(i,v)\ge2\ge \sum_{i\in S}d(i,Y)$.
\end{example}

Motivated by the above example, our task is to compute the smallest $\beta$ or $\alpha$ such that $(1,\beta)$-core or $(\alpha,1)$-core is non-empty.
Besides the general metric space, we are also interested in two special spaces, namely, real line $\bR$ and graph spaces such as tree.

\paragraph{Line.} $\mathcal X = \cM =\bR$ and the distance $d$ is Euclidean, i.e., the agents and centers can be at anywhere in the real line.

\paragraph{Graph Space.} Let $G=(V,E)$ be an undirected tree graph. The edges in $E$ may have specified length. 
In the graph space induced by $G$, $\mathcal X=\mathcal M=V$, and the distance $d$ between two vertices is the length of the shortest path. A tree space is induced by a tree graph.
Note that line is a continuous space where every point is feasible for cluster centers, while the graph space is discrete and centers can only be placed on its vertex set.

The following Lemma \ref{lem:nor} enables us to only focus on the blocking coalitions with minimum possible size. 

\begin{lemma}\label{lem:nor}
Given $\alpha,\beta\ge1$ and a solution $Y\in [\mathcal M]^k$, if there is a group $S$ of size $|S|>\lceil\frac{\alpha n}{k}\rceil$ such that $\beta\sum_{i\in S}d(i,y')<\sum_{i\in S}d(i,Y)$ for some $y'\in\mathcal M$, then there is a group $S'\subseteq S$ of size  $|S'|=\lceil\frac{\alpha n}{k}\rceil$ such that $\beta\sum_{i\in S'}d(i,y')<\sum_{i\in S'}d(i,Y)$.
\end{lemma}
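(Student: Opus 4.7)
The plan is to start from the given oversized blocking coalition $S$ and greedily shrink it to size exactly $s := \lceil \alpha n/k \rceil$ while preserving the blocking inequality against the \emph{same} deviation center $y'$. The natural scalar to rank agents by is the individual surplus
\[
\Delta_i \;:=\; d(i,Y) - \beta\, d(i,y'),
\]
since the hypothesis $\beta\sum_{i\in S}d(i,y')<\sum_{i\in S}d(i,Y)$ is just $\sum_{i\in S}\Delta_i > 0$, and for any $S'\subseteq S$ the blocking condition relative to $y'$ reads $\sum_{i\in S'}\Delta_i > 0$.

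With this reformulation, I would sort the agents of $S$ in non-increasing order of $\Delta_i$ and take $S'$ to be the top $s$ of them. Because the average of the top $s$ entries of a finite list is at least the overall average,
\[
\frac{1}{s}\sum_{i\in S'}\Delta_i \;\ge\; \frac{1}{|S|}\sum_{i\in S}\Delta_i \;>\; 0,
\]
so $\beta\sum_{i\in S'}d(i,y') < \sum_{i\in S'}d(i,Y)$, which is exactly the blocking inequality for $S'$. The cardinality condition $|S'|=s$ holds by construction, and since $S'\subseteq S$ the deviation point $y'\in\mathcal M\setminus Y$ is still available to $S'$.

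I do not expect any serious obstacle here; the only subtlety worth flagging is that some $\Delta_i$ may be negative (an agent who would be worse off at $y'$), which is precisely why one cannot naively drop arbitrary $|S|-s$ members and must instead drop those with the smallest $\Delta_i$. An equivalent phrasing is iterative deletion: at each step remove the agent of smallest $\Delta_i$ from the current set; since that value is at most the current average, the remaining surplus shrinks by at most a $1/|\text{current set}|$ fraction of itself and hence remains strictly positive all the way down to size $s$.
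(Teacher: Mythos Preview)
Your proof is correct and follows essentially the same greedy-shrinking idea as the paper: repeatedly drop a ``worst'' agent until the coalition has size exactly $\lceil \alpha n/k\rceil$, checking that the blocking inequality is preserved at each step. The only difference is cosmetic---you rank agents by the additive surplus $\Delta_i=d(i,Y)-\beta\,d(i,y')$ whereas the paper ranks by the ratio $d(i,Y)/d(i,y')$; your formulation is arguably cleaner since it sidesteps the $d(i,y')=0$ edge case.
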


\vspace{-3mm}\begin{proof}
    If $S$ and $y'\in\mathcal M$ satisfy $|S|>\lceil\frac{\alpha n}{k}\rceil$ and
    $\frac{\sum_{i\in S}d(i,Y)}{\sum_{i\in S}d(i,y')}>\beta$,
    then there must exist an agent $w\in S$ subject to $$\frac{d(w,Y)}{d(w,y')}\le \frac{\sum_{i\in S}d(i,Y)}{\sum_{i\in S}d(i,y')}.$$
    Let $S'=S\backslash\{w\}$. Then we have
    $$\beta< \frac{\sum_{i\in S}d(i,Y)}{\sum_{i\in S}d(i,y')}\le\frac{\sum_{i\in S'}d(i,Y)}{\sum_{i\in S'}d(i,y')}.$$
    Repeating this process of removing one agent until $|S'|=\lceil\frac{\alpha n}{k}\rceil$, we establish the proof.
\end{proof}

{When $k=1$ or $n-1$, the solution that minimizes the total distance to all agents is in the core.} 
When $k\ge \frac{n}{2}$ and the space is a connected graph $G(V,E)$ with $V=\cX=\cN=\cM$, by contrast with the result in \cite{michaproportionally}  where a proportional $k$-clustering always exists and can be computed efficiently, 
in our setting a core clustering is not guaranteed. We state a stronger result as follows.

\begin{proposition}\label{thm:non}
    When {$k=1$ or} $k\ge n-1$, {the core is always non-empty.} 
    When $\frac{n}{2}\le k\le n-2$, for any $0<\epsilon\le 1$, the existence of a $(1,2-\epsilon)$-core clustering is not guaranteed.
\end{proposition}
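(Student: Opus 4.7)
The plan is to handle the positive claim ($k=1$ or $k\ge n-1$) and the negative claim ($n/2\le k\le n-2$) separately, since they call for different constructions.

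For the positive side, I would treat three subcases. When $k=1$, I place the single center at some $y^\star\in V$ minimizing $\sum_{i\in\cN}d(i,y)$: the coalition-size threshold $n/k=n$ forces the only candidate blocking coalition to be $\cN$ itself, and by the choice of $y^\star$ no deviation can strictly improve the total distance. When $k\ge n$, I choose $Y$ to contain every vertex of $V$, so every agent has cost zero and no coalition can strictly improve. The delicate subcase is $k=n-1$: pick any vertex $v^\star$ and set $Y=V\setminus\{v^\star\}$, so that only $v^\star$ has positive distance $d(v^\star,Y)=\min_{u\ne v^\star}d(v^\star,u)$, and the threshold is $\lceil n/(n-1)\rceil=2$. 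The only deviation target in $\cM\setminus Y$ is $v^\star$ itself. If $v^\star\notin S$, the current cost of $S$ is $0$ whereas the post-deviation cost is strictly positive, so there is no improvement. If $v^\star\in S$, the improvement inequality reduces to $\sum_{i\in S\setminus\{v^\star\}}d(i,v^\star)<d(v^\star,Y)$, which fails because each term on the left is at least $\min_{u\ne v^\star}d(v^\star,u)=d(v^\star,Y)$ and the sum is nonempty.

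For the negative claim, I would exhibit the single universal instance $K_n$ with unit edge lengths and $\cN=V=\cM$. Throughout the range $n/2\le k\le n-2$ we have $\lceil n/k\rceil=2$, so coalitions of size $2$ are admissible. For any $k$-clustering $Y$, at least two vertices $u,v$ lie outside $Y$; the coalition $\{u,v\}$ deviating to $y'=u\in V\setminus Y$ has current total cost $d(u,Y)+d(v,Y)=1+1=2$ and post-deviation total cost $d(u,u)+d(v,u)=0+1=1$. Hence for any $\beta=2-\epsilon<2$ we have $\beta\cdot 1<2$, witnessing a blocking coalition. Since this holds for every choice of $Y$, the $(1,2-\epsilon)$-core is empty in this instance.

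The main obstacle is the subcase $k=n-1$, where one must carefully verify that a coalition containing the uncovered vertex $v^\star$ cannot strictly benefit from the sole available deviation back onto $v^\star$; the remaining parts reduce to identifying the right configurations (the $1$-median, all of $V$, and $K_n$) together with a short calculation.
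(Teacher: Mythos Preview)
Your treatment of the negative claim (the $K_n$ instance with unit distances and the size-$2$ coalition $\{u,v\}$ deviating to $u$) is correct and coincides with the paper's own argument. For the positive claim, however, you have tacitly restricted yourself to the graph setting $V=\cX=\cN=\cM$ with one agent per vertex: all three subcases rely on it (choosing $y^\star\in V$, taking $Y\supseteq V$, and setting $Y=V\setminus\{v^\star\}$). The proposition is meant in the general metric model, where $\cN$ may be a multiset and $\cM$ need not coincide with the agent locations. Your $k=n-1$ argument genuinely breaks there: it requires that $\cM\setminus Y$ be the single point $v^\star$ and that every agent other than $v^\star$ already sit at distance zero from $Y$, both of which fail once centers cannot be placed on agent positions or several agents are collocated.

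The paper's route avoids this by an exchange argument: take $Y^*\in\arg\min_{Y\in\cM^{k}}\sum_{i\in\cN}d(i,Y)$; if some pair $\{i,j\}$ could strictly improve by deviating to $y'\notin Y^*$, form $Y'$ from $y'$ together with one individually optimal center for each of the remaining $n-2$ agents, contradicting the optimality of $Y^*$. This uses no structural assumption on $\cM$ or $\cN$. Within the special setting $V=\cN=\cM$ your direct computation is valid and in fact more elementary---it shows that \emph{every} $Y=V\setminus\{v^\star\}$ is in the core, not only the social optimum---but to establish the proposition at the stated level of generality you need the exchange argument (or something equivalent) for the $k=n-1$ case.
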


\vspace{-4mm}\begin{proof}
    For the first argument, let
    \[
    Y^* = \arg\min_{Y\in \mathcal M^{k}}\sum_{i\in N}d(i,Y), \text{ for $k=1$ or $n-1$}.
    \]
    If $k=1$, by definition it is obvious that $Y^*$ is in the core as the ground agent set $N$ is the unique possible deviating coalition.
    For $k=n-1$, a deviating coalition should contain 2 agents.
    Suppose for contradiction that $i$ and $j$ can decrease their cost by deviating to $y'\notin Y^*$. Then we can construct a new clustering $Y'$: (1) $y'\in Y'$ and thus $d(i, Y^*) + d(j,Y^*) > d(i, Y') + d(j,Y')$;
    (2) each of the other $n-2$ centers is at the optimal position for one of the remaining $n-2$ agents in $N \setminus \{i,j\}$ and thus $d(l, Y')\le d(l, Y^*)$ for $ l\in N \setminus \{i,j\}$. 
    Therefore,
    \[
    \sum_{i\in N}d(i,Y^*) > \sum_{i\in N}d(i,Y'),
    \]
    which is a contradiction with the definition of $Y^*$.

    For the second argument, we consider the graph space induced by a complete graph $G=(V,E)$ with  $n$ vertices, {where at each vertex lies an agent, and $V=\cX=\cN=\cM$.}
    
    The distance between any two vertices is 1. When $\frac{n}{2}\le k\le n-2$, the minimum size of a possible blocking coalition is $\lceil\frac{n}{k}\rceil=2$. For every $k$-clustering $Y$, there must exist two agents without center at their locations. Then they form a blocking coalition (w.r.t. $(1,2-\epsilon)$-core) with a deviating center on one of them: their total distance to the deviating center is 1, while their total distance to $Y$ is 2. Hence, a $(1,2-\epsilon)$-core clustering  does not exist.
\end{proof}


\section{$(1, \beta)$-Core}
\label{sec:beta}

In this section, we study the relaxation on distance improvement,  
and show to what extent a $(1, \beta)$-core is non-empty.

\subsection{Line}\label{subsec:line1}

First, a $(1,o(\sqrt{n}))$-core clustering is not
guaranteed to exist.
\begin{theorem}\label{thm:beta:lb}
There is a line instance such that the $(1,o(\sqrt{n}))$-core is empty.
\end{theorem}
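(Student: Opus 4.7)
My plan is to exhibit an explicit family of line instances, one for each admissible value of $n$, and show that every $k$-clustering admits a blocking coalition whose improvement factor is $\Omega(\sqrt n)$. Concretely, I would fix a positive integer $k$, let $n=k(k+1)$, and place $k$ agents at each of the $k+1$ integer positions $\{0,1,\ldots,k\}$ on the real line, with $\mathcal M=\mathbb R$. Note that $k=\Theta(\sqrt n)$ and the minimum blocking coalition size is $\lceil n/k\rceil=k+1$, so a single cluster of $k$ agents is just one short of being a valid coalition.

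The argument would proceed in three steps. First, for any $k$-clustering $Y=\{y_1,\ldots,y_k\}$, since $Y$ partitions $\mathbb R$ into $k$ Voronoi cells while there are $k+1$ cluster positions, by pigeonhole some Voronoi cell --- say the one belonging to $y_j$ --- contains two distinct cluster positions $i_1<i_t$. Second, since both positions are closest to $y_j$, we have $d(i_1,Y)=|y_j-i_1|$ and $d(i_t,Y)=|y_j-i_t|$, and the triangle inequality yields
\[
|y_j-i_1|+|y_j-i_t|\ge i_t-i_1\ge 1,
\]
so some $i_*\in\{i_1,i_t\}$ satisfies $d(i_*,Y)\ge 1/2$.

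Third, I would form the blocking coalition $S$ as all $k$ agents located at position $i_*$ together with a single agent at an adjacent integer position $i_*\pm 1\in\{0,\ldots,k\}$, so that $|S|=k+1=\lceil n/k\rceil$. Deviating to $y'=i_*\in\mathbb R\setminus Y$ (which lies outside $Y$ because $d(i_*,Y)\ge 1/2>0$), the new total cost is $k\cdot 0+1\cdot 1=1$, while the current total cost is at least $k\cdot d(i_*,Y)\ge k/2$. Hence the improvement ratio is at least $k/2=\Omega(\sqrt n)$, so $Y$ cannot lie in the $(1,\beta)$-core for any $\beta<k/2$; since $Y$ was arbitrary, the $(1,o(\sqrt n))$-core is empty.

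The main obstacle is that the minimum coalition size $k+1$ is exactly one more than a single cluster's size $k$, so I must borrow one additional agent from an adjacent cluster. The key observation making the argument work is that this lone extra agent contributes only $1$ to the deviation cost, whereas the $k$ agents at $i_*$ already pay $\Omega(k)$ in the current clustering --- preserving the desired $\Omega(\sqrt n)$ ratio. Beyond this, the pigeonhole on Voronoi cells and the triangle-inequality bound $|y_j-i_1|+|y_j-i_t|\ge 1$ are both routine.
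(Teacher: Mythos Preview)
Your proof is correct and essentially the same as the paper's: the identical instance ($k$ agents at each of $k+1$ consecutive integers, $n=k(k+1)$), the same pigeonhole step to find an agent position at distance $\ge 1/2$ from every center, and the same coalition ($k$ agents there plus one neighbor) deviating to that position to achieve ratio $\ge k/2$. The only cosmetic difference is that you phrase the pigeonhole via Voronoi cells whereas the paper uses the disjoint half-intervals $(j-\tfrac12,j+\tfrac12)$; both yield the same conclusion.
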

\begin{proof}
Consider an instance building in the real line and a set of integer points $V=\{1,2,\ldots,k+1\}$ in this line, each of which accommodates $k$ agents. A $k$-clustering is required to serve all $n=k(k+1)$ agents by $k$ centers.

Suppose for contradiction that $Y$ is a $(1,o(\sqrt{n}))$-core clustering. Since there are $k+1$ agents' locations and k centers to be located, there must be a point $j\in V$ so that no center is located in the interval $(j-\frac12,j+\frac12)$, i.e., $(j-\frac12,j+\frac12)\cap Y=\emptyset$.
 Assume w.l.o.g. that $j\neq k+1$ by symmetry. Consider a group $S$ consisting of $k$ agents located at $j$ and one agent located at $j+1$. Its size is $|S|=k+1=\frac nk$, which entitles itself to choose a center. Because every agent at $j$ has a distance at least $\frac12$ from its location to solution $Y$, the total distance of this group is
 $\sum_{i\in S}d(i,Y)\ge\frac12\cdot k$.
 However, if they deviate to a new center $y'=j$, their total distance changes into
 $\sum_{i\in S}d(i,y')=1$.
Then we have
$\frac{\sum_{i\in S}d(i,Y)}{\sum_{i\in S}d(i,y')}\ge\frac k2>o(\sqrt{n})$,
which is a contradiction. Hence, the  $(1,o(\sqrt{n}))$-core is empty.

\end{proof}

\begin{algorithm}[H]
	\caption{\hspace{-2pt}~{\bf $ALG_l(\lambda)$ for Line.} 
	}
	\label{alg:line}
	\begin{algorithmic}[1]
	\REQUIRE 
	
	Agents $\mathbf x=\{x_1,\ldots,x_n\}$, number $ k \in \mathbb{N}^{+}$ 
	\ENSURE $k$-clustering $Y=\{y_1,\ldots,y_k\}$
	\STATE Rename the agents such that $x_1\le\cdots\le x_n$.
	\FOR{$i=1,2,\ldots,k-1$}
	\STATE Locate a center at $y_i=x_{\lambda i}$.
	\ENDFOR
	\STATE Let $r=\min\{\lambda k,n\}$, and locate a center at $y_k=x_{r}$.
	\end{algorithmic}
\end{algorithm}

Next, we present our algorithm $ALG_l$, as shown in Algorithm~\ref{alg:line}, which matches the lower-bound in Theorem \ref{thm:beta:lb}.
Roughly, $ALG_l$ has a tuneable parameter $\lambda$, and guarantees that the number of agents between any two contiguous centers to be no more than $\lambda-1$.
By selecting the optimal $\lambda$ depending on $k$, we can obtain the tight approximation.

\begin{theorem}\label{thm:up1}
For any line instance, a $(1,O(\sqrt n))$-core clustering can be found in linear time.
Specifically, $ALG_l(\lceil\frac{n}{k}\rceil)$ gives a $(1,\lceil\frac{n}{k}\rceil-1)$-core clustering if $k=\Omega(\sqrt n)$, and  $ALG_l(\lceil\frac{n}{k+1}\rceil)$ gives a $(1,k)$-core clustering if $k=o(\sqrt n)$.
\end{theorem}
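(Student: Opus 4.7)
My plan is to exploit one structural feature of Algorithm~\ref{alg:line}'s output and then analyze the two regimes of $k$ separately. After running the algorithm with parameter $\lambda$, the sorted agents $x_1\le\cdots\le x_n$ receive centers exactly at positions $x_\lambda,x_{2\lambda},\ldots,x_{(k-1)\lambda},x_r$, so each ``cell'' $[y_j,y_{j+1}]=[x_{j\lambda},x_{(j+1)\lambda}]$ contains exactly $\lambda+1$ agents: two at the endpoint-centers (contributing $0$ to $d(\cdot,Y)$) and $\lambda-1$ interior ones. By Lemma~\ref{lem:nor}, I only need to defeat blocking coalitions $S$ of the minimum size $\lceil n/k\rceil$.

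In Case A ($k=\Omega(\sqrt n)$, $\lambda=\lceil n/k\rceil$, $r=n$, target $\beta=\lambda-1$) the coalition size equals a cell's capacity minus one. The core of the argument is to reduce to the case $S\subseteq$ a single cell. When $y'\in(y_j,y_{j+1})$, any agent $s\in S$ outside $[y_j,y_{j+1}]$ has a $Y$-center sitting between $s$ and $y'$, so $d(s,Y)\le d(s,y')$ and its per-agent objective $d(s,Y)-(\lambda-1)d(s,y')$ is non-positive. A short swap step exchanges each such $s$ with an unused in-cell agent without decreasing the total blocking objective, reducing to the case where $S$ is a $\lambda$-subset of the $\lambda+1$ cell agents. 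The residual one-dimensional optimization---over the interior agent positions, the choice of the omitted cell-agent, and the location of $y'$---has supremum exactly $\lambda-1$, attained only in the limit where the $\lambda-1$ interior agents collapse near an endpoint and $y'$ chases them; the strict inequality $(\lambda-1)\sum d(i,y')<\sum d(i,Y)$ is therefore impossible. The left-tail case $y'<y_1$ is symmetric: the tail holds only $\lambda-1$ agents, so $S$ must contain a bridge agent in $[y_1,\infty)$, and the same separation argument caps the ratio.

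In Case B ($k=o(\sqrt n)$, $\lambda=\lceil n/(k+1)\rceil$, $r=\lambda k<n$, target $\beta=k$) the coalition size $\lceil n/k\rceil$ exceeds $\lambda+1$, so $S$ cannot fit in a single cell. The rightmost region $(y_k,\infty)$ now holds $n-\lambda k\le\lambda$ uncovered agents, and the worst blocking coalition mirrors the $(k+1)$-equal-clusters construction behind the tight lower bound: $S$ absorbs the whole uncovered tail (each such agent fully benefits from $y'$ placed in the tail) and fills the remaining $\lceil n/k\rceil-\lambda$ slots with helpers drawn from the closest covered cell, which contribute near-zero $d(i,Y)$ but large $d(i,y')$. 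Writing the tail and helper sizes as approximately $n/(k+1)$ and $n/(k(k+1))$ respectively yields ratio $\approx k$, which I will argue is an upper bound for all $S$ and $y'$ by a similar outside-the-configuration swap/dilution step: any non-tail, non-helper agent of $S$ can be replaced by an unused helper without decreasing the blocking objective.

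The hard part will be the swap/reduction step in each case: in Case A, ruling out exotic multi-cell coalitions via the ``out-of-cell agents have $d(\cdot,Y)/d(\cdot,y')\le 1<\beta$'' dilution bound; in Case B, confirming that helpers must come from exactly the cell adjacent to the tail, which again follows from the dilution principle since helpers sourced from deeper cells have strictly larger $d(i,y')$ without a matching increase in $d(i,Y)$. Both cases rest on the same elementary separation fact: any $Y$-center lying on the line between an agent and $y'$ guarantees $d(\text{agent},Y)\le d(\text{agent},y')$. Linear running time is immediate since Algorithm~\ref{alg:line} performs a single sweep after the initial sort.
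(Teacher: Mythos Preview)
Your plan shares the right structural observation with the paper---after $ALG_l(\lambda)$ each open interval $(y_j,y_{j+1})$ contains at most $\lambda-1$ agents, and any agent outside it has $d(i,Y)\le d(i,y')$---but the swap reduction is where it breaks. The claim that exchanging an out-of-cell agent $s$ for an unused in-cell agent ``does not decrease the total blocking objective'' is false in general: if $y'$ lies close to $y_j$ and the only unused in-cell agents sit near $y_{j+1}$, their per-agent objective $d(\cdot,Y)-(\lambda-1)d(\cdot,y')$ is far more negative than that of $s$. Concretely, take $\lambda=3$, a cell $[0,100]$ with interior agents at $1$ and $99$, $y'=2$, and $S=\{-1,0,1\}$; the only available swaps (to $99$ or $100$) send the objective from at most $-10$ down to below $-198$, and the ratio from $1/3$ down to $0.02$. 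So the original coalition can have a strictly larger ratio than every all-in-cell coalition of the same size, and the reduction step fails. The ``dilution'' variant you sketch for Case~B has the same defect in the opposite direction: discarding the out-of-cell agents removes exactly the denominator mass that caps the ratio, after which the ratio over $N_2$ alone can be unbounded. As written, Case~B only identifies the tight configuration; it is not an upper-bound argument.

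The paper avoids any swap. It keeps the full $S$, partitions it into $N_1=\{i\in S:x_i\le y_j\}$, $N_2=\{i\in S:y_j<x_i<y_{j+1}\}$, $N_3=\{i\in S:x_i\ge y_{j+1}\}$, and uses the \emph{exact} additivity on the line---$d(i,y')=d(i,y_j)+d(y_j,y')$ for $i\in N_1$, and the analogue with $y_{j+1}$ for $N_3$---rather than merely $d(i,Y)\le d(i,y')$. Two one-line bounds on $\sum_S d(i,Y)$ and $\sum_S d(i,y')$ then collapse (assuming w.l.o.g.\ $d(y_j,y')\le d(y_{j+1},y')$) to the purely combinatorial inequality
\[
r \;\le\; \frac{|N_2|}{|N_1\cup N_3|}.
\]
Both regimes follow immediately by substituting $|N_2|\le\lambda-1$ and $|N_1\cup N_3|=|S|-|N_2|$, with no case analysis of tails or of worst-case coalition shapes. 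This is the step your proposal is missing: the out-of-cell agents must stay in the denominator, and it is the additive decomposition of their distance to $y'$---not just the inequality $d(i,Y)\le d(i,y')$---that makes the counting work.
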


\vspace{-3mm}\begin{proof}
    Let $\mathbf x=\{x_1,\ldots,x_n\}$ be the locations of agents.  When $k=\Omega(\sqrt n)$ and $k=o(\sqrt n)$, we define $\lambda=\lceil\frac{n}{k}\rceil$ and $\lambda=\lceil\frac{n}{k+1}\rceil$ respectively, and implement $ALG_l(\lambda)$. 
    The output is a $k$-clustering $Y=\{y_1,\ldots,y_k\}$. Now we prove this solution is a $(1,\lceil\frac{n}{k}\rceil-1)$-core clustering when $k=\Omega(\sqrt n)$, and a $(1,k)$-core clustering when $k=o(\sqrt n)$. Therefore, it is always a $(1,O(\sqrt n))$-core clustering.

    Suppose for contradiction that there is a blocking coalition $S\subseteq \cN$ of agents with $|S|\ge \frac{n}{k}$ and a deviating center $y'\in \bR\backslash Y$ satisfying
    $r:=\frac{\sum_{i\in S}d(i,Y)}{\sum_{i\in S}d(i,y')}> \lceil\frac{n}{k}\rceil-1$ (resp. $r > k$) when $k=\Omega(\sqrt{n})$  (resp. $k=o(\sqrt{n})$). 
    Set two virtual points $y_0=-\infty$ and $y_{k+1}=+\infty$. Assume w.l.o.g. $y'\in (y_j,y_{j+1})$ for some $j=0,\ldots,k$ and $d(y_j,y')\le d(y_{j+1},y')$. Let $(N_1,N_2,N_3)$ be a partition of $S$ with $N_1=\{i\in S|x_i\le y_j\}$, $N_2=\{i\in S| y_j<x_i<y_{j+1}\}$, and $N_3=\{i\in S|x_i\ge y_{j+1}\}$. Note that the algorithm guarantees $|N_2|\le \lambda-1$.  Then we have
    \begin{align}
        \sum_{i\in S}d(i,Y) &\le \sum_{i\in N_1}d(i,y_j) +\sum_{i\in N_2}(d(i,y')+d(y',y_j)) \nonumber\\
        &+\sum_{i\in N_3}d(i,y_{j+1}),  \label{eq:11}
    \end{align}
    \begin{align}
        \sum_{i\in S}d(i,y') &= \sum_{i\in N_1}(d(i,y_j)+d(y_j,y'))+\sum_{i\in N_2}d(i,y') \nonumber\\
        &+\sum_{i\in N_3}(d(i,y_{j+1})+d(y_{j+1},y')).\label{eq:22}
    \end{align}

    Combining Equations (\ref{eq:11}) and (\ref{eq:22}), it follows that 

$$r\le\frac{\sum_{i\in N_2}d(y',y_j)}{\sum_{i\in N_1}d(y_j,y')+\sum_{i\in N_3}d(y_{j+1},y')}   \le\frac{|N_2|}{ |N_1 \cup N_3|},
$$

    where the second inequality is due to the assumption $d(y_j,y')\le d(y_{j+1},y')$.

    When $k=\Omega(\sqrt n)$ and $\lambda=\lceil\frac{n}{k}\rceil$, we have  $|N_2|\le \lceil\frac{n}{k}\rceil-1$, and $|N_1\cup N_3|=|S|-|N_2|\ge 1$. It indicates that $r\le \lceil\frac{n}{k}\rceil-1$ which is a contradiction. So $Y$ is a $(1,\lceil\frac{n}{k}\rceil-1)$-core clustering.

     When $k=o(\sqrt n)$ and $\lambda=\lceil\frac{n}{k+1}\rceil$, we have $|N_2|\le \lceil\frac{n}{k+1}\rceil-1$, and $|N_1\cup N_3|=|S|-|N_2|\ge \lceil\frac{n}{k}\rceil-\lceil\frac{n}{k+1}\rceil+1$. By a simple calculation we have $r\le k$ as a contradiction. So $Y$ is a $(1,k)$-core clustering.
\end{proof}


\subsection{Tree}

Both the lower- and upper-bound results for line space can be extended to trees. 
For the proportionality fairness, Micha and Shah \cite{michaproportionally} proposed an algorithm ``Proportionally Fair Clustering for Trees (PFCT)'' as follows, which always returns a proportional solution for trees. A rooted tree $(G,r)$ is obtained by rooting the tree at an arbitrary node $r$. Let $\textnormal{level}(x)$ denote the height of node $x$ relative to the root $r$ (with $\textnormal{level}(r)=1$), and $\textnormal{ST}(x)$ denote the subtree rooted at node $x$ (i.e. the set of nodes $v$ with $\textnormal{level}(v) \le \textnormal{level}(x)$ and the unique path from $v$ to $r$ passes by $x$). Let $|ST(x)|$ be the number of agents contained in the subtree $ST(x)$. PFCT traverses all the nodes from the highest level ones (i.e. the leaves), locates a center on the node whose subtree contains at least $\lceil \frac{n}{k} \rceil$ agents, and then deletes this subtree. At the end, agents are assigned to the closest center in the output. We adapt PFCT into the following $ALG_t(\lambda)$ with a tuneable parameter $\lambda$, {which controls the threshold value for locating a center. When $\lambda=\lceil \frac{n}{k} \rceil$, $ALG_t(\lambda)$ is equivalent to PFCT.}

\begin{algorithm}[H]
	\caption{\hspace{-2pt}{ \bf $ALG_t(\lambda)$ for Tree.}}
	\label{alg:tree}
	\begin{algorithmic}[1]
	\REQUIRE A tree $G=(V,E)$, $n$ agents and integer $k$
	\ENSURE $k$-clustering set $Y$
	\STATE Let $r$ the root of tree $G$ and $d$ be the height. 
	\STATE $Y\leftarrow \emptyset$
	\STATE $ G^d \leftarrow G$
	\FOR{$l=d$ to 1}
	\STATE $G^{l-1} \leftarrow G^l$
	\FOR{$\text { every } x \in V \text { with level }(x)=\ell$  and $|\operatorname{ST}(x)| \geq \lambda$}
	\IF{$|Y|<k$}
	\STATE $Y \leftarrow Y \cup\{x\}$ and $G^{\ell-1} \leftarrow G^{\ell-1} \setminus \mathrm{ST}(x)$
	\ENDIF
	\ENDFOR
	\ENDFOR
	\end{algorithmic}
\end{algorithm}

Note that it always has $|Y|\le k$, and if $|Y|<k$, we can build $k-|Y|$ more centers arbitrarily. When $\lambda=\lceil\frac{n}{k}\rceil$ or $\lceil\frac{n}{k+1}\rceil$, we  observe that, after removing all centers in $Y$ from the tree, the number of agents in every component 
is at most $\lambda-1$.
It can be observed that $ALG_t(\lambda)$ is an extension of $ALG_l(\lambda)$ in the tree, and we have the following theorem. 

\begin{theorem}\label{thm:999}
For any instance in the tree, we can find a $(1,O(\sqrt{n}))$-core clustering efficiently. In particular,   when $k=\Omega(\sqrt n)$, $ALG_t(\lceil\frac{n}{k}\rceil)$ returns a $(1,\lceil\frac{n}{k}\rceil-1)$-core clustering, and  when $k=o(\sqrt n)$,  $ALG_t(\lceil\frac{n}{k+1}\rceil)$ returns a $(1,k)$-core clustering.
Moreover, a $(1,o(\sqrt{n}))$-core clustering is not guaranteed to exist.
\end{theorem}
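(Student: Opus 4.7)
The plan is to prove the upper and lower bounds separately. For the upper bound, I would generalize the line-case argument of Theorem \ref{thm:up1} to trees, using $ALG_t(\lambda)$ in place of $ALG_l(\lambda)$; for the lower bound, essentially no new construction is needed, since a line is a path and hence a tree.

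The upper bound rests on the structural invariant of $ALG_t(\lambda)$ that is the direct tree analogue of the line property exploited in Theorem \ref{thm:up1}: after removing the chosen centers $Y$ from the tree, each connected component of the induced subgraph $G[V \setminus Y]$ contains at most $\lambda - 1$ agents. (This is forced by the bottom-up sweep, which places a center on any node whose pruned subtree reaches the threshold $\lambda$.) Suppose for contradiction that a blocking pair $(S, y')$ exists with $y' \in V \setminus Y$. Let $C$ denote the unique component of $G[V \setminus Y]$ containing $y'$, let $y^* \in \arg\min_{y \in Y} d(y', y)$, and partition $S = A \sqcup B$ with $A := S \cap C$ and $B := S \setminus C$.

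The key inequalities are the tree analogue of the three-way $(N_1, N_2, N_3)$ split used on the line, and they follow from the uniqueness of tree paths. For every $i \in B$, the unique path from $i$ to $y'$ must exit $C$, first meeting $Y$ at some $z_i$ with $d(z_i, y') \ge d(y^*, y')$, so $d(i, Y) \le d(i, z_i) = d(i, y') - d(z_i, y') \le d(i, y') - d(y^*, y')$. For every $i \in A$, triangle inequality routed through $y'$ yields $d(i, Y) \le d(i, y^*) \le d(i, y') + d(y^*, y')$. Summing over $S$ and using the easy lower bound $\sum_{i \in B} d(i, y') \ge |B|\, d(y^*, y')$ shows that the blocking ratio is at most $|A|/|B|$ whenever $|A| > |B|$ (and at most $1$ otherwise). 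Plugging in $|A| \le \lambda - 1$ together with $|S| \ge \lceil n/k \rceil$, hence $|B| \ge \lceil n/k \rceil - (\lambda - 1)$, reproduces exactly the arithmetic at the end of the proof of Theorem \ref{thm:up1} for the two choices $\lambda = \lceil n/k \rceil$ and $\lambda = \lceil n/(k+1) \rceil$, yielding the claimed $(1, \lceil n/k \rceil - 1)$- and $(1, k)$-core guarantees.

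For the lower bound, the instance used in Theorem \ref{thm:beta:lb} consists of $k+1$ integer points on the real line with $k$ agents at each, and the deviating center it exhibits is itself one of these integer points; embedding the instance into a path graph with unit-length edges (which is a tree) shows that no $(1, o(\sqrt{n}))$-core clustering can be guaranteed on trees. The main subtlety I anticipate is the tree analogue of the ``left/right of $y'$'' geometry: one must be careful that $y^*$ is the center closest to $y'$ so that every exit vertex $z_i$ satisfies $d(z_i, y') \ge d(y^*, y')$, and one must rule out the degenerate case $B = \emptyset$, which follows because $|S| \ge \lceil n/k \rceil > \lambda - 1 \ge |A|$ for both choices of $\lambda$.
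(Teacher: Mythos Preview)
Your proposal is correct and follows essentially the same approach as the paper. Both arguments partition $S$ into the agents inside the component of $G[V\setminus Y]$ containing $y'$ (your $A$, the paper's $N_2$) and those outside (your $B$, the paper's $N_1$), use the tree path structure to show $\sum_{i\in S} d(i,Y) \le \sum_{i\in S} d(i,y') + (|A|-|B|)\,d(y',Y)$ and $\sum_{i\in S} d(i,y') \ge |B|\,d(y',Y)$, conclude $r \le |A|/|B|$, and finish with the same arithmetic on the two choices of $\lambda$; the lower bound via the path-graph embedding of the Theorem~\ref{thm:beta:lb} instance is likewise identical. The only cosmetic difference is that you name the closest center $y^*$ and the exit vertices $z_i$ explicitly, whereas the paper works directly with $d(y',Y)$.
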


\vspace{-3mm}\begin{proof}
    Define $\lambda=\lceil\frac{n}{k}\rceil$ if $k=\Omega(\sqrt{n})$, and $\lambda=\lceil\frac{n}{k+1}\rceil$ otherwise. The output of the algorithm $ALG_t(\lambda)$ is location profile $Y=(y_1,\ldots,y_k)$.
    Suppose for contradiction that there is a blocking coalition $S\subseteq \cN$ of agents with $|S|\ge \lceil\frac{n}{k}\rceil$ and a deviating center $y'\in V\backslash Y$. Let $d(u,v)$ denote the distance between nodes  $u$ and $v$, i.e., the length of the unique path from $u$ to $v$.
    {Define a partition $(N_1,N_2)$ of group $S$ (as Figure \ref{fig:Partition} shows): $N_2$ consists of the agents in $S$ who are in the component containing $y'$, where the components are obtained by removing all centers in $Y$ from the tree; $N_1$ consists of the remaining group members.}
 Then we have
    \begin{equation}\label{eq:ss}
        \sum_{i\in S}d(i,Y) \le \sum_{i\in N_1}d(i,Y) +\sum_{i\in N_2}(d(i,y')+d(y',Y)),
    \end{equation}
    \begin{align}\label{eq:tt}
        \sum_{i\in S}d(i,y') &\ge \sum_{i\in N_1}(d(i,Y)+d(Y,y'))+\sum_{i\in N_2}d(i,y').
    \end{align}
    
    \begin{figure}[h]
    \centering
    \includegraphics[width=4.2cm]{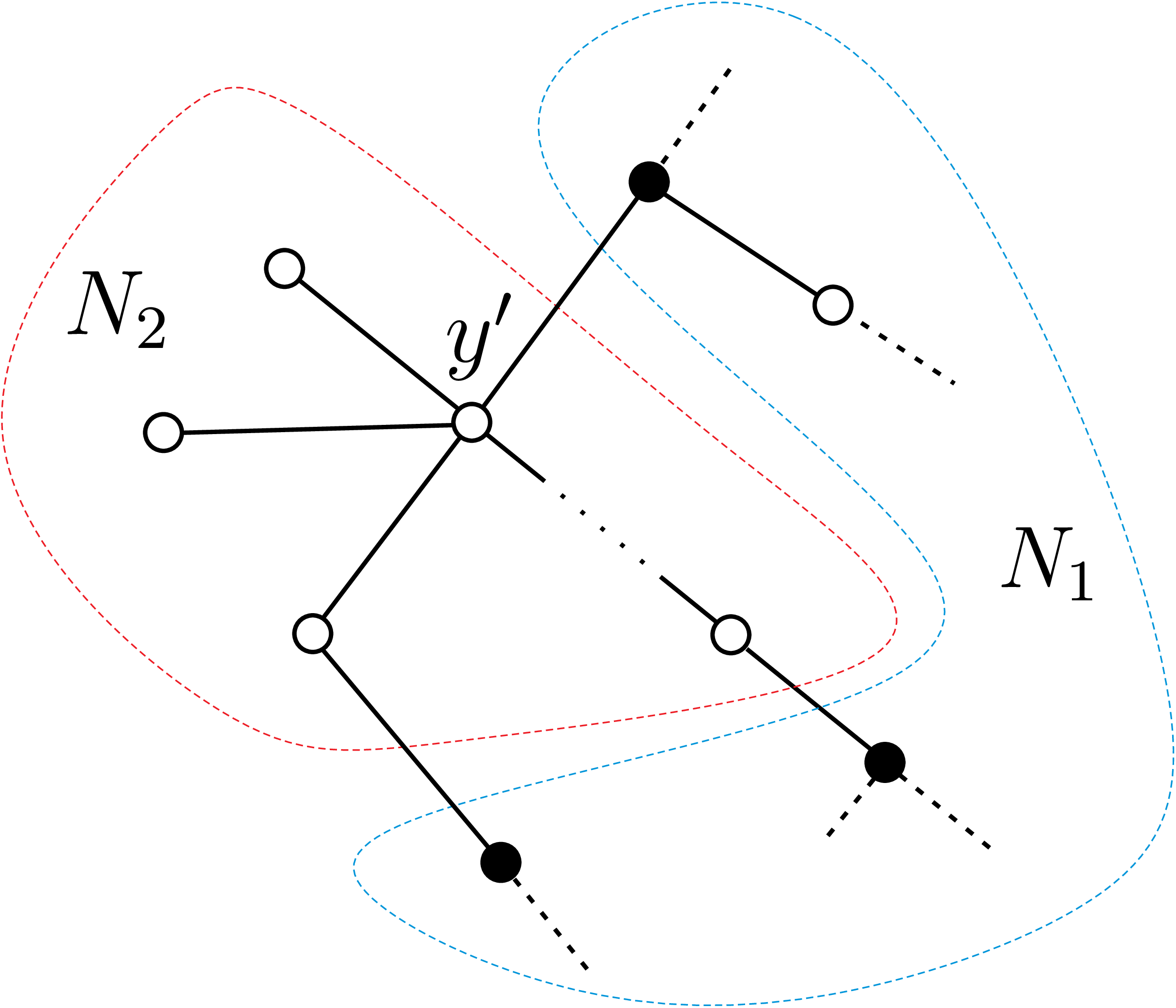}

    \caption{\label{fig:Partition}\small The partition $(N_1,N_2)$ of group $S$, where solid points indicate centers in $Y$.}
    \end{figure}

   Because $S$ is a blocking coalition, we have $r:=\frac{\sum_{i\in S}d(i,Y)}{\sum_{i\in S}d(i,y')}>1$. It follows from (\ref{eq:ss}) and (\ref{eq:tt}) that
\begin{equation}\label{eq:r}
    r\le\frac{\sum_{i\in N_2}d(y',Y)}{\sum_{i\in N_1}d(y',Y)}
        =\frac{|N_2|}{|N_1|}.
\end{equation}

   When $k=\Omega(\sqrt{n})$ and $\lambda=\lceil\frac{n}{k}\rceil$, the algorithm guarantees that $|N_2|\le \lceil\frac{n}{k}\rceil-1$, and thus $|N_1|=|S|-|N_2|\ge 1$, implying that $r\le \lceil\frac{n}{k}\rceil-1$ and $S$ cannot be a blocking coalition w.r.t. $(1,\lceil\frac{n}{k}\rceil-1)$-core.  Therefore, $Y$ is a $(1,\lceil\frac{n}{k}\rceil-1)$-core clustering.

   When $k=o(\sqrt{n})$ and $\lambda=\lceil\frac{n}{k+1}\rceil$, the algorithm guarantees that $|N_2|\le \lceil\frac{n}{k+1}\rceil-1$, and thus $|N_1|=|S|-|N_2|\ge \lceil\frac{n}{k}\rceil-\lceil\frac{n}{k+1}\rceil+1$. A simple computation shows that $r\le k$, indicating that $S$ cannot be a blocking coalition.  So $Y$ is a $(1,k)$-core clustering.
   
   For the lower bound, we note that the instance constructed in the line in the proof of Theorem \ref{thm:beta:lb} can be adapted to the graph space induced by a path graph, where $V=\{1,2,\ldots,k\}$ is the vertex set of the path, and each vertex accommodates $k+1$ agents. Using the similar analysis, it can be shown that the $(1,o(\sqrt{n}))$-core for this path is empty. 
\end{proof}



\subsection{General Metric Space}\label{sec:32}
For the general metric space, we show that a simple greedy algorithm, $ALG_g$, as described in Algorithm \ref{alg:social_optimal}, has the desired theoretical guarantee. 
For each $x\in \mathcal{X}$, we use $B(x, \delta) = \{i \in \cN \mid d(i,x) \le \delta\}$ to denote the set of agents in the ball with center $x$ and radius $\delta$. $ALG_g$ continuously grows the radius of each ball centered on each possible center, with the same speed. When a ball is large enough to contain at least 
$\lceil \frac{n}{k} \rceil$ points, we open a cluster center at that ball center. Actually, the underlying idea of expanding balls of points has been utilized for $k$-median problems \cite{jain1999primal}, and 
\cite{chen2019proportionally} and \cite{michaproportionally} have proved that $ALG_g$ also has good theoretical performance regarding proportional fairness. {We note that $ALG_g$ may output less than $k$ centers, and if so, the remaining centers can be selected arbitrarily. } In Section \ref{sec:experiment}, we will show how to refine $ALG_g$ beyond the worst case.

\begin{algorithm}[H]
	\caption{\hspace{-2pt}{ \bf $ALG_g$ for General Metric Space. }}
	\label{alg:social_optimal}
	\begin{algorithmic}[1]
	\REQUIRE Metric space $(\mathcal{X},d)$, agents $\cN \subseteq \cX$, possible locations $\cM \subseteq \cX$, and $k \in \mathbb{N}^{+}$
	\ENSURE $k$-clustering $Y$
	\STATE $\delta \leftarrow 0$ ; $Y \leftarrow \emptyset$ ; $N \leftarrow \mathcal{N}$.
	\WHILE{$N\neq \emptyset$}
	\STATE Smoothly increase $\delta$
	\WHILE{$\exists x \in Y \text { s.t. }|B(x, \delta) \cap N| \geq 1$}
	\STATE $N \leftarrow N \backslash B(x, \delta)$
	\ENDWHILE
	\WHILE{$\exists x \in\mathcal{M}\backslash Y$ s.t. {$|B(x, \delta) \cap N|\ge \lceil \frac{n}{k} \rceil$}} 
	\STATE $Y \leftarrow Y \cup\{x\}$ and $N \leftarrow N \backslash B(x, \delta)$
	\ENDWHILE
	\ENDWHILE
	\end{algorithmic}
\end{algorithm}

\begin{theorem}\label{thm::16}
	For any instance in metric space $(\mathcal{X}, d )$, algorithm $ALG_g$ always outputs a $(1,2\lceil\frac{n}{k}\rceil + 1)$-core clustering.
\end{theorem}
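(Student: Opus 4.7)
The plan is to assume toward contradiction that some coalition $S$ with $|S|\ge \lceil n/k\rceil$ and some deviating center $y'\in\mathcal{M}\setminus Y$ witnesses a $(1,2\lceil n/k\rceil+1)$-violation, and then derive a contradictory upper bound on $\sum_{i\in S}d(i,Y)$ by anchoring a triangle inequality at a point $w$ that was removed from $N$ during the execution of $ALG_g$. By Lemma~\ref{lem:nor} applied with $\alpha=1$ and $\beta=2\lceil n/k\rceil+1$, I may assume without loss of generality that $|S|=\lceil n/k\rceil$.

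Set $D^*=\max_{i\in S}d(i,y')$, so that $S\subseteq B(y',D^*)\cap\mathcal{N}$ and hence $|B(y',D^*)\cap\mathcal{N}|\ge\lceil n/k\rceil$. I then examine the state of $ALG_g$ after processing radius $\delta=D^*$ (if the algorithm halts at some $\delta_{\max}<D^*$, then $d(i,Y)\le\delta_{\max}<D^*$ for every $i\in\mathcal{N}$ and the desired bound already follows from $\sum_{i\in S}d(i,Y)<|S|D^*\le|S|\sum_{i\in S}d(i,y')$). Since $y'\notin Y$, the termination condition of the second inner while loop forces $|B(y',D^*)\cap N|<\lceil n/k\rceil$ at this moment. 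Comparing with $|B(y',D^*)\cap\mathcal{N}|\ge\lceil n/k\rceil$ shows that at least one agent $w\in B(y',D^*)$ has already been removed from $N$; an agent is removed only when some center $x$ currently in $Y$ has $d(w,x)\le\delta_w\le D^*$, and since $ALG_g$ only ever adds centers, $x$ remains in the final $Y$. Thus there exists $y\in Y$ with $d(w,y)\le D^*$.

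Applying the triangle inequality, for every $i\in S$,
\[
d(i,Y)\le d(i,y)\le d(i,y')+d(y',w)+d(w,y)\le d(i,y')+2D^*.
\]
Summing over $S$ and using $D^*\le\sum_{i\in S}d(i,y')$ (the maximum is at most the sum of non-negative terms) together with $|S|=\lceil n/k\rceil$,
\[
\sum_{i\in S}d(i,Y)\le \sum_{i\in S}d(i,y')+2D^*|S|\le (2|S|+1)\sum_{i\in S}d(i,y')=(2\lceil n/k\rceil+1)\sum_{i\in S}d(i,y'),
\]
which contradicts the assumed blocking condition, finishing the proof.

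The main obstacle is really just a timing technicality: one must argue carefully that when $y'$ is never opened, the loop invariant of $ALG_g$ does give $|B(y',\delta)\cap N|<\lceil n/k\rceil$ at the relevant instant, and that the center $x$ that caused $w$'s removal at an earlier radius is still present in the final $Y$, so $d(w,Y)\le D^*$ in the output clustering. Once this bookkeeping is in place, the whole argument reduces to a single triangle inequality plus the elementary bound $D^*\le\sum_{i\in S}d(i,y')$.
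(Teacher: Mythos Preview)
Your proof is correct and follows essentially the same approach as the paper: both arguments establish that $y'$ lies within distance $2D^*$ of some center in $Y$ (where $D^*=\max_{i\in S}d(i,y')$), and then finish with the triangle inequality and the bound $D^*\le\sum_{i\in S}d(i,y')$. The only cosmetic difference is that the paper packages this via a case split on whether $D^*\ge\tfrac12 d(y^*,y')$ for the nearest center $y^*$, deriving a contradiction in the complementary case, whereas you argue directly through a removed agent $w$; the two arguments are contrapositives of one another.
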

\vspace{-3mm}\begin{proof}
	Let $Y$ be the k-clustering returned by $ALG_g$. By Lemma \ref{lem:nor}, it suffices to prove that, for any set of agents $S\subseteq \mathcal{N}$ with $|S| = \lceil\frac{n}{k} \rceil$, $\sum_{i \in S} d(i,Y) \leq (2\lceil\frac{n}{k} \rceil+1)\sum_{i \in S} d (i ,y' )$ holds for any point $y'\in \mathcal M \setminus Y$.
	Let $y^{*}\in Y$ be the center closest to $y'$, i.e., $y ^ {*} \in \arg\min_{y \in Y} d (y, y' )$. Since for any $i \in S$, $d(i, Y ) \leq d(i, y^*)$, we have
	\begin{equation}\label{eq:1}
		\sum _ { i \in S} d(i,Y) \leq \sum_{ i \in S} d(i ,y ^*) \leq \sum_{ i \in S} d (i ,y') + \sum_{ i \in S} d (y^* ,y').
	\end{equation}
	We discuss two cases.
	
	\emph{Case 1:} $\max_{i\in S} d (i,y') \geq \frac{1}{2} d (y^*, y' )$. If so, then we have
	\[
	\sum_{ i \in S} d (i,y') \geq \max _ { i \in S} d (i,y') \geq \frac{1}{2}d(y^*, y').
	\]
	 Combining with (\ref{eq:1}), it follows that

    \begin{align*}
        \frac{\sum_{i\in S}d(i,Y)}{\sum_{i\in S}d(i, y' )} &\le 1+\frac{\sum _ { i \in S} d (y^*, y' )}{\sum_{i\in S}d(i,y')} \\
        & \leq 1+\frac{\sum_{i\in S}d(y^*,y')}{d(y^*, y')/2}=1+ 2\lceil\frac{n}{k} \rceil.
    \end{align*}
	
	\emph{Case 2:} $\max_{i\in S}d(i,y')<\frac{1}{2}d(y^*,y')$.  Let {$\delta^* = \max _ { i \in S} d (i,y')$}, and accordingly, {$S\subseteq B(y',\delta^*)$}.  If there exists a center $y'' \in Y$ such that {$B(y'' , \delta^*) \cap B(y' ,\delta^*) \neq \emptyset$}, then we have {$d(y'' , y' ) \leq 2\delta^* <d(y^*,y')$}, which however, contradicts to the definition of $y^*$. So {when the algorithm operating radius $\delta$ as $\delta^*$}, all balls with center in $Y$ have an empty intersection with ball {$B(y', \delta^*)$}, that is, {$B(y', \delta^*) \cap B(y, \delta^*) =\emptyset, \forall y \in Y$}. Since $|S| = \lceil \frac{n}{k} \rceil$ and {$S\subseteq B(y',\delta^*)$}, point $y'$ must be selected as a center by the algorithm, contradicting to $y' \notin Y$. Therefore, this case would never occur. 
\end{proof}


When the number $k$ of cluster centers is large, we are able to get better approximations.  Recall Proposition \ref{thm:non} that when $k\ge n-1$, the core is always non-empty.
We complement this result by the following theorem.

\begin{theorem}\label{thm:prim}
    For any metric space with $\frac{n}{2}\le k\le n-2$, there is an algorithm that computes a $(1,2)$-core clustering in $O(n^2\log n)$ time.
\end{theorem}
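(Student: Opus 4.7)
The plan is to exploit the fact that $\lceil n/k\rceil=2$ in this regime, so by Lemma~\ref{lem:nor} it suffices to rule out blocking coalitions of size exactly~$2$. Equivalently, we want a $k$-clustering $Y$ satisfying
\[
    d(i,Y)+d(j,Y)\;\le\;2\,c(i,j),\qquad \forall\{i,j\}\subseteq\cN,
\]
where $c(i,j):=\min_{y\in\cM}(d(i,y)+d(j,y))$, realized at some witness $y^{*}_{ij}\in\cM$. Two useful facts are that $c$ inherits the triangle inequality from $d$, and that $c(i,j)\ge\max\{d^{*}(i),d^{*}(j)\}$ where $d^{*}(i):=\min_{y\in\cM}d(i,y)$.

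The algorithm I would propose is a greedy pair-matching. After precomputing each $c(i,j)$ and a witness $y^{*}_{ij}$, sort the $\binom{n}{2}$ pairs in non-decreasing order of $c$, and scan them in that order while maintaining a matching $M$ on $\cN$: insert $\{i,j\}$ whenever both endpoints are still free, and halt as soon as $|M|=n-k$. Since $n-k\le n/2$, there are always enough agents to realize the matching. The output $Y$ consists of $y^{*}_{ij}$ for every matched pair plus the closest $\cM$-point for every leftover agent, padded with arbitrary extras if $|Y|<k$. The sort dominates the running time, giving $O(n^{2}\log n)$.

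Correctness would be established by a case analysis on a pair $\{i,j\}$. If $\{i,j\}\in M$, the witness $y^{*}_{ij}$ already realizes $c(i,j)$, so $d(i,Y)+d(j,Y)\le c(i,j)<2c(i,j)$. If $\{i,j\}\notin M$, then by the greedy rule at least one endpoint, say $i$, was matched the moment the pair was processed, so the sorted order yields $c(i,\mathrm{partner}(i))\le c(i,j)$; this bounds $d(i,Y)$ by $c(i,j)$. A symmetric bound for $j$ uses either the analogous partner inequality or the singleton slack $d^{*}(j)\le c(i,j)$. Summing the two contributions yields the target inequality.

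The main obstacle is the asymmetric subcase in which an endpoint of $\{i,j\}$ is still free at processing time but is later paired (inside $M$) with a partner $\ell$ whose pair-cost exceeds $c(i,j)$. To control this, within each matched pair the algorithm should open $y^{*}_{ij}$ so as to push the ``more-constrained'' endpoint---the one with smaller $\tau(\cdot):=\min_{\ell\ne\cdot}c(\cdot,\ell)$---as close as possible to the opened center, loading the pair's slack onto the other endpoint. Combined with the triangle inequality $c(i,\ell)\le c(i,j)+c(j,\ell)$, this recovers the $2c(i,j)$ bound in every sub-case and completes the verification of $(1,2)$-core.
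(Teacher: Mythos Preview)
Your approach is substantially different from the paper's. The paper builds the complete graph on the $n$ agent locations, computes a minimum spanning tree via Prim's algorithm, and opens centers on a size-$k$ vertex cover of that tree; correctness then exploits the order in which Prim inserts vertices to show that for any two uncovered agents $u,v$, each has an MST-neighbour (hence a center) at distance at most $d(u,v)$, which immediately gives the factor~$2$.

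Your greedy pair-matching is a natural alternative and you correctly isolate the dangerous subcase, but the proposed $\tau$-tiebreak does not close it. Observe first that for the earliest matched pair $\{p,q\}$ one always has $\tau(p)=\tau(q)=c(p,q)$, so the rule gives no guidance precisely where the choice matters most. Concretely, take $\cM=\cN=\{a,b,c,d,e,f\}$ with $k=3$ (so $n-k=3$), distances
\[
d(a,b)=0.9,\quad d(a,c)=1.1,\quad d(b,d)=1.2,\quad d(a,d)=d(b,c)=d(c,d)=2,\quad d(e,f)=1,
\]
and $d(x,e)=d(x,f)=50$ for $x\in\{a,b,c,d\}$; here $c(i,j)=d(i,j)$ throughout. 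The greedy matching is $M=\{\{a,b\},\{e,f\},\{c,d\}\}$. For $\{c,d\}$ your rule selects the center at $c$ since $\tau(c)=1.1<1.2=\tau(d)$; for $\{a,b\}$ the rule ties, and if the tiebreak lands on $a$ the output is $Y=\{a,c,e\}$. Then $d(b,Y)+d(d,Y)=0.9+2=2.9>2.4=2\,c(b,d)$, so $\{b,d\}$ is a blocking coalition against $Y$ for the $(1,2)$-core. The closing appeal to the triangle inequality for $c$ does not rescue this: what is actually needed is $d(j,Y)\le c(i,j)$ whenever $j$'s matched pair was processed after $\{i,j\}$, but the witness for that later pair may sit entirely on the partner's side, leaving $d(j,Y)$ as large as $c(j,\mathrm{partner}(j))>c(i,j)$. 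Nothing in the sketch bounds the sum below $3\,c(i,j)$ in this situation.
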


\vspace{-3mm}\begin{proof}
    Construct an $n$-vertex complete graph by setting the vertices to be the locations of $n$ agents, and the weight of each edge to be the distance of the two endpoints. Consider the following algorithm: run Prim's algorithm to find a minimum spanning tree (MST) over the complete graph, and then find a vertex cover of size $k$ for the MST, which can be done efficiently by selecting every second vertex. The algorithm locates $k$ center on the locations in this vertex cover. 
    Note that every two adjacent agents in the MST have at least one center located on themselves. Now we show this solution (denoted by $Y$) is a $(1,2)$-core clustering. We refer to agents and vertices interchangeably.    The time complexity comes from Prim's algorithm.

    First, by Lemma \ref{lem:nor}, we only need to consider groups consisting of $2$ agents, denoted by $\{u,v\}$. Second, if one group member has a center on his/her location, then it cannot be a blocking coalition, because the group's total distance to any new center is at least $d(u,v)$ by the metric, while the total distance to $Y$ is at most $d(u,v)$. So we only need to consider group $\{u,v\}$ without center on $u$'s nor $v$'s location. By the definition of vertex cover, edge $(u,v)$ is not in the MST. Recall that Prim's algorithm operates by building the tree one vertex at a time, from an arbitrary starting vertex (say $r$), at each step adding the cheapest possible connection from the tree to another vertex. Assume $u$ is added earlier than $v$, and the vertex set at the time of adding $u$ is $W$. Assume $v$ is added via an edge $(y,v)$. By Prim's algorithm, it must be $d(u,v)\ge d(y,v)$, otherwise $v$ is added by edge $(u,v)$, a contradiction. We discuss the following two cases. 

    (a) If the unique $r$-$v$ path in the MST passes through $u$, denote by $(r,\ldots,u,z,\ldots,v)$. It is easy to see that $d(u,z)\le d(u,v)$, otherwise edge $(u,v)$ is added into the MST. Thus we have $2d(u,v)\ge d(y,v)+d(u,z)$. Note that both agents $y$ and $z$ have facilities on their locations. The total distance of group $\{u,v\}$ to $Y$ is at most $d(y,v)+d(u,z)$, and that to any deviating center is at least $d(u,v)$. It indicates that $\{u,v\}$  cannot form a blocking coalition w.r.t. $\beta=2$. Hence, $Y$ is a $(1,2)$-core clustering.

(b) If the unique $r$-$v$ path in the MST does not pass through $u$, denote by $(r,\ldots,w,z,\ldots,v)$, and assume $w\in W$, $z\notin W$. It is easy to see that $d(u,W)\le d(z,W)$, otherwise $z$ is added into $W$, a contradiction. {Because $z$ is added earlier than $v$, we have $d(z,W)\le d(w,z)\le d(u,v)$.} 
Thus we have $2d(u,v)\ge d(y,v)+d(z,W)\ge d(y,v)+d(u,W)$. Because both the agent $y$ and the agent incident to $u$ in $W$ have centers on their locations, the total distance of group $\{u,v\}$ to $Y$ is at most $d(y,v)+d(u,W)$, while that to any deviating center is at least $d(u,v)$. So $\{u,v\}$  cannot form a blocking coalition w.r.t. $\beta=2$, and $Y$ is a $(1,2)$-core clustering.
\end{proof}

The $(1,2)$-core clustering provided above is best possible, because by Proposition \ref{thm:non}, a $(1,2-\epsilon)$-core clustering may not exist. Finally, we end this section by proving that deciding the existence of a $(1,2-\epsilon)$-core clustering is hard.

\begin{theorem}\label{thm:np}
For any $0<\epsilon\le 1$, the problem of determining the existence of a $(1,2-\epsilon)$-core clustering is NP-complete, {even if $k\ge \frac n2$ and it is in a graph space induced by $G=(V,E)$ with $|V|=n$.}
\end{theorem}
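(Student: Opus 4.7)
The argument has two parts: membership in NP and NP-hardness via a reduction from VERTEX COVER.

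For membership, given a candidate $k$-clustering $Y$, the assumption $k\ge n/2$ yields $\lceil n/k\rceil\le 2$, so by Lemma \ref{lem:nor} it suffices to rule out blocking coalitions of size $2$. Verification then reduces to checking, for each of the $O(n^2)$ pairs $\{u,v\}\subseteq\cN$ and each of the $O(n)$ candidate deviation centers $y'\in\cM\setminus Y$, the inequality $(2-\epsilon)\bigl(d(u,y')+d(v,y')\bigr)\ge d(u,Y)+d(v,Y)$. This runs in polynomial time.

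The hardness argument hinges on the following structural equivalence, which I prove first: in the graph space induced by a connected unit-edge graph $G=(V,E)$ with $\cN=\cM=V$ and $k\ge n/2$, a $k$-clustering $Y$ lies in the $(1,2-\epsilon)$-core if and only if $Y$ (viewed as a set) is a vertex cover of $G$. Necessity is immediate: an edge $\{u,v\}\in E$ with $u,v\notin Y$ gives a blocking coalition $\{u,v\}$ deviating to $y'=u$, with current cost at least $2$ and deviation cost $1$, hence ratio $2>2-\epsilon$. For sufficiency, whenever $Y$ is a vertex cover we have $d(w,Y)\le 1$ for every $w\in V$, so $d(u,Y)+d(v,Y)\le 2$ for any pair; a case analysis on $y'$ closes the inequality. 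If $y'\notin\{u,v\}$, both $d(u,y')$ and $d(v,y')$ are at least $1$, and the right-hand side is $\ge 2(2-\epsilon)\ge 2$. If $y'=u$ (the case $y'=v$ is symmetric), then $u\notin Y$ combined with $Y$ being a vertex cover forces either $v\in Y$ (the left-hand side is $1$ and the right-hand side is $(2-\epsilon)d(u,v)\ge 2-\epsilon\ge 1$) or $\{u,v\}\notin E$ (so $d(u,v)\ge 2$ and the right-hand side is $\ge 4-2\epsilon\ge 2$). Both sub-cases use $\epsilon\le 1$.

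With the equivalence in hand, I reduce from an arbitrary VERTEX COVER instance $(H,k_H)$ with $|V(H)|=n_H$. If $k_H\ge n_H/2$, take $G=H$ (unit edges) and $k=k_H$. Otherwise, set $t=n_H-2k_H>0$ and attach $t$ disjoint triangle gadgets $T_i=\{a_i,b_i,c_i\}$ (with all three internal edges) to a fixed vertex $v_0\in V(H)$ via bridge edges $(v_0,a_i)$. The resulting $G$ is connected with $n=n_H+3t$ vertices, and I set $k=k_H+2t$, satisfying $k=n/2$. Each triangle requires at least two of $\{a_i,b_i,c_i\}$ in any vertex cover of $G$, and taking $\{a_i,b_i\}$ also covers the bridge; conversely, any vertex cover of $H$ extends to one of $G$ by adding $\{a_i,b_i\}$ per gadget. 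Hence $G$ has a vertex cover of size $\le k$ iff $H$ has one of size $\le k_H$. Finally, any vertex cover of size $\le k$ can be padded with arbitrary extra vertices to a $k$-clustering while remaining a vertex cover, and hence by the structural equivalence remaining in the $(1,2-\epsilon)$-core. This yields a polynomial reduction.

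The main obstacle is the sufficiency direction of the structural equivalence, where the deviation center may coincide with a coalition member and the inequality barely holds; closing it requires the combined use of $\epsilon\le 1$ and the fact that non-adjacent vertices in the unit-edge graph are at distance $\ge 2$. Once this lemma is established, the NP-membership argument and the triangle-padding construction are essentially mechanical.
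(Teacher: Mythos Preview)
Your proof is correct and follows the same core strategy as the paper: reduce from \textsc{Vertex Cover}, use triangle gadgets to absorb surplus centers so that $k\ge n/2$, and show that a $(1,2-\epsilon)$-core clustering exists iff the centers form a vertex cover. The differences are organizational rather than conceptual, but worth noting.

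You isolate the key structural fact as a standalone lemma---in a connected unit-weight graph with $\cN=\cM=V$ and $k\ge n/2$, a $k$-clustering is in the $(1,2-\epsilon)$-core iff it is a vertex cover---and prove it in full generality before touching the reduction. The paper instead argues this equivalence only for its specific constructed instance. Your construction is also tidier: you attach only $t=n_H-2k_H$ triangles (just enough to reach $k=n/2$) via unit-weight bridges to a single anchor vertex, keeping the graph genuinely connected with uniform edge weights; the paper always attaches $n'$ triangles and connects the components with ``infinite-weight'' edges, which is a convenient device but slightly outside the stated graph-space model. Finally, you explicitly verify NP membership (polynomial-time check over all pairs and all candidate $y'$), which the paper leaves implicit. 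One small point to make explicit when you write it up: in the case $k_H\ge n_H/2$ where you take $G=H$ directly, your structural lemma needs $H$ connected, so you should invoke that \textsc{Vertex Cover} remains NP-hard on connected graphs (or simply always apply the padding construction with $t=\max\{0,n_H-2k_H\}$).
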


\vspace{-3mm}\begin{proof}
    We give a reduction from the minimum vertex cover problem (MVC). Let  $(G'=(V',E'),k')$ with $|V'|=n'$ be an arbitrary instance of the MVC, which asks whether a vertex cover of $G'$ with size $k'$ exists. We construct an instance $I$ of our clustering problem with $n=4n'$ agents and $k=2n'+k'$ centers as follows. In instance $I$, {the space is induced by a graph $G=(V,E)$ with} $n'+1$ connected components, one being graph $G'$ and other $n'$ components being triangles $K_3$. We assume w.l.o.g. that every two components are connected by an edge with infinite weight, and all other edges have unit weight. Define the distance between any two vertices to be the length of the shortest path between them. At each vertex of $G$ lies an agent, so there are $4n'$ agents in total. We show that the answer of $(G',k')$ is ``yes" if and only if $I$ has a $(1,2-\epsilon)$-core clustering. By Lemma \ref{lem:nor}, we only need to consider a group of agents of size $\lceil\frac{n}{k}\rceil=2$ as a possible blocking coalition. 

     If $G'$ has a vertex cover $W\subseteq V'$ of size $k'$, then we have a $k$-clustering of $G$ (denoted by $Y\subseteq V$) which locates $k'$ centers on vertices in $W$, and locates two centers on vertices of each triangle. Note that no possible blocking coalition can have two agents in different components.  Clearly, any two agents in a triangle cannot form a blocking coalition, as their distance to $Y$ is at most 1.  Thus it suffices to consider a group $S\subseteq V'$ consisting of two agents in $G'$. If no member of $S$ is in the vertex cover $W$, then there is no edge between the two members, and the total distance of $S$ to any new center is at least $2$. Since the total distance to $Y$ is $2$, group $S$ cannot form a blocking coalition.  If there is some member in $W$, then the total distance to $Y$ is at most $1$, implying that no deviation can occur. Hence, there is no blocking coalition, indicating that our solution is a core clustering (and thus a $(1,2-\epsilon)$-core clustering).

     If $I$ has a $(1,2-\epsilon)$-core clustering $Y$, it must locate at least two centers on each triangle (otherwise two agents in a triangle form a blocking coalition), and thus locate at most $k'$ centers on $G'$. For any two adjacent agents $u,v\in V'$ in $G'$, if neither of them has a center located on themselves, they can deviate to a new center $u$ such that $$d(u,u)+d(v,u)=1<\frac{2}{(2-\epsilon)}\le \frac{d(u,Y)+d(v,Y)}{2-\epsilon},$$
     implying that they can form a blocking coalition. Therefore, there is at least one center between any two adjacent agents in $G'$, and thus
      the center locations in $G'$ is a vertex cover.
\end{proof}



\section{$(\alpha, 1)$-Core}\label{sec:4}


Next, we study to what extent core fairness can be achieved when only the requirement of blocking coalition's size is relaxed, i.e., finding the smallest $\alpha$ such that $(\alpha, 1)$-core is non-empty.


\subsection{Line}\label{subsec:line2}

\begin{theorem}
\label{thm:alpha:line:lb}
In line space, for any $\epsilon > 0$, there exists an instance such that the $(2-\epsilon,1)$-core is empty. 
\end{theorem}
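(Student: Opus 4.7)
The plan is to adapt the $(k+1)$-clump construction used in the $(1,\beta)$ lower bound (Theorem~\ref{thm:beta:lb}), but to exhibit an \emph{asymmetric} blocking coalition that just fits under the size threshold $(2-\epsilon)n/k$. I would place $m$ agents at each of the integer positions $0,1,2,\ldots,k$ on the real line (so $n=m(k+1)$), with $k$ and $m$ chosen sufficiently large in terms of $\epsilon$; for instance $k=\lceil 3/\epsilon\rceil$ and $m=k$ will do. The claim to establish is that \emph{every} $k$-clustering $Y$ for this instance fails to be in the $(2-\epsilon,1)$-core.

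The key combinatorial step is to show that for any $k$-clustering $Y$, some adjacent pair $(j,j+1)\subseteq\{0,1,\ldots,k\}$ satisfies $d(j,Y)+d(j+1,Y)\ge 1$. To prove this, sort $Y=\{y_1\le\cdots\le y_k\}$ and consider the map sending each position $j\in\{0,\ldots,k\}$ to the index of its nearest center; because both positions and centers lie on a line, this map can be taken non-decreasing, and since its domain has size $k+1$ while its codomain has size $k$, by pigeonhole there exist adjacent positions $j,j+1$ that share a common nearest center $y^{*}\in Y$. The triangle inequality then gives $d(j,Y)+d(j+1,Y)=|y^{*}-j|+|y^{*}-(j+1)|\ge 1$. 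WLOG $d(j,Y)\ge d(j+1,Y)$, which forces $d(j,Y)\ge 1/2>0$; in particular $j\notin Y$, so $y':=j$ is a legal deviation point.

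With this pair in hand, I would take $S$ to consist of all $m$ agents at position $j$ together with $m-1$ of the $m$ agents at $j+1$, giving $|S|=2m-1$. Deviating to $y'=j$ has total coalition cost exactly $m-1$, whereas the cost at $Y$ is $m\,d(j,Y)+(m-1)\,d(j+1,Y)=(m-1)\bigl(d(j,Y)+d(j+1,Y)\bigr)+d(j,Y)\ge m-1+d(j,Y)>m-1$, so $S$ strictly blocks. The remaining numerical check, $2m-1\ge(2-\epsilon)(k+1)m/k$, rearranges to $k(\epsilon m-1)\ge(2-\epsilon)m$, which is routine for the parameters above.

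The main subtlety is the need for the coalition to be \emph{asymmetric}. For "centered" clusterings such as $Y=\{1/2,3/2,\ldots,k-1/2\}$ every adjacent-pair sum equals $1$ exactly, so the naive symmetric $2m$-agent coalition deviating to a midpoint only ties the status-quo cost and does not strictly block; removing a single agent from the side closer to $Y$ simultaneously breaks the tie and shaves $|S|$ from $2m$ down to $2m-1$, which is precisely what is needed to stay just below $2n/k$ while beating it by a factor of more than $2-\epsilon$. Once this trick is spotted, the rest of the argument is a clean pigeonhole-plus-triangle-inequality calculation.
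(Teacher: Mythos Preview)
Your proof is correct and takes a genuinely different route from the paper's.

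The paper builds $C$ \emph{well-separated} clumps, each with an internal asymmetric pattern ($C-1$ agents, then $1$ agent, then $C-1$ agents at three consecutive points), and sets $k=2C-1$. Pigeonhole is applied at the level of clumps: since $2C-1<2C$, some clump is served by at most one center, and then the fixed internal asymmetry of that clump yields a blocking coalition of size $2C-3$.

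You instead reuse the \emph{uniform} instance from Theorem~\ref{thm:beta:lb} ($m$ agents at each of $k+1$ consecutive integers) and apply pigeonhole at the level of Voronoi cells: some adjacent pair $j,j+1$ shares a nearest center, forcing $d(j,Y)+d(j+1,Y)\ge 1$. The asymmetry is then created \emph{after the fact} by dropping one agent from the side closer to $Y$, giving a coalition of size $2m-1$ that strictly blocks at $y'=j$.

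What each approach buys: the paper's separated-clumps construction localizes the analysis to a single triple of points and makes the improvement gap explicit ($2C-5$ versus $2C-4$), at the cost of a second instance distinct from that of Theorem~\ref{thm:beta:lb}. Your argument unifies the two lower bounds on a single instance and handles \emph{every} clustering $Y$ directly via the Voronoi pigeonhole, without the large-separation device; the price is that the improvement margin is only $d(j,Y)\ge 1/2$ rather than a full unit, but this is irrelevant since any strict improvement suffices for $\beta=1$. One stylistic nit: your ``WLOG $d(j,Y)\ge d(j+1,Y)$'' is really ``let $j$ denote whichever of the two adjacent positions is farther from $Y$''; the argument is symmetric in the two positions, so this is harmless.
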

\vspace{-3mm}\begin{proof}
Consider an instance in the real line with $n=C(2C-1)$ agents and $k=2C-1$ centers to be opened for some integer $C\ge \frac{3}{\epsilon}$. Let  $K$ be a sufficiently large number, say $K = C^n$. The agents are partitioned into $C$ parts $(N_1, \cdots, N_C)$ such that for any $j \in [C]$, $N_j = \{i_1, i_2, \cdots, i_{2C-1}\}$ contains $2C-1$ agents where $\{i_1, \cdots, i_{C-1}\}$ are located at $jK$, agent $i_C$ is located at $jK + 1$ and $\{i_{C+1}, \cdots, i_{2C-1}\}$ are located at $jK+2$.

As we only open $2C-1$ centers, for any $(2-\epsilon,1)$-core clustering, there exists $N_j$ such that all agents in $N_j$ are served by a single center. Since the agents in $\{i_1, \cdots, i_{C-1}\}$ and $\{i_{C+1}, \cdots, i_{2C-1}\}$ are symmetric with respect to $i_C$, assume w.l.o.g. the center is placed at $x \ge jK+1$. Note that $2C-3\ge (2-\epsilon)\lceil\frac nk\rceil$. We show that $S=\{i_1,\ldots,i_{2C-3}\}$ forms a blocking coalition w.r.t. $(2-\epsilon,1)$-core by deviating to a new center $i_1$:
\begin{align*}
    \sum_{i\in S}d(i,i_1) &=1+2(C-3) < 2C-4 \\
    & =\sum_{i\in S}d(i,i_C)  \le \sum_{i\in S}d(i,x).
\end{align*}
Hence, there does not exist a $(2-\epsilon,1)$-core clustering.
\end{proof}

Fortunately, by selecting a proper parameter $\lambda$,  Algorithm $ALG_l$ guarantees the tight approximation ratio.

\begin{theorem}\label{thm:up3line}
For any line instance, $ALG_l(\lceil\frac{n}{k}\rceil)$  returns a $(2,1)$-core clustering. 
\end{theorem}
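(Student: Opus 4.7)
The plan is to adapt the proof of Theorem~\ref{thm:up1} to the $(\alpha,1)$-core setting, exploiting the fact that doubling the minimum blocking-coalition size more than compensates for the number of agents that $ALG_l(\lceil n/k\rceil)$ packs between any two consecutive centers. First, by Lemma~\ref{lem:nor}, I would restrict attention to a candidate blocking coalition $S\subseteq\cN$ of the minimum admissible size $|S|=\lceil 2n/k\rceil$, and assume for contradiction that together with some $y'\in\bR\setminus Y$ it satisfies $\sum_{i\in S}d(i,y')<\sum_{i\in S}d(i,Y)$. I would then add sentinels $y_0=-\infty$ and $y_{k+1}=+\infty$, locate $y'$ in the unique open interval $(y_j,y_{j+1})$, relabel (by symmetry) so that $y_j$ is the endpoint closer to $y'$, and partition $S$ into $N_1,N_2,N_3$ according to whether $x_i\le y_j$, $y_j<x_i<y_{j+1}$, or $x_i\ge y_{j+1}$.

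Next, I would recycle verbatim the two telescoping bounds (\ref{eq:11}) and (\ref{eq:22}) used in the proof of Theorem~\ref{thm:up1} and subtract them to derive
\[
\sum_{i\in S}d(i,Y)-\sum_{i\in S}d(i,y') \;\le\; (|N_2|-|N_1|-|N_3|)\,d(y_j,y').
\]
Unlike Theorem~\ref{thm:up1}, where one had to control a ratio, it now suffices to show the right-hand side is non-positive, i.e.\ $|N_2|\le|N_1|+|N_3|$, since this alone already contradicts the strict improvement hypothesized for $S$.

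The combinatorial heart of the argument is then the bound $|N_2|\le\lceil n/k\rceil-1$, which follows from the construction of $ALG_l(\lceil n/k\rceil)$: any open interval between two consecutive placed centers contains at most $\lambda-1$ agents, while the choice $\lambda\ge n/k$ forces $y_k=x_n$ so that no agent lies strictly to the right of the rightmost center. Combined with $|S|=\lceil 2n/k\rceil$, this yields $|N_1|+|N_3|=|S|-|N_2|\ge\lceil 2n/k\rceil-\lceil n/k\rceil+1$, and the elementary inequality $\lceil 2n/k\rceil\ge 2(\lceil n/k\rceil-1)$---easily verified by splitting on the fractional part $f$ of $n/k$ into the cases $f=0$, $0<f\le\tfrac12$, and $\tfrac12<f<1$---closes the loop.

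The main bookkeeping subtlety I expect lies in the boundary indices $j=0$ and $j=k$, where one of $y_j,y_{j+1}$ is at infinity; but the corresponding $N_1$ (resp.\ $N_3$) is automatically empty, and in the $j=k$ case even $N_2=\emptyset$ since $y_k=x_n$, so the non-positivity argument survives these edge cases without modification.
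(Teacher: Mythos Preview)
Your argument is correct. The core combinatorial observation---that at most $\lceil n/k\rceil-1$ members of $S$ lie strictly between two consecutive centers, hence at least as many lie outside---is exactly what the paper exploits, and your verification of $\lceil 2n/k\rceil\ge 2(\lceil n/k\rceil-1)$ closes the count cleanly. The edge cases $j=0$ and $j=k$ work as you indicate, with the minor caveat that for $j=0$ one should route the $N_2$ bound through $y_{j+1}$ rather than $y_j$ (equivalently, reflect the instance), which your ``by symmetry'' clause covers.

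Where you diverge from the paper is in the mechanism for turning $|N_2|\le|N_1\cup N_3|$ into non-improvement. You subtract the telescoping bounds (\ref{eq:11}) and (\ref{eq:22}) to obtain
\[
\sum_{i\in S}d(i,Y)-\sum_{i\in S}d(i,y')\le(|N_2|-|N_1|-|N_3|)\,d(y_j,y')\le 0,
\]
which is exactly the device the paper uses for the tree analogue (Theorem~\ref{thm:up3tree}) and for the general $(\alpha,\beta)$ result (Theorem~\ref{thm:ab}). For Theorem~\ref{thm:up3line} itself, however, the paper instead argues by \emph{pairing}: it matches each inside agent $u\in N_2$ with a distinct outside agent $v\in N_1\cup N_3$ and verifies $d(u,y')+d(v,y')\ge d(u,Y)+d(v,Y)$ pair by pair, with leftover outside agents trivially preferring $Y$. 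Your analytic subtraction is arguably cleaner and unifies the line proof with the tree and $(\alpha,\beta)$ proofs; the paper's pairing is more elementary in that it avoids reusing (\ref{eq:11})--(\ref{eq:22}) and needs no ceiling arithmetic, but it requires a small per-pair geometric check. Both routes rest on the same counting inequality.
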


\vspace{-3mm}\begin{proof}
   For any instance in the line, we show that $ALG_l(\lceil\frac nk\rceil)$ outputs a $(2,1)$-core clustering. Let $\mathbf x=\{x_1,\ldots,x_n\}$ be the locations of agents. Recall that the algorithm $ALG_l(\lceil\frac{n}{k}\rceil)$ locates a center at $y_i=x_{\lceil\frac{n}{k}\rceil\cdot i}$ for $i=1,\ldots,k-1$, and $y_k=x_n$. The output is centers $Y=\{y_1,\ldots,y_k\}$. Now we prove that $Y$ is a $(2,1)$-core clustering.

   Consider an arbitrary group $S\subseteq \cN$ of agents with $|S|\ge \frac{2n}{k}$, and an arbitrary point $y'\in \bR\backslash Y$.   Set two virtual points $y_0=-\infty$ and $y_{k+1}=+\infty$. Assume w.l.o.g. $y'\in (y_j,y_{j+1})$ for some $j=0,\ldots,k$.

   Let $N_1=\{i\in S|y_j<x_i<y_{j+1}\}$ be the set of members who are located in the interval $(y_j,y_{j+1})$, and $N_2=S\backslash N_1$ be the set of members who are outside. By the algorithm, we have $|N_1|\le \lceil\frac{n}{k}\rceil-1$, and $|N_2|=|S|-|N_1|\ge |N_1|$.  We construct pairs \[
   \{(u_1,v_1),(u_2,v_2),\ldots,(u_{|N_1|},v_{|N_1|})\}
   \]
   between $N_1$ and $N_2$, so that  $u_p\in N_1,v_p\in N_2$ and $u_p\neq u_q,v_p\neq v_q$, for any different $p,q\le|N_1|$. Then it is easy to see that for any pair of agents, their total distance to the potential deviating center $y'$ cannot be smaller than that to the solution $Y$, i.e.,
   $$d(u_p,y')+d(v_p,y')\ge d(u_p,Y)+d(v_p,Y)$$
   for any $p\le|N_1|$. Further, for those agents in $N_2$ but not in any pair (if any), they also have a smaller distance from solution $Y$ than that from $y'$. Therefore, $S$ cannot be a blocking coalition deviating to $y'$. By the arbitrariness of $S$ and $y'$, $Y$ is a $(2,1)$-core clustering.
\end{proof}

\subsection{Tree}

We continue to consider the tree spaces. 

\begin{theorem}\label{thm:up3tree}
For any tree instance, we can find a $(2,1)$-core clustering efficiently. For any $\epsilon > 0$, a $(2-\epsilon,1)$-core clustering is not guaranteed to exist in a tree.
\end{theorem}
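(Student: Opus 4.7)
The plan is to prove the two halves separately. For the upper bound I would run $ALG_t(\lceil n/k\rceil)$, the tree analogue of the line algorithm used in Theorem~\ref{thm:up3line}. Its key structural property, already noted in the paper, is that after the algorithm terminates, deleting the centers in $Y$ from the tree leaves a forest in which every connected component contains at most $\lceil n/k\rceil-1$ agents; any larger subtree would have triggered an earlier center opening. This bound will play the role that the interval bound $|N_1|\le\lceil n/k\rceil-1$ plays in the line proof.

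Fix an arbitrary potential blocking coalition $S$ with $|S|\ge\lceil 2n/k\rceil$ and an arbitrary $y'\in V\setminus Y$. Let $C^*$ be the component of the center-deleted forest containing $y'$, and split $S$ into $N_1=S\cap C^*$ and $N_2=S\setminus N_1$. The component bound gives $|N_1|\le\lceil n/k\rceil-1$, hence $|N_2|\ge|N_1|+1$, so each $u\in N_1$ can be matched with a distinct partner $v\in N_2$. For each pair $(u,v)$, pick any center $y^*=y^*(v)\in Y$ lying on the unique $v$--$y'$ path; such a center exists because $v$ and $y'$ lie in different components of the forest. Two triangle inequalities do the work: $d(v,y')=d(v,y^*)+d(y^*,y')\ge d(v,Y)+d(y^*,y')$, and $d(u,Y)\le d(u,y^*)\le d(u,y')+d(y',y^*)$, which rearranges to $d(u,y')\ge d(u,Y)-d(y',y^*)$. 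Adding these cancels the $d(y^*,y')$ term and yields $d(u,y')+d(v,y')\ge d(u,Y)+d(v,Y)$. Unpaired members of $N_2$ already satisfy $d(v,y')\ge d(v,Y)$ by the first inequality alone, so summing gives $\sum_{i\in S}d(i,y')\ge\sum_{i\in S}d(i,Y)$, i.e.\ $S$ is not blocking, and $Y$ is a $(2,1)$-core clustering.

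For the lower bound, I would note that a weighted path is itself a tree, so the construction from Theorem~\ref{thm:alpha:line:lb} lifts directly: place the same multiset of agents at the same positions and realise them on a weighted path whose edge lengths reproduce the pairwise distances of the original line instance; shortest-path distances in this tree then coincide with the Euclidean distances used there, and the blocking coalition $S=\{i_1,\dots,i_{2C-3}\}$ from that proof remains blocking in the tree. The main obstacle I anticipate is the pairing step of the upper bound: unlike the line, where the partner $v$ always lies on the far side of a single boundary center, in a tree different partners use different separating centers $y^*(v)$, and one must control $d(u,Y)$ without reference to any canonical center. The $y^*(v)$-dependent inequalities above handle this by exploiting only that $d(u,Y)\le d(u,y^*(v))$ for any admissible $y^*(v)$; this substitution is what makes the line argument survive on trees.
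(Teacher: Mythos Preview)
Your proposal is correct. For the lower bound you do exactly what the paper does: lift the line instance of Theorem~\ref{thm:alpha:line:lb} onto a weighted path. For the upper bound you use the same algorithm $ALG_t(\lceil n/k\rceil)$ and the same partition of $S$ into the component of $y'$ versus the rest, but you finish with a \emph{pairing} argument in the style of the paper's line proof (Theorem~\ref{thm:up3line}), whereas the paper itself reuses the aggregate ratio bound $r\le |N_2|/|N_1|$ established earlier as Equation~(\ref{eq:r}) in the proof of Theorem~\ref{thm:999} and simply observes that this ratio is below $1$. Your note that different partners $v$ may require different separating centers $y^*(v)$, handled via $d(u,Y)\le d(u,y^*(v))$, is precisely the extra observation needed to make the pairing work on trees; the paper sidesteps this by working with the global ratio, where the per-agent choice of center is absorbed into $d(y',Y)$. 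Both routes are short; yours is more self-contained, while the paper's is more economical given what it has already proved. (Minor note: your labels $N_1,N_2$ are swapped relative to the paper's.)
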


\vspace{-3mm}\begin{proof}
For any instance in the tree, we show that $ALG_t(\lceil\frac nk\rceil)$ outputs a $(2,1)$-core clustering.
    Let  $Y$ be the solution output by $ALG_t(\lceil\frac nk\rceil)$.
    Suppose for contradiction that there is a blocking coalition $S\subseteq \cN$ with $|S|\ge \frac{2n}{k}$ and a deviating center $y'$.
    {Define a partition $(N_1,N_2)$ of group $S$: $N_2$ consists of the agents in $S$ who are in the component containing $y'$, where the components are obtained by removing all centers in $Y$ from the tree; $N_1$ consists of the remaining group members.}

Recall that $ALG_t(\lceil\frac nk\rceil)$ guarantees
\[
|N_2|\le \lceil\frac{n}{k}\rceil-1
\]
and thus
\[
|N_1|\ge |S|-(\lceil\frac{n}{k}\rceil-1)\ge \lceil\frac{n}{k}\rceil+1.
\]
By Equation  (\ref{eq:r}), we have
 $$\frac{\sum_{i\in S}d(i,Y)}{\sum_{i\in S}d(i,y')}\le \frac{|N_2|}{|N_1|}\le \frac{\lceil\frac{n}{k}\rceil-1}{\lceil\frac{n}{k}\rceil+1}<1.$$
This contradicts the definition of a blocking coalition.

  For the lower bound, we note that the instance constructed in the line in the proof of Theorem \ref{thm:alpha:line:lb} can be adapted to the graph space induced by a path graph. Using the similar analysis, it can be shown that the $(2-\epsilon,1)$-core for this path is empty. 
\end{proof}

\subsection{General Metric Space}\label{sec:42}

By the definition of $(\alpha,1)$-core clustering, a $(k,1)$-core clustering always exists, because any potential blocking coalition must contain all agents and any solution containing the optimal single center is a  $(k,1)$-core clustering. Next, we show that this trivial upper-bound is asymptotically tight.

\begin{theorem}\label{thm:clique}
{The $(k,1)$-core is always non-empty. Further,} the existence of an $(\alpha,1)$-core clustering, for any $\alpha\le \min\{k,\max\{\frac k2,\frac n4\}\}$, is not guaranteed. 
\end{theorem}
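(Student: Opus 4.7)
My plan has two parts matching the two halves of the statement.

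For the positive direction (the $(k,1)$-core is always non-empty), I would take the $1$-median $y^\star \in \arg\min_{y\in\mathcal M}\sum_{i\in\mathcal N}d(i,y)$ and any $k$-clustering $Y^\star$ containing $y^\star$ (padding arbitrarily). Any blocking coalition for the $(k,1)$-core must have size at least $k\cdot\lceil n/k\rceil\ge n$, hence equal $\mathcal N$. For each alternative center $y'\in\mathcal M\setminus Y^\star$, the optimality of $y^\star$ as a single facility yields
\[
\sum_{i\in\mathcal N} d(i,y') \ge \sum_{i\in\mathcal N} d(i,y^\star) \ge \sum_{i\in\mathcal N} d(i,Y^\star),
\]
so no coalition can strictly improve and $Y^\star$ lies in the $(k,1)$-core.

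For the negative direction, my witnessing family is the complete graph $K_v$ with unit edge weights, $\mathcal X=\mathcal M=V$, and exactly one agent at each vertex (so $n=v$). For any $k$-clustering $Y\subseteq V$, let $S=V\setminus Y$ be the $n-k$ uncovered vertices. Every $i\in S$ has $d(i,Y)=1$; if $S$ deviates to any $y'\in S$, then one member falls to distance $0$ and the rest stay at distance $1$, so $\sum_{i\in S}d(i,y')=n-k-1<n-k=\sum_{i\in S}d(i,Y)$. Hence $S$ is a blocking coalition provided $|S|=n-k\ge\lceil\alpha n/k\rceil$, i.e. $\alpha\le k(n-k)/n$, making the $(\alpha,1)$-core empty in that range.

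To match the stated $\alpha\le\min\{k,\max\{k/2,n/4\}\}$, I calibrate $v=n$ to the target $\alpha$. The sweet spot is $n=2k$, at which $k(n-k)/n=k/2=n/4$ hits $\max\{k/2,n/4\}$ exactly; for larger $\alpha<k$ I would pick $n\ge k^2/(k-\alpha)$. The elementary identity $(2\alpha-k)^2\ge 0$ is the key: rearranged, it simultaneously implies $k(n-k)/n\ge\alpha$ (giving emptiness) and $n\ge 4\alpha$ (so $\alpha\le n/4\le\max\{k/2,n/4\}$), keeping the instance inside the parameter range. The main obstacle I anticipate is precisely this joint calibration: the bound $\max\{k/2,n/4\}$ is not witnessed by a single universal $K_n$, but by a family indexed by $n$ chosen in terms of the desired $\alpha$, with $n=2k$ serving as the extremal case and larger $n$ letting $\alpha$ approach but not reach $k$ (consistent with the positive direction ruling out $\alpha=k$).
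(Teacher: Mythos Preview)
Your approach is essentially the same as the paper's: for the positive direction you use the $1$-median padded to $k$ centers, and for the negative direction you use the unit-weight complete graph with the uncovered vertices $V\setminus Y$ as the blocking coalition. The paper simply instantiates $n=2k$ (where $k/2=n/4$, so the stated bound collapses to a single value) and stops; your additional calibration of $n$ in terms of the target $\alpha$ is correct but goes beyond what the paper actually writes out.
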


\vspace{-3mm}\begin{proof}
Consider the graph space induced by a complete graph $K_n=(V,E)$ with $n=2k$ vertices, where at each vertex lies an agent, and the distance between any two vertices is 1. We show that, for $\alpha\le\frac k2=\frac n4$, the $(\alpha,1)$-core is empty. Let $Y\subseteq V$ with $|Y|=k$ be an arbitrary $k$-clustering. Consider the group $V\backslash Y$ of agents with size $|V\backslash Y|=k\ge \alpha\cdot\frac nk$.
The total distance of this group to $Y$ is $k$, as each member has a distance 1 to $Y$. However, they may deviate to a new center $v\in V\backslash Y$, such that the total distance becomes $\sum_{i\in V\backslash Y}d(i,v)=k-1<k$. So they form a blocking coalition. This completes the proof.
\end{proof}

Finally, we give a hardness result for the existence of  an $(\alpha,1)$-core clustering.

\begin{theorem}\label{thm:np99}
For any given constant $\alpha\ge 1$, the problem of determining the existence of an $(\alpha,1)$-core clustering is NP-complete. 
\end{theorem}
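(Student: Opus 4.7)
The plan is to first verify membership in NP, and then reduce from Minimum Vertex Cover (generalizing the construction in Theorem~\ref{thm:np}), with the blocking-coalition size $T:=\lceil\alpha n/k\rceil$ controlled by suitably chosen ``rigid'' gadgets.

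\paragraph{Membership in NP.} A candidate $k$-clustering $Y$ will serve as the certificate. By Lemma~\ref{lem:nor}, it suffices to check whether there is a blocking coalition of size exactly $T$. For every fixed candidate deviating center $y'\in\mathcal M\setminus Y$, the optimal blocking coalition is obtained greedily: rank the agents in decreasing order of $d(i,Y)-d(i,y')$ and take the top $T$. If the resulting sum $\sum_{i\in S^*}(d(i,Y)-d(i,y'))$ is positive for some $y'$, then $Y$ is not $(\alpha,1)$-core; otherwise it is. Since $|\mathcal M|$ and $n$ are polynomial in the input size, this verification runs in polynomial time.

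\paragraph{NP-hardness.} The plan is to extend the reduction of Theorem~\ref{thm:np}. Given a Minimum Vertex Cover instance $(G'=(V',E'),k')$ with $|V'|=n'$, I will build a graph space whose components are (i) a copy of $G'$ with unit-weight edges and one agent per vertex, and (ii) many identical rigid gadgets (each a clique $K_s$ with one agent per vertex), where all distinct components are joined by edges of effectively infinite length. The infinite-weight edges ensure that any potential blocking coalition lies inside a single component. The rigid gadgets will be designed so that in any $(\alpha,1)$-core clustering, exactly $s-1$ of the $s$ vertices of each $K_s$ must host a center (otherwise the two uncovered vertices of a $K_s$ form a blocking coalition of the minimum allowed size). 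Choosing the number of gadgets and the total $k$ as explicit functions of $\alpha$, $n'$, $k'$ will pin down all but $k'$ centers and force them to be placed on $G'$. A standard argument then shows that these remaining centers must cover every edge of $G'$ (otherwise two adjacent uncovered agents form a blocking coalition), yielding the desired equivalence.

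\paragraph{Main obstacle.} The delicate step is aligning the minimum blocking-coalition size $T=\lceil\alpha n/k\rceil$ with the combinatorial object we want to encode. Targeting $T=2$ (as in Theorem~\ref{thm:np}) requires $k\in[\alpha n/2,\alpha n)$, which is difficult to achieve simultaneously with the gadget budget when $\alpha$ is large. To handle every constant $\alpha\ge 1$ in a uniform way, the plan is to pick the rigid gadget's clique size $s$ and the number $M$ of gadgets (both depending on $\alpha$ but not on the input size) so that $\alpha n/k$ falls in the desired window $(T-1,T]$, and then use $T$-cliques inside $G'$ (equivalently, reduce from $T$-uniform hypergraph hitting set, which is NP-hard for every $T\ge 2$) so that a blocking coalition corresponds exactly to an uncovered hyperedge. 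Because $\alpha$ is a fixed constant, the resulting instance size remains polynomial in $n'$, completing the hardness reduction.
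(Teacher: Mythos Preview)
Your NP-membership argument is correct and matches the paper's implicit verification step.

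Your hardness plan, however, diverges from the paper's construction and has a genuine gap in the ``forward'' direction. You propose to encode the source problem so that a blocking coalition \emph{corresponds exactly to an uncovered hyperedge}; but you never argue the harder implication, namely that if all hyperedges are hit then \emph{no} group of size $T$ can block. In a graph built as a union of $T$-cliques, a size-$T$ set that is \emph{not} a hyperedge may still be a blocking coalition (e.g., $T$ vertices forming a star around an uncovered hub can deviate to the hub and strictly decrease their total distance), so hitting all $T$-cliques is not by itself sufficient for $(\alpha,1)$-core. The paper resolves this by a different device: it keeps the source problem as ordinary vertex cover on $G'$, attaches $\lceil\alpha\rceil-1$ pendant leaves to every vertex of $G'$, and chooses a single large clique $K_x$ so that $n/k=2$ and hence $T=\lceil 2\alpha\rceil$. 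The pendant leaves do double duty: (i) any uncovered edge $\{u,v\}$ together with its pendants gives a coalition of size $2\lceil\alpha\rceil\ge\lceil 2\alpha\rceil$ that blocks; and (ii) for the forward direction, any candidate deviation to some $y'\in V'\setminus D$ can help at most $\lceil\alpha\rceil$ agents ($y'$ and its pendants) while, because $D$ is a vertex cover, every $V'$-neighbour of $y'$ already hosts a center, so at least $\lceil\alpha\rceil$ other coalition members strictly lose. This localisation of benefit is exactly what makes the forward direction go through, and your sketch lacks an analogue of it.

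Two smaller issues: first, your claim that the gadget count $M$ depends only on $\alpha$ cannot hold---to force $\alpha n/k$ into the window $(T-1,T]$ while leaving exactly $k'$ centers for $G'$, the gadget parameters must depend on $n'$ and $k'$ (indeed in the paper $x=n'\lceil\alpha\rceil+2\lceil2\alpha\rceil-2-2k'$). Second, the line ``otherwise the two uncovered vertices of a $K_s$ form a blocking coalition of the minimum allowed size'' is only consistent with $T=2$; for general $T$ you need at least $s-T+1$ centers per $K_s$, not $s-1$. These are fixable, but the missing forward-direction argument is the substantive gap.
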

\vspace{-3mm}\begin{proof}
 We give a reduction from the minimum vertex cover problem. Let $(G'=(V',E'),k')$  be an arbitrary instance of the MVC with $|V'|=n'\ge 2k'$ vertices, which asks whether a vertex cover of $G'$ with size $k'$ exists. {Define an integer $x:=n'\lceil\alpha\rceil + 2\lceil2\alpha\rceil -2-2k'$.} 


We construct an instance $I$ of our problem on an edge-weighted connected graph $G=(V,E)$ as follows, with $|V|=n$ and $V=\cX=\cM$. $G$ is constructed based on $G'$:
connect each vertex $v\in V'$ with $\lceil\alpha\rceil-1$ new vertices by unit-weighted edges (denote this graph by $G''$), and connect an arbitrary vertex of $G''$ with an $x$-vertex complete graph $K_x$ by a large-weighted edge, as shown in Figure~\ref{fig:13}. 
There are $n:=|V|=\lceil\alpha\rceil n'+x$ vertices, at each of which lies an agent, and $k:=k'+x-\lceil2\alpha\rceil+1$ centers are to be located.
All edges in $G$ have unit weight except the one connecting $G''$ and $K_x$ with a large weight, which guarantees that no center can serve agents in $G''$ and $K_x$ simultaneously.

\begin{figure}[h]
    \centering
    \includegraphics[width=8cm]{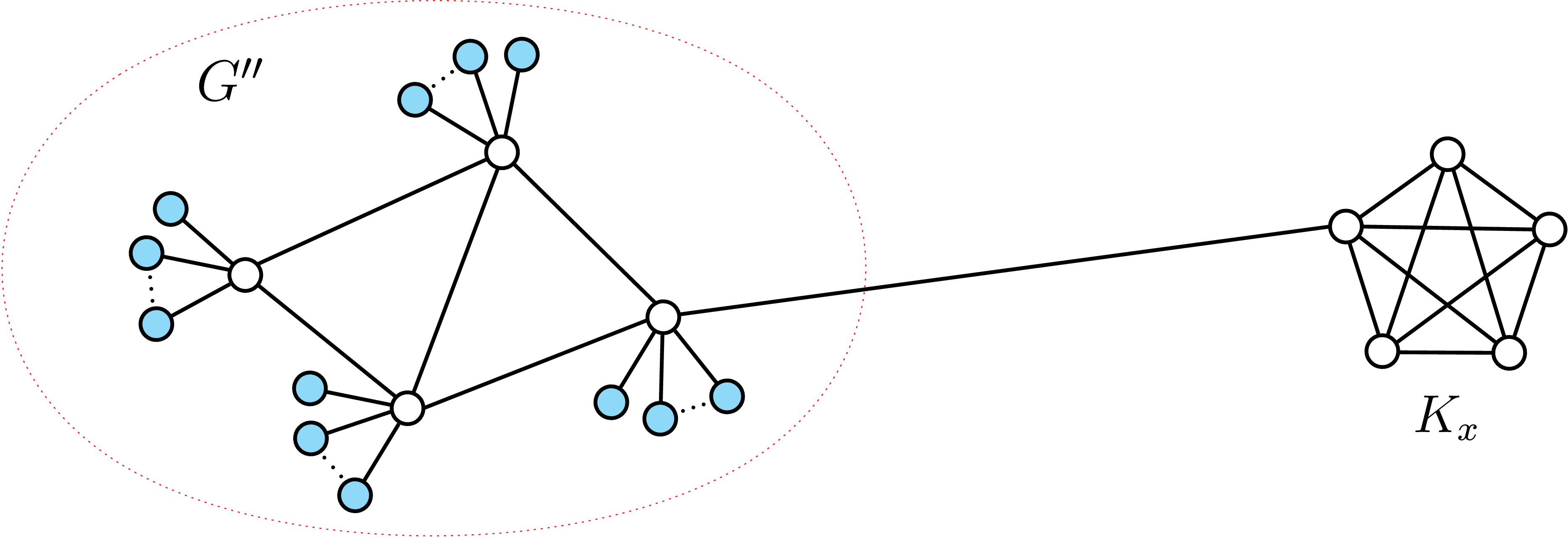}
    \caption{\small An illustration for $G$, where $G''$ is obtained by adding  $\lceil\alpha\rceil-1$ new vertices (denoted by blue nodes) for each vertex in $G'$, and $G''$ and $K_x$ are connected by a large-weighted edge.}
    \label{fig:13}
\end{figure}

Note that $\frac{\alpha\cdot n}{k}=2\alpha$.  By  Lemma \ref{lem:nor}, it suffices to consider the groups of size $\lceil2\alpha\rceil$.  
We show that, $G'$ admits a vertex cover $D\subseteq V'$ of size $k'$, if and only if $G$ has an $(\alpha,1)$-core clustering. 

If $G'$ has a vertex cover $D$ of size $k'$, we construct a $k$-clustering $Y$ on $G$, which locates $k'$ centers on $D$, and locates $k-k'=x-\lceil2\alpha\rceil+1$ centers on different vertices of $K_x$. Note that no possible blocking coalition can have agents in both $K_x$ and $G''$. For any group $S$ of size $\lceil2\alpha\rceil$ in $K_x$, at least one member has a center, and thus the total distance to $Y$ is at most $\lceil2\alpha\rceil-1$. However, the total distance to any deviating center is at least  $\lceil2\alpha\rceil-1$, implying that $S$ cannot be a blocking coalition. On the other hand, for any group $S$ of size $\lceil2\alpha\rceil$ in $G''$, if it is a blocking coalition, there must exist a deviating center $y'\in V'\backslash D$ (because a new vertex is a deviating center only if its adjacent vertex in $V'$ is a deviating center). Note that by the definition of vertex cover, all neighbors of $y'$ in $V'$ must have a center in $Y$.   Denote the set of agents who are located at the $\lceil\alpha\rceil-1$ new vertices adjacent to $y'$ or $y'$ itself by $S'\subset S$. By deviating to $y'$, each agent in $S'$ decreases his/her cost by 1, while each agent in $S\backslash S'$ increases his/her cost by 1. Since $|S'|\le \lceil\alpha\rceil$, the total cost of $S$ cannot be improved by deviating.
So $Y$ is in the $(\alpha,1)$-core.

 If $G'$ has no vertex cover of size $k'$, suppose for contradiction that $I$ admits an $(\alpha,1)$-core clustering $Y$. First, it must locate at least  $k-k'=x-\lceil2\alpha\rceil+1$ centers on $K_x$, as otherwise those $\lceil2\alpha\rceil$ agents without center can form a blocking coalition. So it locates at most $k'$ centers on $G''$. Because $G'$ has no vertex cover of size $k'$, there must be two vertices $u,v\in V'$ so that there is no center among $u,v$ and the new vertices adjacent to them. 
 These $2+2(\lceil\alpha\rceil-1)=2\lceil\alpha\rceil\ge \lceil2\alpha\rceil$ vertices can form a blocking coalition by deviating to $u$, giving a contradiction.
\end{proof}

\section{$(\alpha, \beta)$-Core}
\label{sec:alpha-beta}

Finally, we study the approximate fairness by relaxing both dimensions simultaneously.
Recalling the results in Sections \ref{subsec:line1} and \ref{subsec:line2}, if $\alpha = 1$, the best possible approximation of $\beta$ is $\Theta(\sqrt n)$; however, if $\alpha$ is relaxed to $2$, we are able to get the optimum in the $\beta$-dimension, i.e., $\beta = 1$.
The following theorem shows the exact tradeoff between the approximations in both dimensions for line and tree spaces.

\begin{theorem}\label{thm:ab}
For any $\alpha>1$, every line or tree instance has a non-empty $(\alpha,\beta)$-core with $\beta=\max\left\{1,\frac{1}{\alpha-1}\right\}$.
\end{theorem}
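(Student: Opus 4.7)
The claimed bound $\beta = \max\{1, \tfrac{1}{\alpha-1}\}$ has a threshold at $\alpha=2$: for $\alpha \ge 2$ we need only $\beta = 1$, while for $1 < \alpha < 2$ we need $\beta = \tfrac{1}{\alpha-1} > 1$. I would split the argument at this threshold. In the easy regime $\alpha \ge 2$, increasing $\alpha$ only shrinks the family of coalitions entitled to block, so any $(2,1)$-core clustering automatically lies in the $(\alpha,1)$-core; Theorem~\ref{thm:up3line} (line) and Theorem~\ref{thm:up3tree} (tree) already exhibit such clusterings, which settles this case.

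For the non-trivial regime $1 < \alpha < 2$, my plan is to run $ALG_l(\lceil n/k \rceil)$ on the line and $ALG_t(\lceil n/k \rceil)$ on the tree, i.e., the exact parameter choice used in Theorems~\ref{thm:up1} and \ref{thm:999}. What I want to reuse from those proofs is a purely combinatorial upper bound on the potential improvement ratio. Concretely, given any candidate deviating center $y'$ and candidate coalition $S$, partition $S$ by whether a member lies in the ``inner'' region associated with $y'$ (the open interval between consecutive centers containing $y'$ in the line case, or the connected component of $G \setminus Y$ containing $y'$ in the tree case): let $N_2$ be the members in the inner region and $N_{\text{out}} = S \setminus N_2$. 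The proofs of Theorems~\ref{thm:up1} and \ref{thm:999} give
\[
\frac{\sum_{i \in S} d(i, Y)}{\sum_{i \in S} d(i, y')} \;\le\; \frac{|N_2|}{|N_{\text{out}}|},
\]
and the parameter choice $\lambda = \lceil n/k \rceil$ forces $|N_2| \le \lceil n/k \rceil - 1$.

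The rest is arithmetic. Suppose for contradiction that the output $Y$ is blocked by some $S$ with $|S| \ge \lceil \alpha n/k \rceil$ at improvement factor strictly greater than $\tfrac{1}{\alpha-1}$. Then $|N_2| > \tfrac{|N_{\text{out}}|}{\alpha - 1}$, and together with $|N_2| + |N_{\text{out}}| = |S|$ this yields $|S| < \alpha |N_2| \le \alpha(\lceil n/k \rceil - 1)$. Applying the elementary bound $\lceil n/k \rceil - 1 < n/k$ produces
\[
\lceil \alpha n/k \rceil \;\le\; |S| \;<\; \alpha(\lceil n/k \rceil - 1) \;<\; \alpha \cdot \tfrac{n}{k} \;\le\; \lceil \alpha n/k \rceil,
\]
a contradiction.

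\textbf{Main obstacle.} Given Theorems~\ref{thm:up1} and \ref{thm:999}, the argument is short; the delicate point is simply keeping the ceiling arithmetic honest. In particular one must verify that $|N_{\text{out}}| \ge 1$ so that the ratio above is well-defined, which follows from $\lceil \alpha n/k \rceil > \lceil n/k \rceil - 1$ whenever $\alpha > 1$. In the line case one also has to remember that $N_{\text{out}}$ is itself a union $N_1 \cup N_3$ over the two sides of the inner interval, a purely notational change from Theorem~\ref{thm:up1}. I expect no genuinely new structural step to be needed: the result is essentially a parametric re-optimization of the Section~\ref{sec:beta} and Section~\ref{sec:4} analyses using the unified parameter $\lambda = \lceil n/k \rceil$.
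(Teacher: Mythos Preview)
Your proposal is correct and follows essentially the same route as the paper: both run $ALG_l(\lceil n/k\rceil)$ / $ALG_t(\lceil n/k\rceil)$ and invoke the same combinatorial bound $r \le |N_2|/|N_{\text{out}}|$ together with $|N_2|\le \lceil n/k\rceil-1$. The only cosmetic difference is that you split off the case $\alpha\ge 2$ and cite Theorems~\ref{thm:up3line}/\ref{thm:up3tree}, whereas the paper handles all $\alpha>1$ uniformly by directly verifying $\frac{\lceil n/k\rceil-1}{\alpha n/k-\lceil n/k\rceil+1}\le\max\{1,\tfrac{1}{\alpha-1}\}$; the arithmetic is the same rearranged.
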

\vspace{-3mm}\begin{proof}
We first prove for lines. Let $\mathbf x=\{x_1,\ldots,x_n\}$ be the locations of agents. Consider the algorithm $ALG_l(\lceil\frac{n}{k}\rceil)$:  locate a center at $y_i=x_{\lceil\frac{n}{k}\rceil\cdot i}$ for $i=1,\ldots,k-1$, and locate a center at $y_k=x_n$. The output is $Y=\{y_1,\ldots,y_k\}$. We show that $Y$ is in $(\alpha,\beta)$-core for any $\alpha>1$ and $\beta=\max\{1, \frac{1}{\alpha-1}\}$. 

    Suppose for contradiction that there is a blocking coalition $S\subseteq \cN$ with $|S|\ge \alpha\cdot\frac nk$ and a deviating center $y'\in \bR\backslash Y$.  Set two virtual points $y_0=-\infty$ and $y_{k+1}=+\infty$. Assume w.l.o.g. $y'\in (y_j,y_{j+1})$ for some $j=0,\ldots,k$ and $d(y_j,y')\le d(y_{j+1},y')$.
    Let $(N_1,N_2,N_3)$ be a partition of $S$ with $N_1=\{i\in S|x_i\le y_j\}$, $N_2=\{i\in S| y_j<x_i<y_{j+1}\}$, and $N_3=\{i\in S|x_i\ge y_{j+1}\}$. Note that the algorithm guarantees $|N_2|\le \lceil\frac{n}{k}\rceil-1$.

     Because $S$ is a blocking coalition, we have
   $r:=\frac{\sum_{i\in S}d(i,Y)}{\sum_{i\in S}d(i,y')}> \beta\ge 1$.
   However, as the proof of Theorem \ref{thm:up1}, we have
$$r\le\frac{|N_2|}{|N_1 \cup N_3|}\le \frac{\lceil\frac{n}{k}\rceil-1}{\frac{\alpha n}{k}-\lceil\frac{n}{k}\rceil+1}\le \beta$$
    where the second equation is because $|N_2|\le \lceil\frac{n}{k}\rceil-1$,
and $|N_1\cup N_3|= |S|-|N_2|\ge \frac{\alpha n}{k}-\lceil\frac{n}{k}\rceil+1$.
This is a contradiction.

For any instance in the tree, we show that $ALG_t(\lceil\frac nk\rceil)$ outputs a $(\alpha,\beta)$-core clustering for any $\alpha>1$.
Let  $Y$ be the solution output by $ALG_t(\lceil\frac nk\rceil)$.
Suppose for contradiction that there is a blocking coalition $S\subseteq \cN$ with $|S|\ge \frac{\alpha n}{k}$ and a deviating center $y'$.
{Define a partition $(N_1,N_2)$ of group $S$: $N_2$ consists of the agents in $S$ who are in the component containing $y'$, where the components are obtained by removing all centers in $Y$ from the tree; $N_1$ consists of the remaining group members.}

Recall that $ALG_t(\lceil\frac nk\rceil)$ guarantees $|N_2|\le \lceil\frac{n}{k}\rceil-1$ and thus
$|N_1|\ge |S|-(\lceil\frac{n}{k}\rceil-1)\ge \frac{\alpha n}{k}-\lceil\frac{n}{k}\rceil+1$.
By Equation  (\ref{eq:r}), we have
 $$\frac{\sum_{i\in S}d(i,Y)}{\sum_{i\in S}d(i,y')}\le \frac{|N_2|}{|N_1|}\le \frac{\lceil\frac{n}{k}\rceil-1}{\frac{\alpha n}{k}-\lceil\frac{n}{k}\rceil+1}\le \beta.$$
This contradicts the definition of a blocking coalition.
\end{proof}

Next, we extend the above theorem to general metric space.
Again our results in Sections \ref{sec:32} and \ref{sec:42} show that if $\alpha$-dimension is not relaxed, the best approximation in $\beta$-dimension is $\Theta(\sqrt{n})$;
and if $\beta$-dimension is not relaxed, the best approximation in $\alpha$-dimension is $\max\{\frac k2,\frac n4\}$.
As we will see in the following theorem, however, if we sacrifice a small constant for one dimension, we can guarantee constant approximation for the other as well.

\begin{theorem}\label{thm:ab_general_space}
For any $\alpha>1$, any instance in a metric space has a non-empty $(\alpha,\beta)$-core with
$\beta = \max\left\{4, \frac{2}{\alpha -1} + 3\right\}$.
\end{theorem}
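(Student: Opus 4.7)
The plan is to show that $ALG_g$---the same algorithm analyzed in Theorem~\ref{thm::16}---already produces an $(\alpha,\beta)$-core clustering for the stated $\beta$. The idea is that when $\alpha>1$, every deviating coalition must have size strictly larger than the algorithm's opening threshold $m=\lceil n/k\rceil$, and this extra room lets me replace the coarse ``max is at least half the distance to the nearest center'' bound used in Theorem~\ref{thm::16} with a much tighter sum-based estimate that depends on $\alpha$ rather than on $n/k$.

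The first step is to suppose for contradiction that $S\subseteq\mathcal N$ with $s:=|S|\geq\lceil\alpha n/k\rceil$ and $y'\in\mathcal M\setminus Y$ satisfy $\beta\sum_{i\in S}d(i,y')<\sum_{i\in S}d(i,Y)$, and to sort the agents of $S$ by distance to $y'$ as $d_1\leq\cdots\leq d_s$. Since $\alpha>1$ gives $s\geq m$, the ball $B(y',d_m)$ contains at least $m$ agents of $\mathcal N$. Replaying Case~2 of the proof of Theorem~\ref{thm::16} at radius $r=d_m$ (rather than $r=\max_{i\in S}d(i,y')$), I conclude that either $y'$ would have been opened by $ALG_g$ (contradicting $y'\notin Y$) or there exists a center $y^*\in Y$ with $d(y^*,y')\leq 2d_m$.

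The second step combines the triangle-inequality estimate $\sum_{i\in S}d(i,Y)\leq\sum_{i\in S}d(i,y')+s\cdot d(y^*,y')\leq\sum_{i\in S}d(i,y')+2s\,d_m$ with the refined lower bound $\sum_{i\in S}d(i,y')\geq(s-m+1)d_m$, obtained by observing that the last $s-m+1$ terms of the sorted sequence $d_1\leq\cdots\leq d_s$ are each at least $d_m$. Dividing yields the key ratio estimate
\[
\frac{\sum_{i\in S}d(i,Y)}{\sum_{i\in S}d(i,y')}\;\leq\;1+\frac{2s}{s-m+1}.
\]

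The third and final step is algebraic: I verify that the right-hand side is at most $\beta=\max\{4,3+2/(\alpha-1)\}$ under the hypothesis $s\geq\lceil\alpha n/k\rceil$. Rearranging shows this is equivalent to $s\geq(m-1)(\beta-1)/(\beta-3)$, which specializes to $s\geq 3(m-1)$ when $\beta=4$ and to $s\geq\alpha(m-1)$ when $\beta=3+2/(\alpha-1)$. Using $s\geq\alpha(n/k)$ together with the trivial fact $n/k\geq m-1$, the branch $\alpha\geq 3$ is handled via $s\geq 3(n/k)\geq 3(m-1)$, and the branch $1<\alpha<3$ via $s\geq\alpha(n/k)\geq\alpha(m-1)$. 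The main obstacle I anticipate is the fiddly boundary case $s=m$, which can arise only when $n/k$ is non-integer and $\alpha\leq m/(m-1)$: there the denominator $s-m+1$ degenerates to $1$ and the ratio bound becomes $2m+1$. One then needs to check that $\alpha\leq m/(m-1)$ rearranges to $2/(\alpha-1)\geq 2(m-1)$, so the branch $\beta=3+2/(\alpha-1)\geq 2m+1$ still absorbs this case and the contradiction is secured.
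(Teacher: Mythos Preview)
Your proof is correct and arrives at exactly the same final bound as the paper, but via a different (and somewhat cleaner) decomposition. The paper first fixes the closest center $y^*\in Y$ to $y'$, sets $R=d(y',y^*)/2$, and partitions $S$ into the near agents $X=\{i\in S:d(i,y')<R\}$ and the far agents $S\setminus X$; it then shows $|X|\le m-1$ from the algorithm and bounds the ratio as $r_1+r_2$ with $r_1\le\max\{1,2/(\alpha-1)\}$ and $r_2\le 3$. You reverse the order: you fix the $m$-th smallest agent distance $d_m$ first, deduce that some center satisfies $d(y^*,y')\le 2d_m$, and bound the ratio in one shot via $\sum_{i\in S}d(i,Y)\le\sum_{i\in S}d(i,y')+2s\,d_m$ and $\sum_{i\in S}d(i,y')\ge(s-m+1)d_m$. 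The two routes produce the identical intermediate estimate, since your $1+2s/(s-m+1)$ equals the paper's $3+2(m-1)/(s-m+1)$, so neither approach is quantitatively stronger. Your version avoids the $r_1/r_2$ split and the case distinction on whether $r_1\ge 1$; the paper's version makes more visible the separate contributions of near and far agents. Finally, your boundary discussion of $s=m$ is correct but redundant: the chain $s\ge\alpha(n/k)\ge\alpha(m-1)$ already handles it, since $s=m$ forces $\alpha\le m/(n/k)\le m/(m-1)$ and hence $s=m\ge\alpha(m-1)$.
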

\vspace{-3mm}\begin{proof}
    Let $Y$ be the $k$-clustering returned by $ALG_g$. By Lemma \ref{lem:nor}, it suffices to prove that, for any set of agents $S\subseteq \mathcal{N}$ with $|S| = \lceil\frac{\alpha n}{k} \rceil$,
    \[
    \sum_{i \in S} d(i,Y) \leq \max \left\{ 4, \frac{2}{\alpha-1}+3 \right\} \cdot \sum_{i \in S} d (i ,y' )
    \]
    holds for any point $y'\in \mathcal M \setminus Y$.
    Suppose for contradiction that there is a blocking coalition $S\subseteq \cN$ with $|S| = \lceil \frac{\alpha n}{k} \rceil$ and a deviating center $y'\in \mathcal{M}\backslash Y$.

    Let $y^{*}\in Y$ be the cluster center closest to $y'$, i.e., $y ^ {*} \in \arg\min_{y \in Y} d (y, y' )$. Note that for any $i \in S$, $d(i, Y ) \leq d(i, y^*)$. 
     Because $S$ is a blocking coalition, we have
   \[
   r:=\frac{\sum_{i\in S}d(i,Y)}{\sum_{i\in S}d(i,y')}> \beta = \max \left\{ 4, \frac{2}{\alpha-1}+3 \right\}.
   \]

   Consider an open ball $X$ centered at $y'$ with radius
   $R := \frac{d(y',y^*)}{2}$, $X=\{i \in S~|~ d(i,y')< R\}$.
   Note that $|X| \le\lceil \frac{n}{k} \rceil - 1$, otherwise by $ALG_g$, $y'$ should be selected as a center.
    
   Define
   $$
   r_1 := \frac{ \sum_{i \in X} d(i,Y)}{\sum_{i \in S} d(i,y')}~\text{~and~}~
   r_2 := \frac{ \sum_{i \in S \backslash X} d(i,Y)}{\sum_{i \in S} d(i,y')}.
   $$
   Then $r=r_1 + r_2$.
   For $r_1$, if $r_1 \geq 1$, we have,

   \begin{equation*}
       \begin{split}
           r_1 \leq& \frac{ \sum_{i \in X} d(i,y') + \sum_{i \in X} d(y',y^*)}{ \sum_{i \in X} d(i,y') + \sum_{i \in S \backslash X} d(i,y')} \le   \frac{\sum_{i \in X} d(y',y^*)}{\sum_{i \in S \backslash X} d(i,y')} \\
           \leq &  \frac{|X|\cdot 2R}{|S \backslash X| \cdot R} 
           \leq \frac{ 2 \left( \lceil \frac{n}{k} \rceil -1 \right)}{\lceil \frac{\alpha n}{k} \rceil - \lceil \frac{n}{k} \rceil +1}
           \le \frac{2}{\alpha -1}.
       \end{split}
   \end{equation*}
  
    Combining with the other case $r_1 < 1$, we have $r_1 \le \max\left\{ 1, \frac{2}{\alpha-1} \right\}$.
  Then, we derive an upper-bound for $r_2$.
   \begin{equation*}
       \begin{split}
       r_2 
       & \leq \frac{\sum\limits_{ i \in S \setminus X} (d(i,y^{\prime}) + d(y^{\prime}, y^{*}))}{\sum\limits_{ i \in S \setminus X} d(i,y^{\prime})} \leq 1 + \frac{|S \setminus X| \cdot 2R}{ |S \setminus X| \cdot R} = 3,
       \end{split}
   \end{equation*}
   where the last inequality is due to $d(y^{\prime},y^*) = 2R$ and $d(i,y^{\prime}) \geq R$ for any $i \in S \setminus X$. Combing the upper-bounds of $r_1,r_2$, we have $
   \beta < r = r_1 + r_2  \leq \max\{4, \frac{2}{\alpha + 1} + 3\} = \beta$,
   which is a  contradiction.
\end{proof}

\section{Experiments}
\label{sec:experiment}

\subsection{A Refined Algorithm $ALG_g^+(\obj)$}\label{sec:000}
Before examining the performance of our algorithm,
we note that though $ALG_g$ has good guarantee in the worst-case, it may not produce the fairest clustering in every instance. 
For example, in Figure \ref{fig:y equals x}, we randomly partition each of 1,000 nodes into 3 Gaussian-distributed sets with probability 0.2, 0.3 and 0.5 (from left to right). We want to build $k=10$ centers to serve the nodes; however, $ALG_g$ only returns 4 clusters whose centers are shown by the red stars. 
Obviously, the extra 6 centers can significantly improve its performance, regarding both fairness and efficiency.

\begin{figure}[h]
     \centering
     \begin{subfigure}[b]{0.33\linewidth}
         \centering
         \includegraphics[width=2.27in]{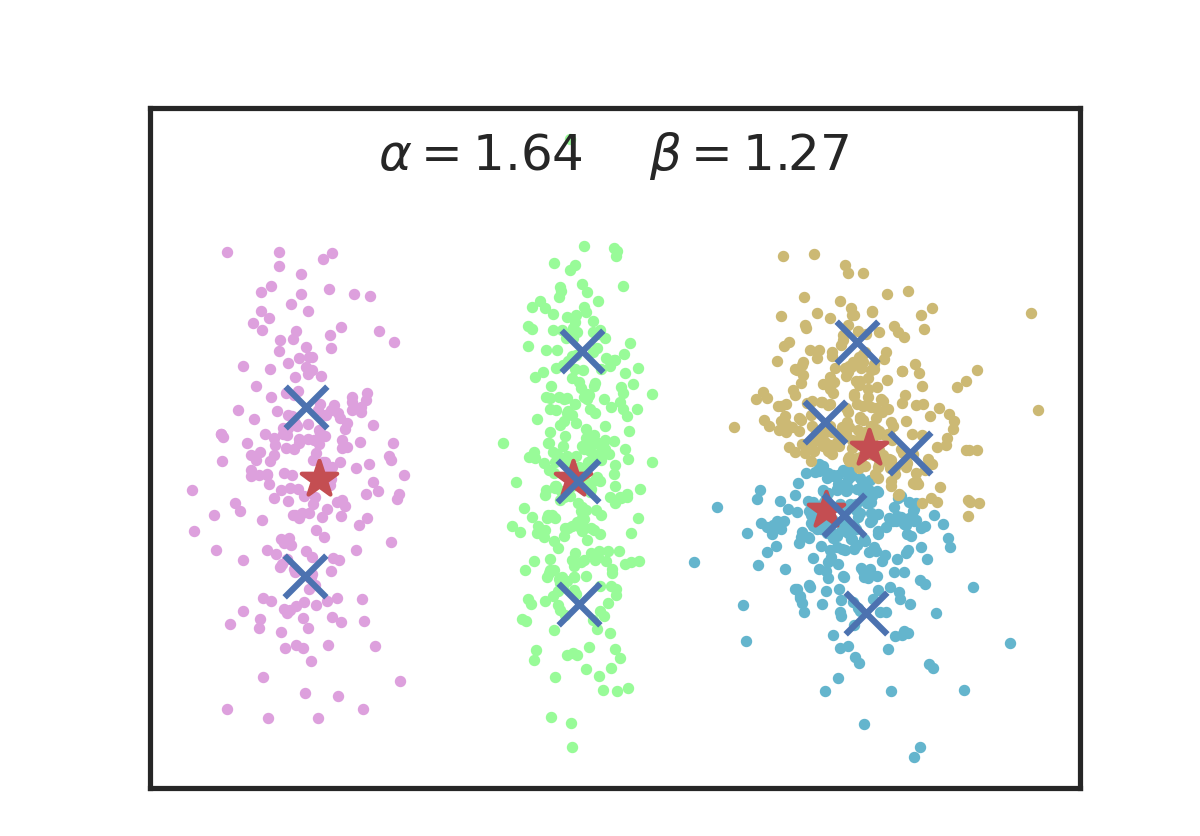}
         \caption{$ALG_g^+(k\text{-means})$}
         \label{fig:y equals x}
     \end{subfigure}
     \begin{subfigure}[b]{0.33\linewidth}
         \centering
         \includegraphics[width=2.24in]{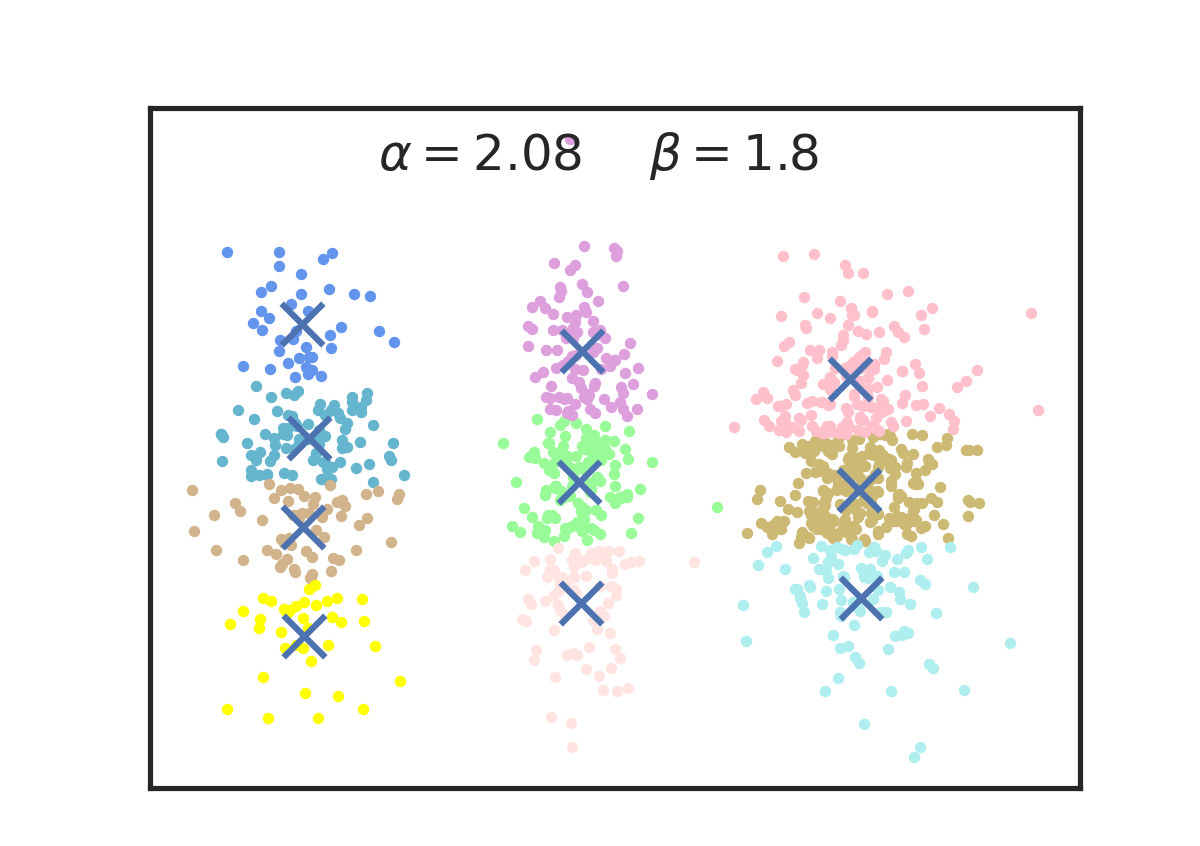}
         \caption{$k$-means++}
         \label{fig:three sin x}
     \end{subfigure}
     \begin{subfigure}[b]{0.33\linewidth}
         \centering
         \includegraphics[width=2.2in]{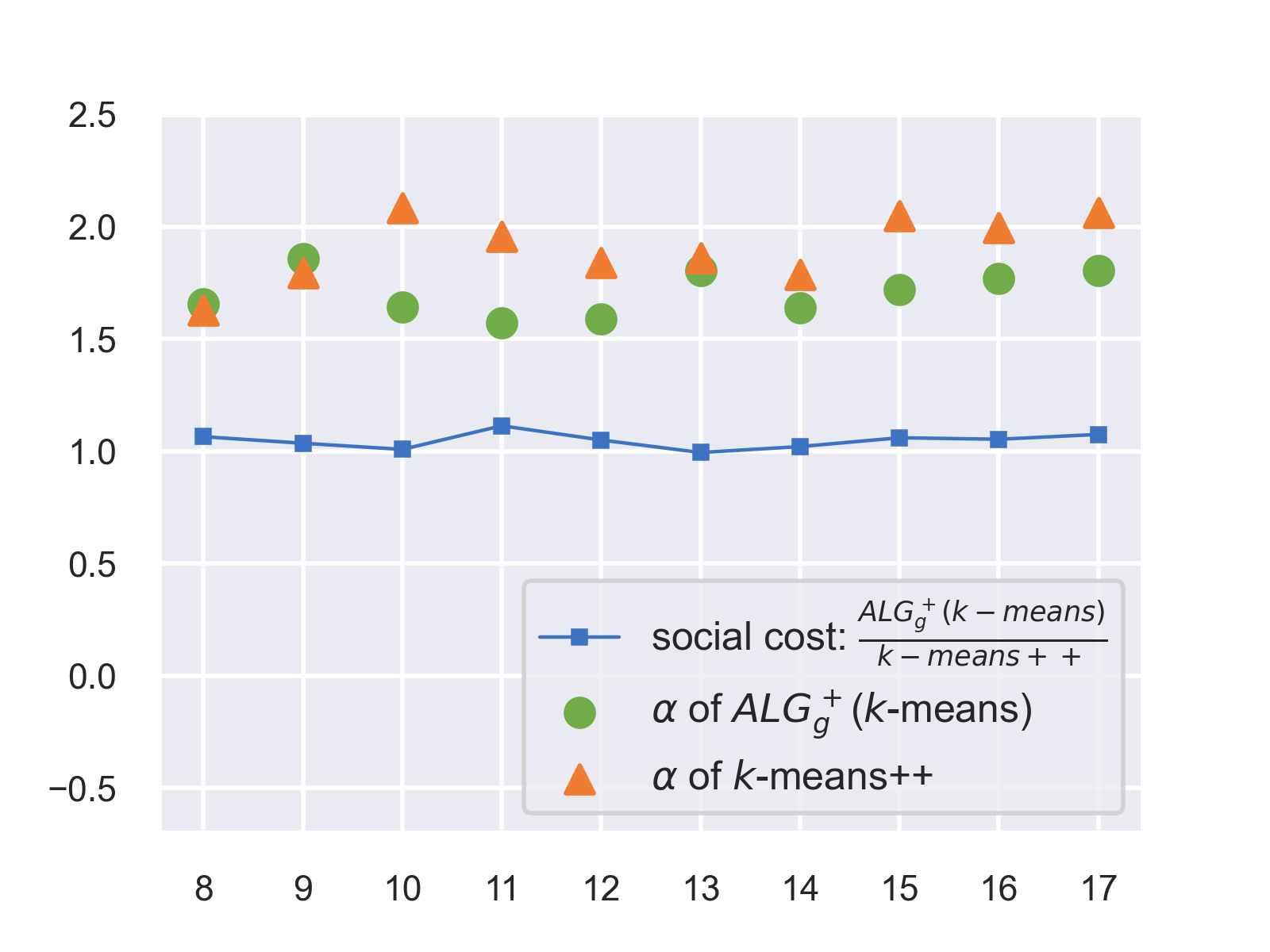}
         \caption{$\alpha$ and social cost in Gaussian dataset}
         \label{fig:zzz}
     \end{subfigure}
     
     \begin{subfigure}[b]{0.33\linewidth}
         \centering
         \includegraphics[width=2.2in]{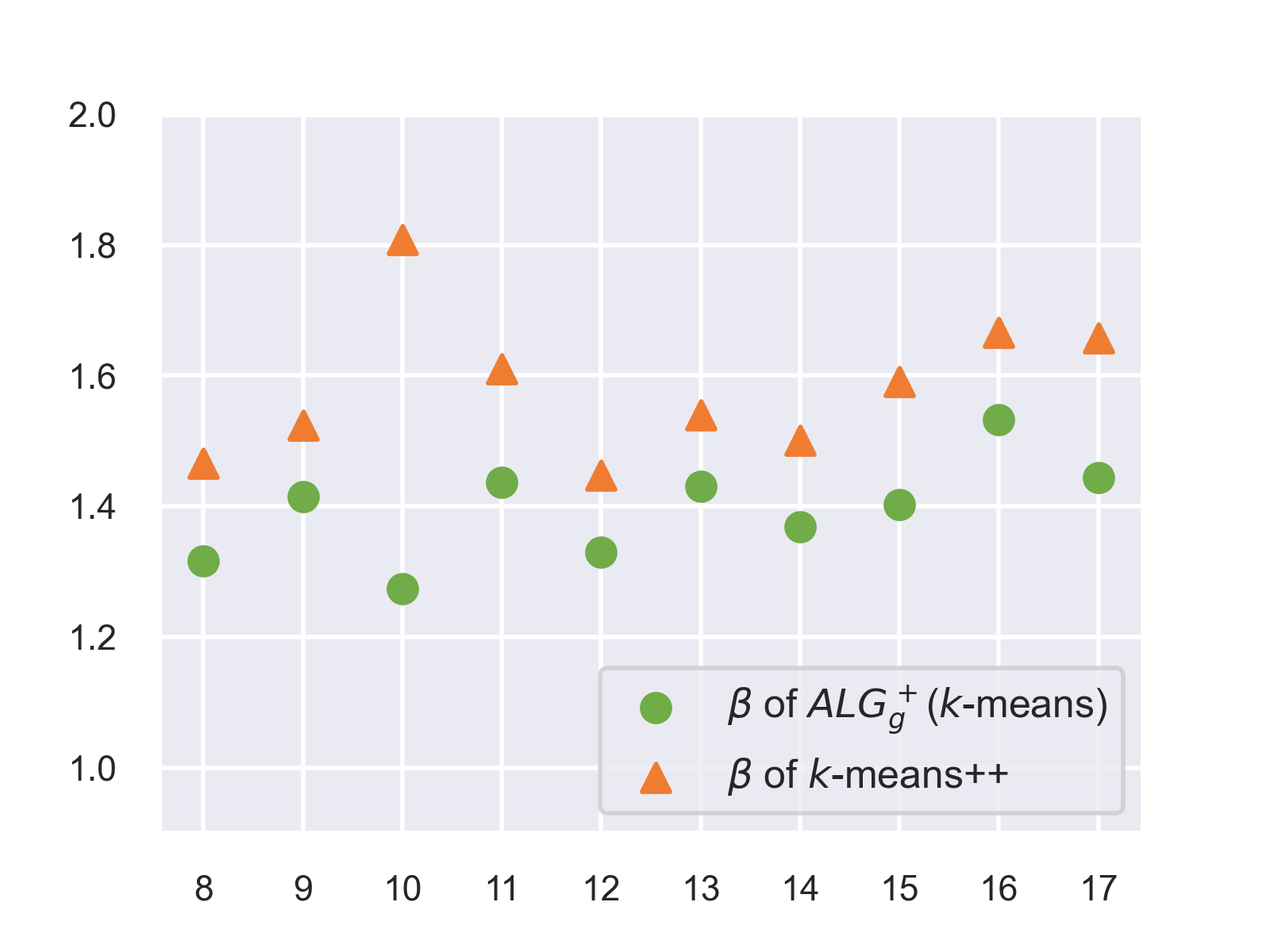}
         \caption{$\beta$ in Gaussian dataset}
         \label{fig:xxx}
     \end{subfigure}
     \begin{subfigure}[b]{0.33\linewidth}
         \centering
         \includegraphics[width=2.2in]{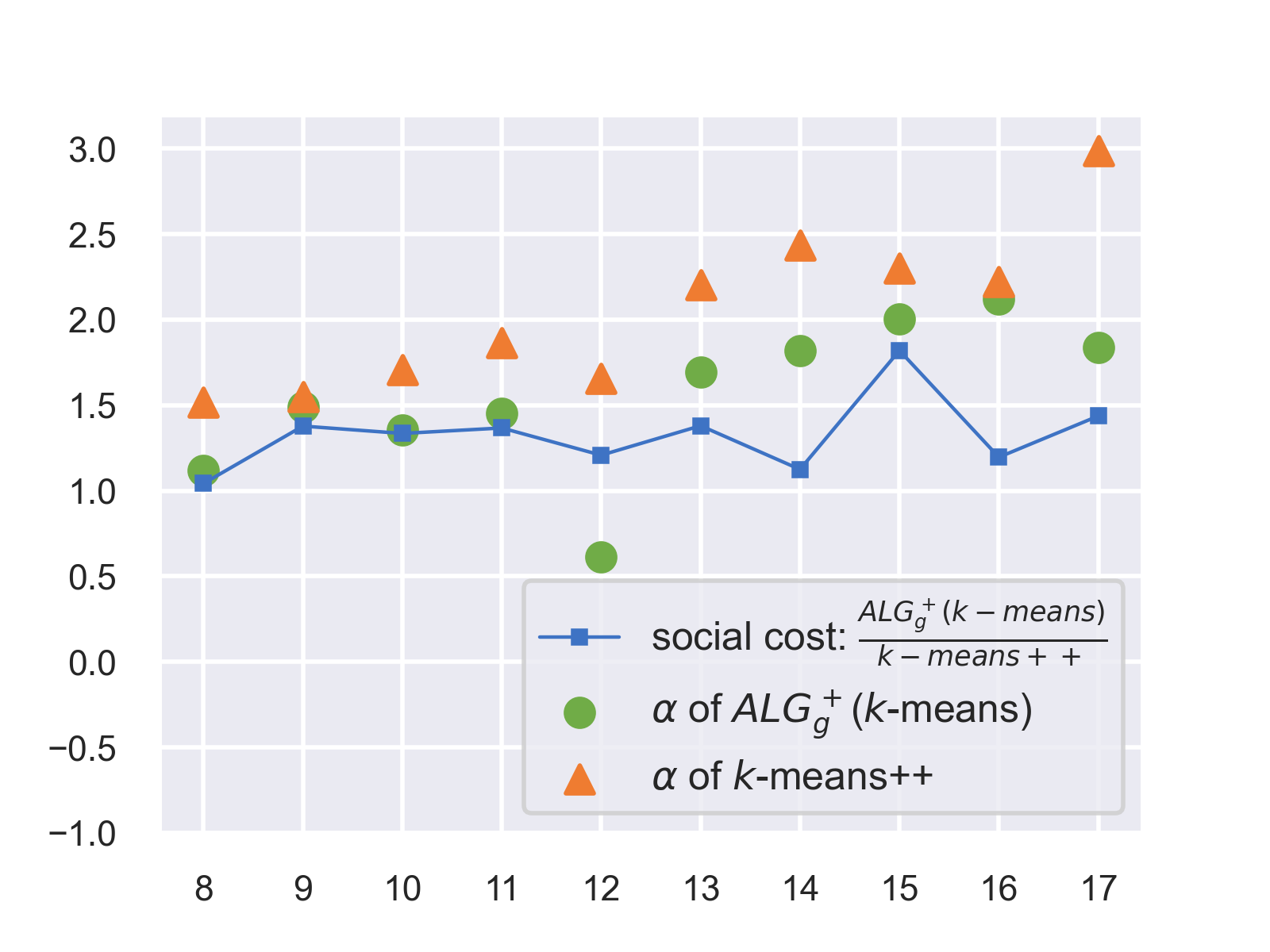}
         \caption{$\alpha$ and social cost in Mopsi locations}
         \label{fig:ccc}
     \end{subfigure}
     \begin{subfigure}[b]{0.33\linewidth}
         \centering
         \includegraphics[width=2.2in]{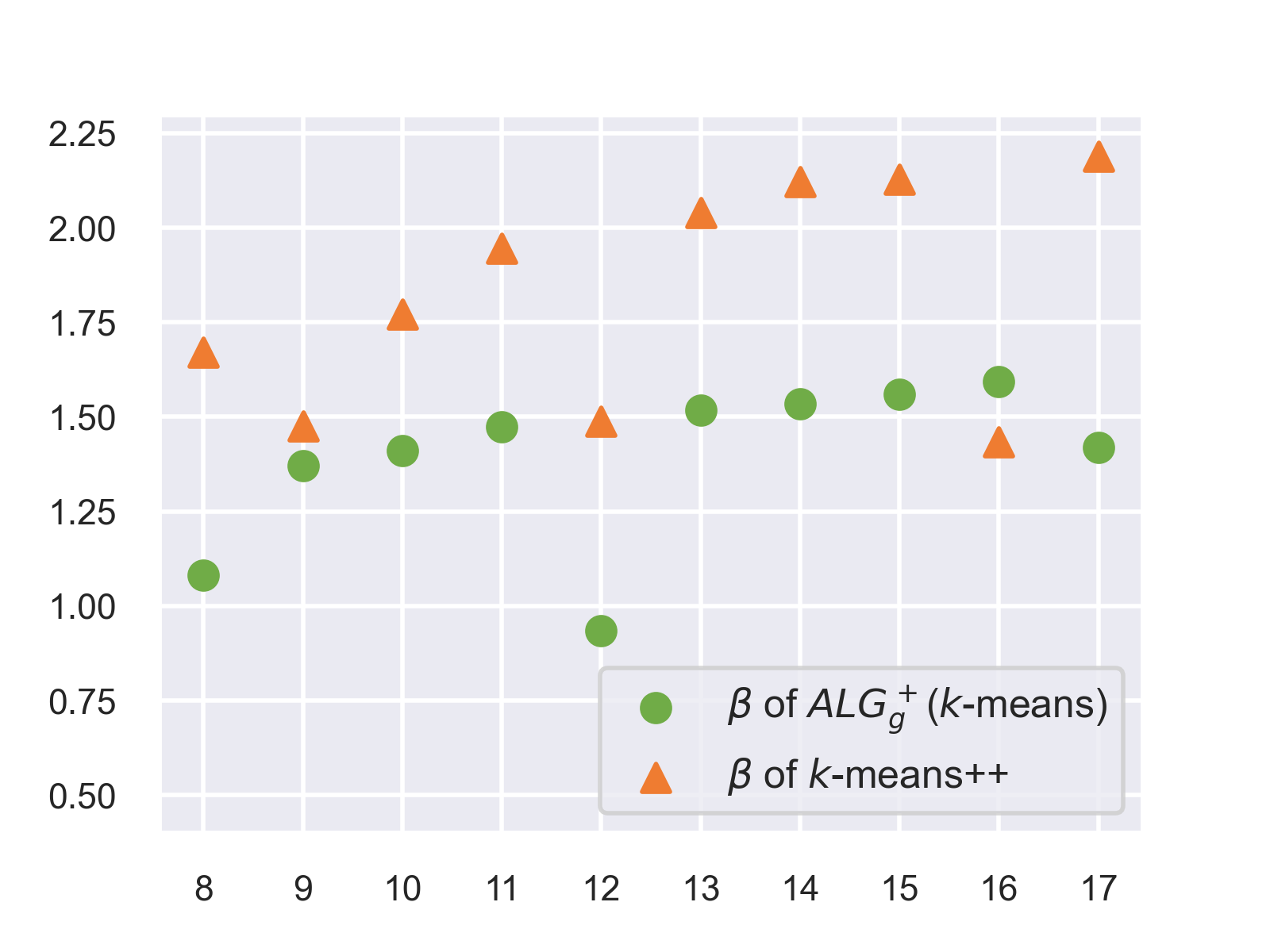}
         \caption{$\beta$ in Mopsi locations}
         \label{fig:vvv}
     \end{subfigure}
        \caption{(a) and (b) depict the clustering centers when two algorithms cluster Gaussian dataset for $k=10$. For a range of $k=8,\ldots,17$ (horizontal axis), (c) and (d) (resp. (e) and (f)) compare the fairness and efficiency in Gaussian dataset (resp. Mopsi locations). }
        \label{fig:bbb}
\end{figure}

\begin{algorithm}[H]
	\caption{\hspace{-2pt}{ \bf $ALG_g^+(\obj)$ for General Metric Space.}}
	\label{alg:heuristic}
	\begin{algorithmic}[1]
	\REQUIRE Metric space $(\mathcal{X},d)$, agents $\cN \subseteq \cX$, possible locations $\cM \subseteq \cX$, and $k \in \mathbb{N}^{+}$
	\ENSURE $k$-clustering $C$.
	
	\STATE Initialize $C = \emptyset$.
	
	\STATE Run $ALG_g$ on $(\cM,\cN,k)$ and get $Y=\{y_1,\ldots,y_{k'}\}$.
	\STATE Let $\cN_i$ be the corresponding cluster centered at $y_i$.
	\STATE 
	Rename so that $(|\cN_1| \mod \frac{n}{k})\ge\ldots\ge (|\cN_{k'}|\mod \frac{n}{k})$. 
	
	\STATE Let $r = k -\sum_{i=1}^{k'} \lfloor \frac{|\cN_i|}{n/k} \rfloor$.
	
	\FOR{$i = 1,\cdots,k'$}
	    \STATE Let $k_i = \lceil \frac{|\cN_i|}{n/k} \rceil$ if $i \le r$; otherwise, $k_i = \lfloor \frac{|\cN_i|}{n/k} \rfloor$.
	    \STATE \label{step:algg:obj} $\{y_{i1},\ldots,y_{ik_i}\} = \arg\min \obj(\cM, \cN_i, k_i)$. 
	    
	    \STATE $C\leftarrow C\cup \{y_{i1},\ldots,y_{ik_i}\}$.
	\ENDFOR
	\end{algorithmic}
\end{algorithm}

\vspace{-3mm}
To improve the performance, we refine $ALG_g$ in Algorithm \ref{alg:heuristic}, denoted by $ALG_g^+(\obj)$. Roughly, we first use $ALG_g$ to obtain a preliminary clustering {$Y$ and resultant partition $\cN = (\cN_i)_{i}$}, 
and then we proportionally assign all centers to these clusters according to their populations, i.e., $\sum_i k_i = k$ and $k_i \propto |\cN_i|$. 
Within each preliminary cluster $\cN_i$, the real centers are selected to optimize a social objective function $\obj(\cM,\cN_i,k_i)$ in Line 8. For example, when $\obj$ is the $k$-means objective (i.e., minimizing the squared Euclidean distances), we refer our algorithm to $ALG_g^+(k\text{-means})$; when $\obj$ is the $k$-medians objective (i.e., minimizing the Manhattan distances), we refer our algorithm to $ALG_g^+(k\text{-medians})$.
Thus, $\obj$ actually provides us an interface to balance fairness and social efficiency.
In the experiment shown in Figure \ref{fig:y equals x}, we feed $ALG_g^+(k\text{-means})$ with $k$-means++ algorithm \cite{vassilvitskii2006k}\footnote{$k$-means++ algorithm is Lloyd’s algorithm for k-means minimization objective with a particular initialization.},
which builds centers proportionally to the populations of the three Gaussian sets (i.e., 2,3,5 centers for each).
However, if we directly use $k$-means++ algorithm on $\cN$, it builds 4,3,3 centers for each Gaussian cluster, as shown in Figure \ref{fig:three sin x}, where the right set contains  half of all points but only gets 3 centers. 

\vspace{-2mm}\subsection{Experiments}
\vspace{-1mm}
We implement experiments on two qualitatively different datasets used for clustering. 
(1) Gaussian dataset (synthetic): the experiment in Section \ref{sec:000} (Figure \ref{fig:y equals x} and \ref{fig:three sin x}) is repeated for $k$ from 8 to 17.
(2) Mopsi locations in clustering benchmark datasets  \cite{ClusteringDatasets} (real-world): a set of 2-D locations for $n=6014$ users in Joensuu. 
Note that the second dataset concerning human beings is exactly the situation when the data points need to be treated fairly. Both
datasets are in Euclidean plane.

We consider the $k$-means objective as social cost, and compare our algorithm $ALG_g^+(k\text{-means})$ with $k$-means++. For each dataset, we consider a range of values of $k$. Figure \ref{fig:zzz}, \ref{fig:ccc} show the values of $\alpha$ (which is the minimum value such that the output clustering is a $(\alpha,1)$-core), and the ratio between the social costs of the two algorithms. Figure \ref{fig:xxx}, \ref{fig:vvv} show the values of $\beta$. 
In terms of fairness, $ALG_g^+(k\text{-means})$ has a significantly lower $\alpha$ and $\beta$ than $k$-means++ in most cases (though in few cases it is slightly larger). 
In terms of efficiency,  the social cost of $ALG_g^+(k\text{-means})$ is bounded by a small constant compared with $k$-means++, and ours is even as good as $k$-means++ on Gaussian dataset. 

Besides, we also investigate the $k$-medians objective empirically. We compared Lloyd's algorithm for $k$-medians and our Algorithm \ref{alg:heuristic} with $k$-medians objective in Line 8. The data sets consist of the 3-Gaussian dataset and real-world Mopsi locations in Joensuu, 
and S-sets in clustering benchmark datasets \cite{ClusteringDatasets}.

\textbf{Gaussian dataset.} We implement experiments with the $k$-medians objective for a range $k=8,\ldots,17$, as shown in Figure \ref{fig:kkkkk}. Our algorithm is clearly much fairer than Lloyd’s algorithm, in both $\alpha$-dimension and $\beta$-dimension. Moreover, for the social cost (i.e., the sum of the distances from each agent to the nearest center), our algorithm is slightly better than Lloyd’s algorithm, because our algorithm is two-stage which takes both fairness and efficiency into consideration. 

\begin{figure}[h]
     \centering
     \begin{subfigure}[h]{0.45\textwidth}
         \centering
         \includegraphics[width=0.83\textwidth]{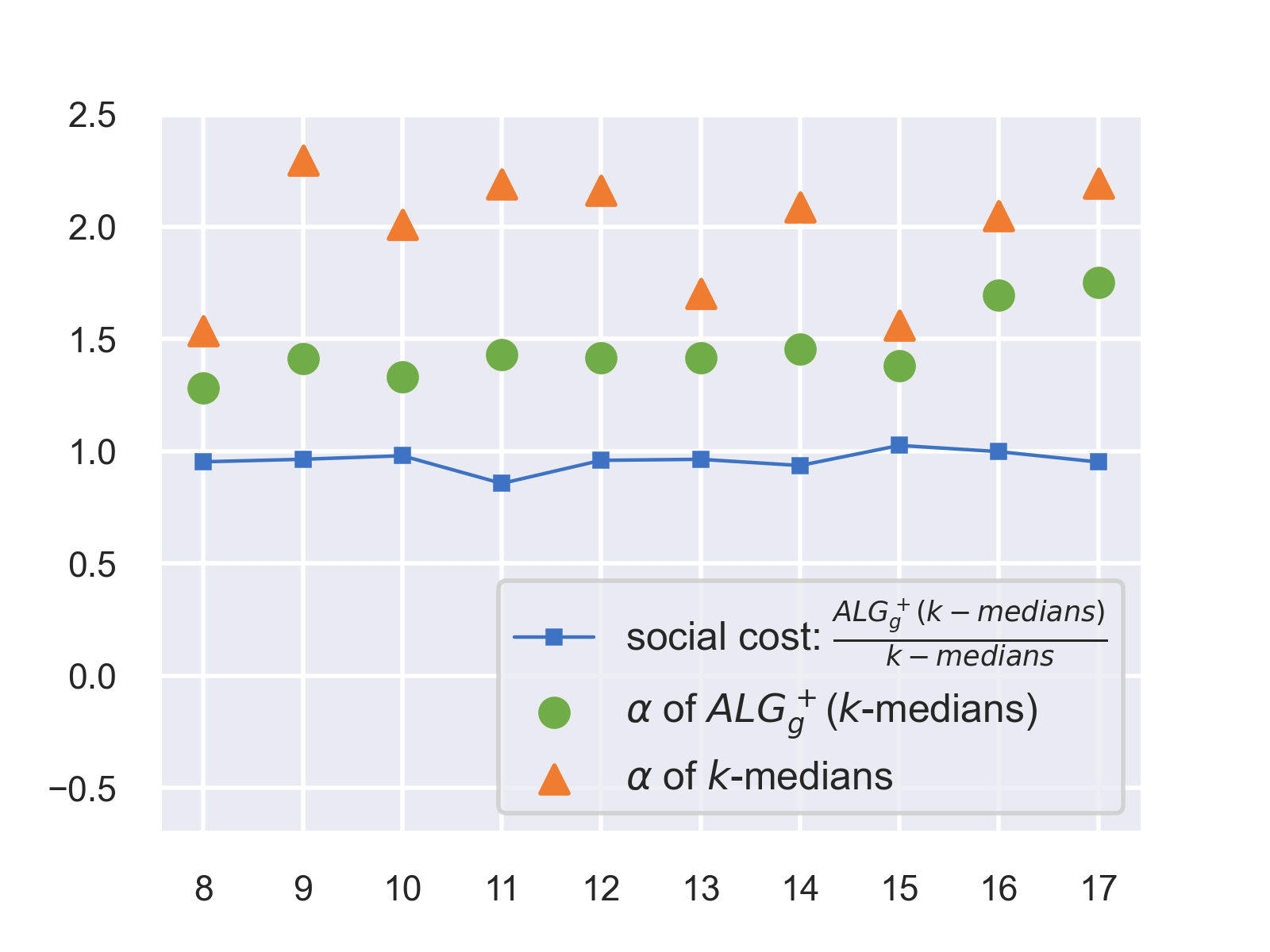}
         \caption{$\alpha$ and social cost on Gaussian dataset.}
     \end{subfigure}
     \begin{subfigure}[h]{0.45\textwidth}
         \centering
         \includegraphics[width=0.83\textwidth]{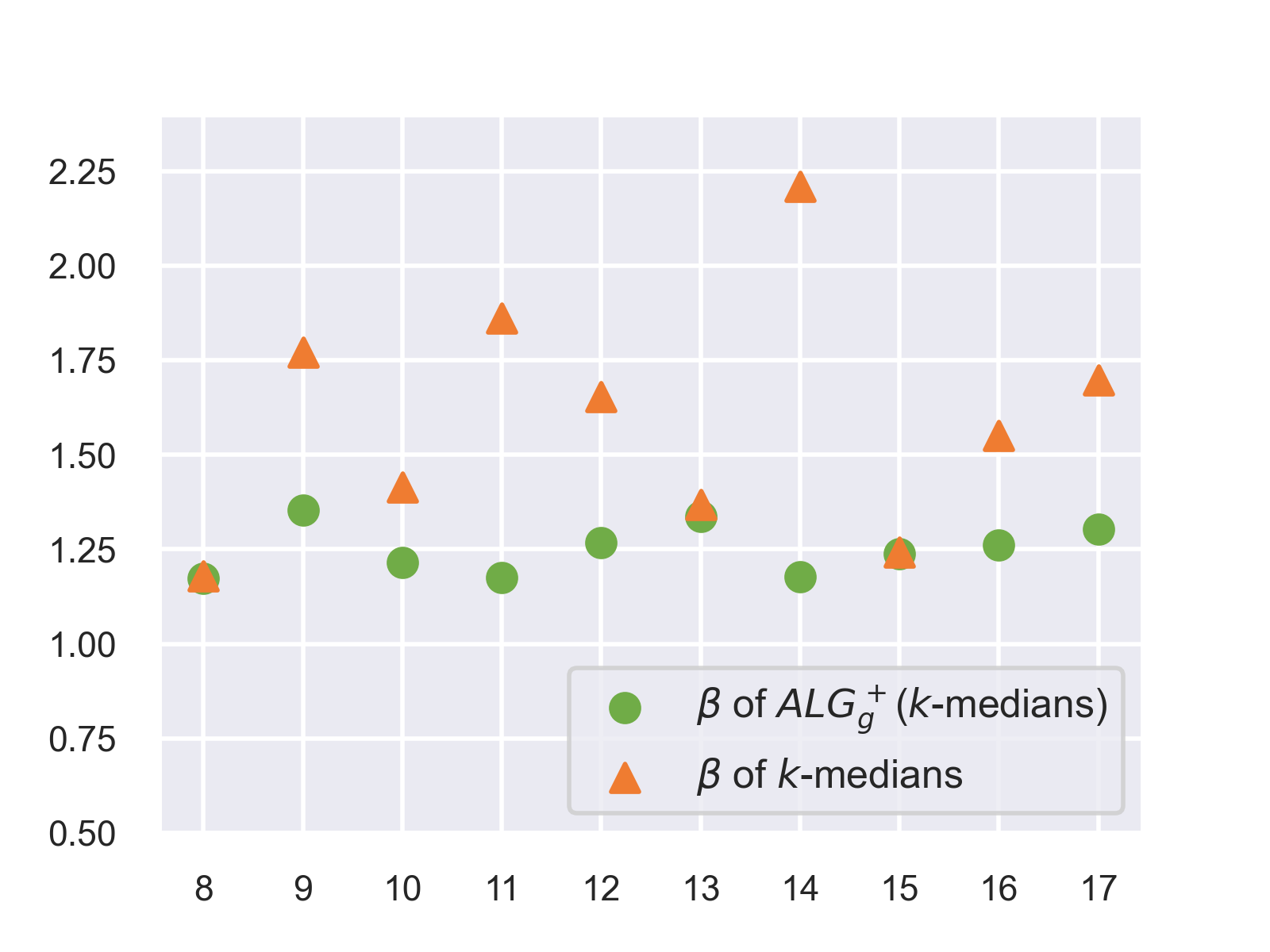}
         \caption{$\beta$ on Gaussian dataset.}
     \end{subfigure}
     \caption{The comparison between Algorithm \ref{alg:heuristic}  and Lloyd’s algorithm with $k$-medians objective on Gaussian dataset. }
     \label{fig:kkkkk}
\end{figure}

\begin{figure}[h]
     \centering
    \begin{subfigure}[b]{0.45\textwidth}
         \centering
         \includegraphics[width=0.65\textwidth]{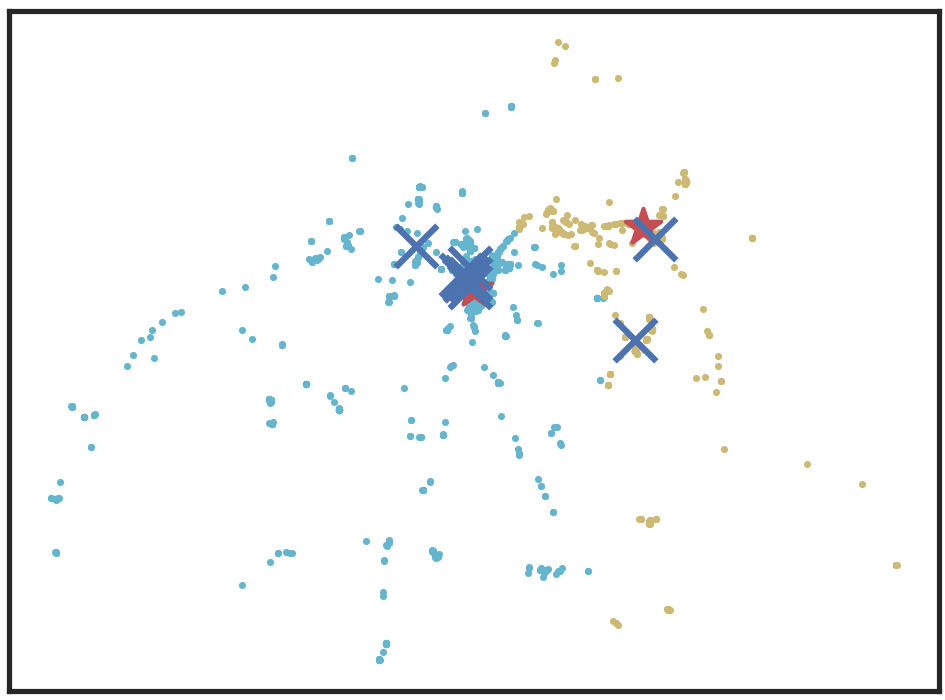}
         \caption{Algorithm \ref{alg:heuristic}.}
         \label{www}
     \end{subfigure}
     \begin{subfigure}[b]{0.45\textwidth}
         \centering
         \includegraphics[width=0.65\textwidth]{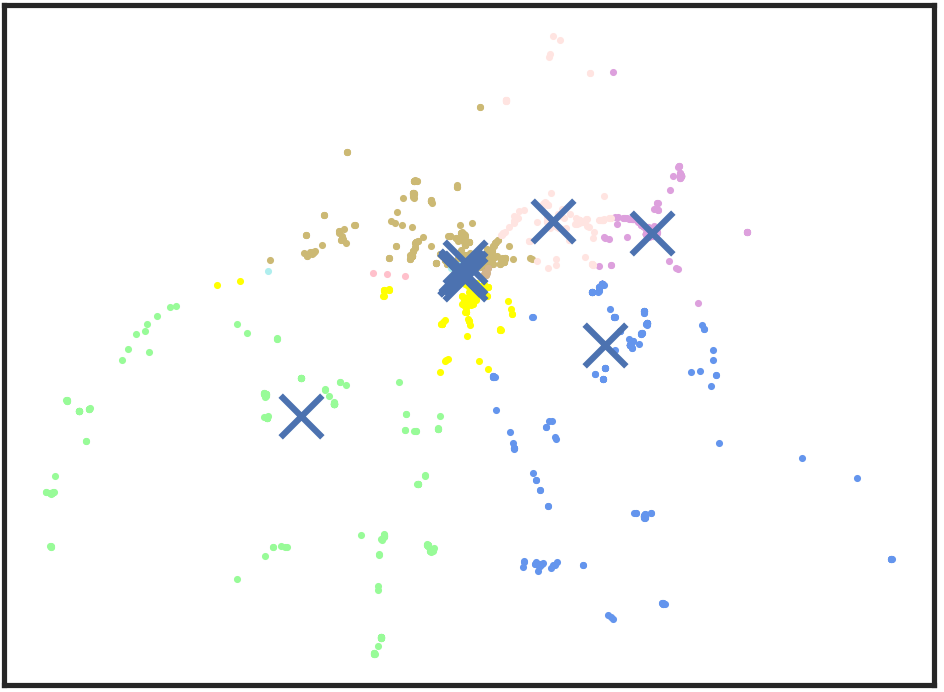}
         \caption{Lloyd’s algorithm.}
         \label{eee}
     \end{subfigure}
        \caption{Experiments with $k$-medians objective on Mopsi locations in Joensuu ($k=10$).}
        \label{fig:4321}
        
\end{figure}

\begin{figure}[h]
     \centering
     \begin{subfigure}[b]{0.45\textwidth}
         \centering
         \includegraphics[width=0.65\textwidth]{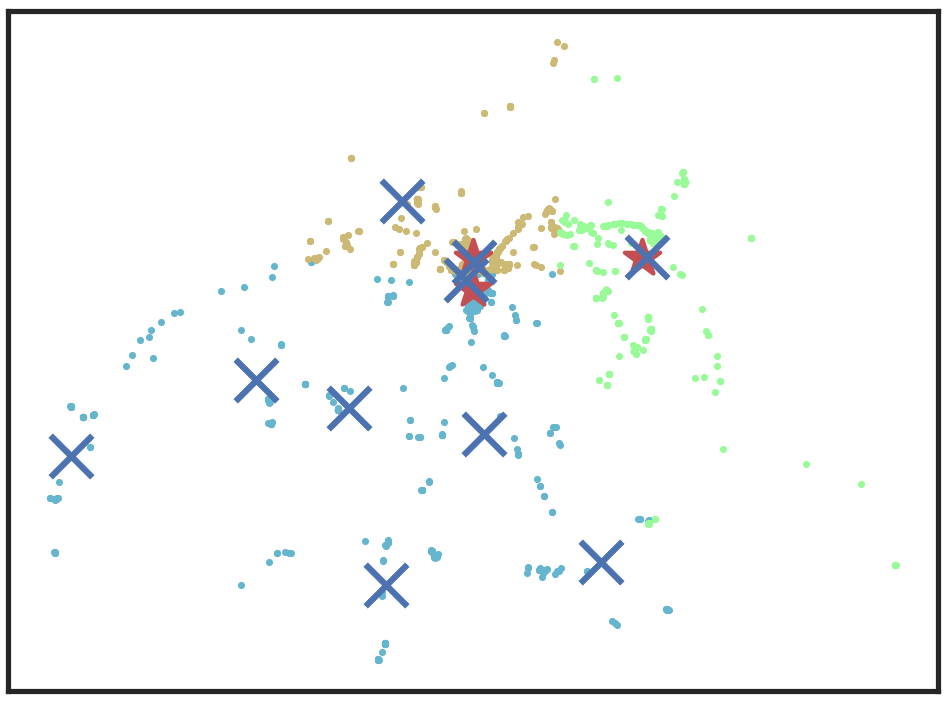}
         \caption{Algorithm \ref{alg:heuristic}.}
         \label{rrr}
     \end{subfigure}
     \begin{subfigure}[b]{0.45\textwidth}
         \centering
         \includegraphics[width=0.65\textwidth]{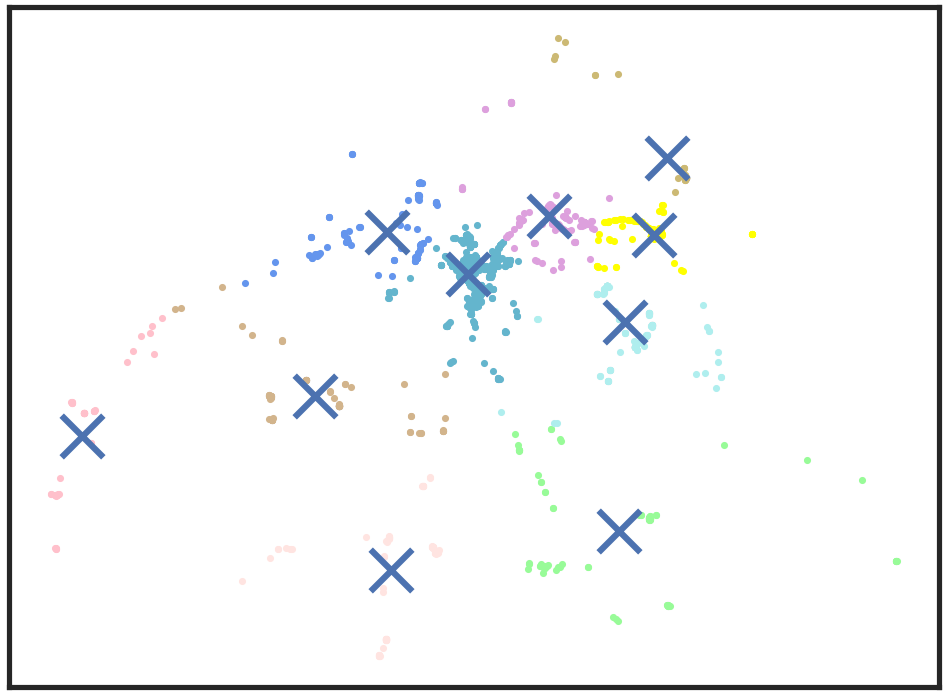}
         \caption{$k$-means++.}
         \label{ttt}
     \end{subfigure}
        \caption{Experiments with $k$-means objective on Mopsi locations in Joensuu ($k=10$).}
        \label{fig:loc}
        
\end{figure}

\textbf{Mopsi locations.} For the dataset of  Mopsi locations in Joensuu, we consider a range of $k=8,\ldots,17$. For every such $k$, the clustering output by either algorithm is exactly fair, i.e., $\alpha= 1,\beta= 1$. For example, we show the results with $k=10$ in Figure \ref{fig:4321}, where Figure \ref{www} depicts the clustering of Algorithm \ref{alg:heuristic}, and Figure \ref{eee} depicts the clustering of Lloyd’s algorithm. It is easy to observe that, in both clusterings, at least 6 centers are built nearly to serve a large group of agents, which helps guarantee the core fairness. The social cost ratio in this case is 1.38.

However, we notice that the algorithms with the $k$-means objective cannot obtain the exact core fairness, as shown in Figure \ref{fig:loc}. The clusterings returned by Algorithm \ref{alg:heuristic} is given by Figure \ref{rrr}, with $\alpha=1.49,\beta=1.45$.
The clusterings returned by $k$-means++ are given by Figure \ref{ttt}, with $\alpha=1.65,\beta=1.67$. The social cost ratio is 1.42.

\vspace{3mm}



\textbf{S-sets.} It contains synthetic 2-d data with $n=5000$ vectors and 15 Gaussian clusters with different degree of cluster overlap. Figure \ref{fig:s1a} and \ref{fig:s1b} show the experiment results on S1 set with the $k$-medians and $k$-means objectives, respectively. It can be seen that, for both objectives, our algorithm is fairer in most cases than classic algorithms, and the social cost of our algorithm is also better. 

To conclude, our algorithm ensures better core fairness for the agents than classic clustering algorithms, and meanwhile, empirically has a good efficiency as well.

\begin{figure}[h]
     \centering     
     \begin{subfigure}[h]{0.45\textwidth}
         \centering
         \includegraphics[width=0.83\textwidth]{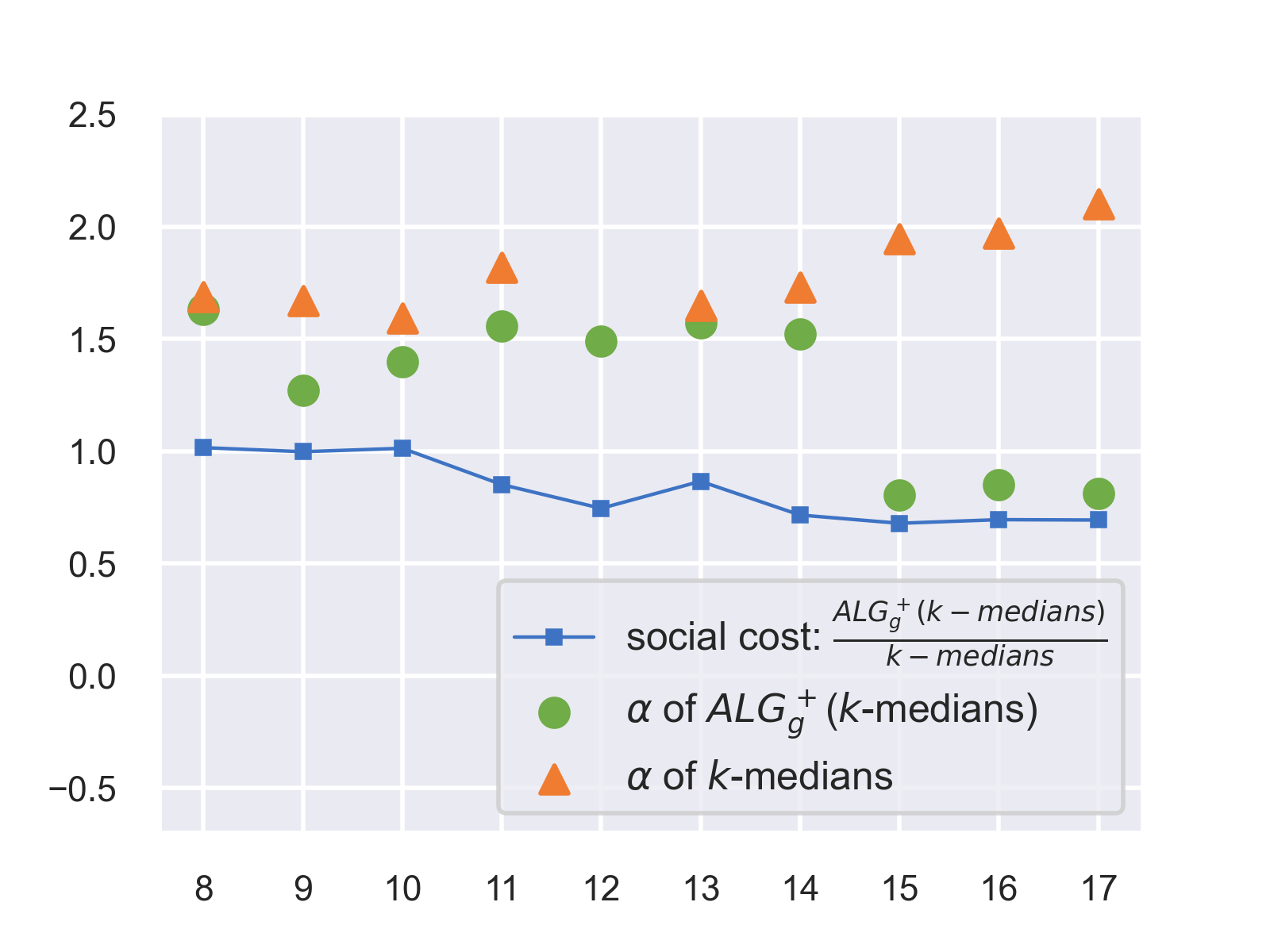}
         \caption{$\alpha$ and social cost on S1 dataset.}
     \end{subfigure}
     \begin{subfigure}[h]{0.45\textwidth}
         \centering
         \includegraphics[width=0.83\textwidth]{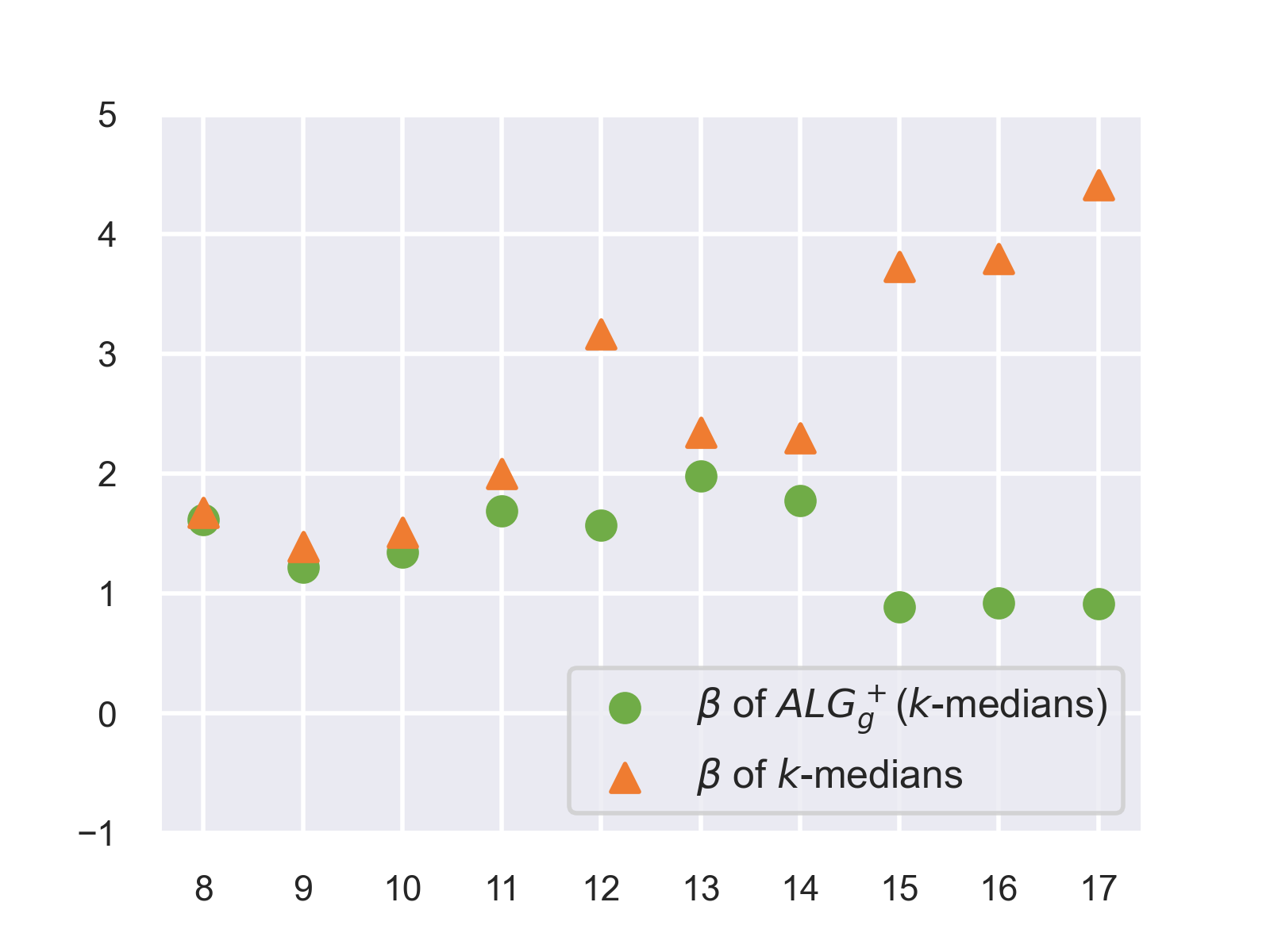}
         \caption{$\beta$ on S1 dataset.}
     \end{subfigure}
     
    \caption{Experiments with $k$-medians objective on S1 set.}
     \label{fig:s1a}
\end{figure}

\begin{figure}[h]
     \centering
     \begin{subfigure}[h]{0.45\textwidth}
         \centering
         \includegraphics[width=0.83\textwidth]{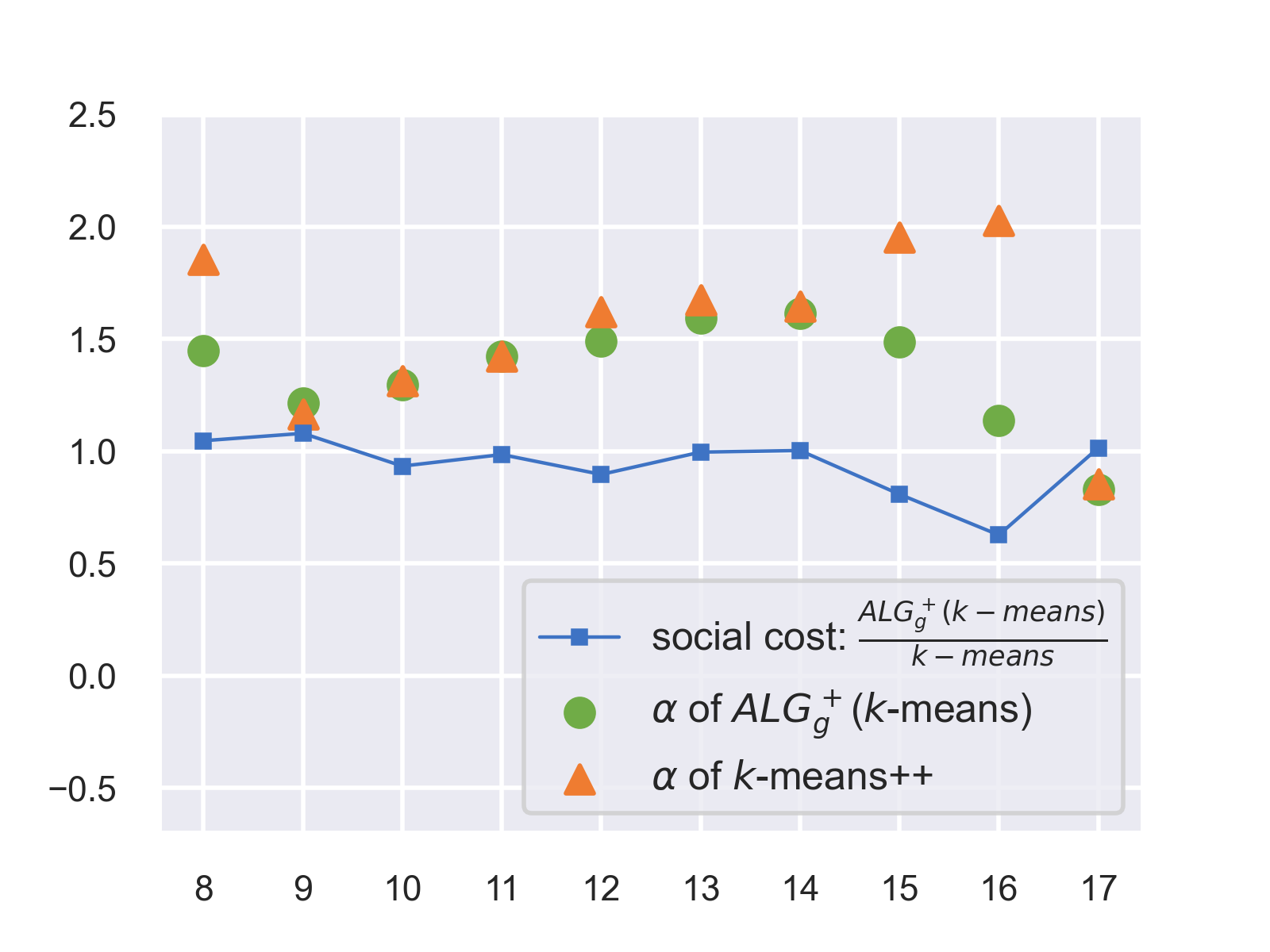}
         \caption{$\alpha$ and social cost on S1 dataset.}
     \end{subfigure}
     \begin{subfigure}[h]{0.45\textwidth}
         \centering
         \includegraphics[width=0.83\textwidth]{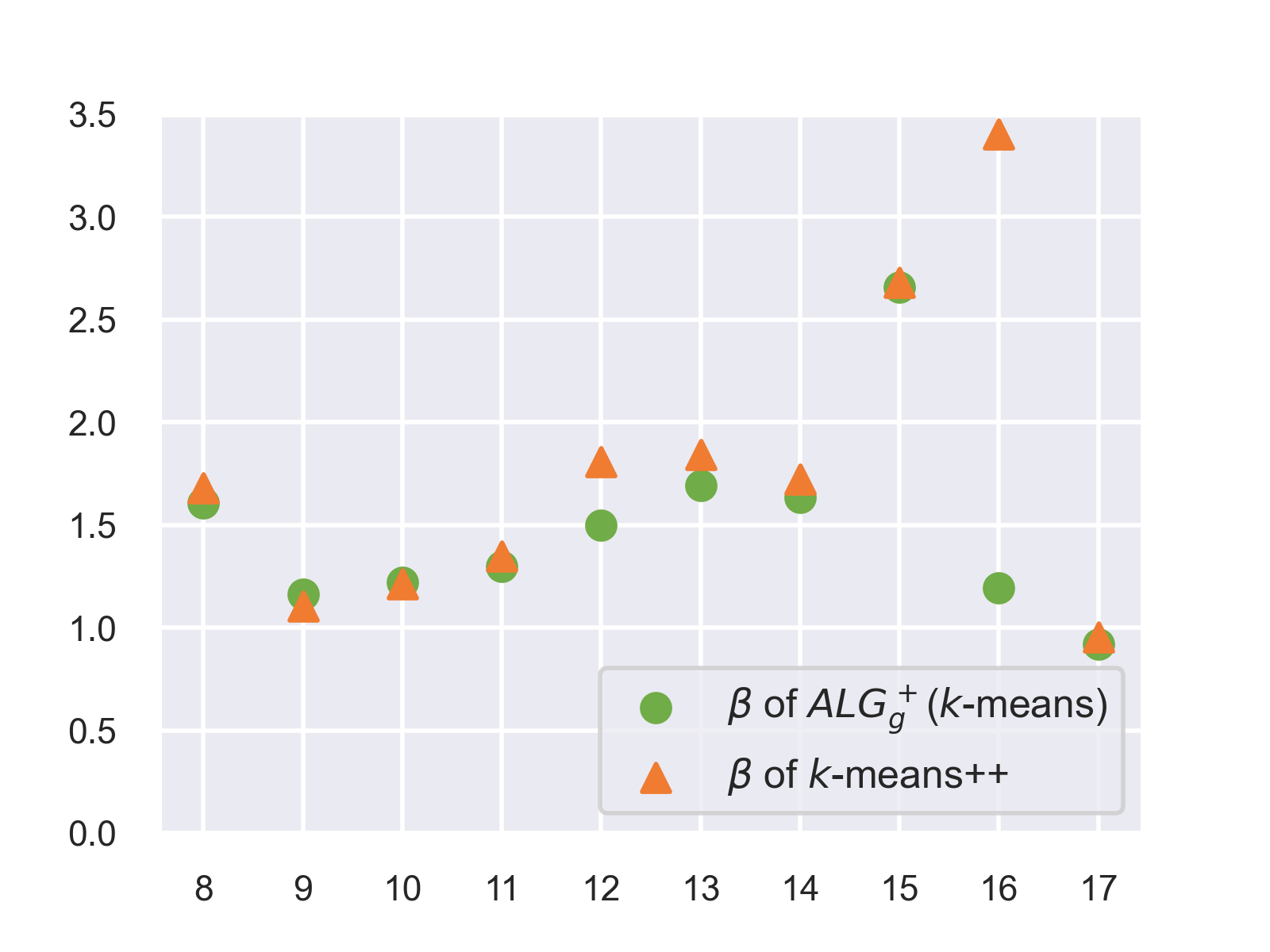}
         \caption{$\beta$ on S1 dataset.}
     \end{subfigure}
        \caption{Experiments with $k$-means objective on S1 set.}
        \label{fig:s1b}
\end{figure}

\section{Conclusion}
In this work, we revisited the algorithmic fair centroid clustering problem, and proposed a novel definition for group fairness -- core.
We demonstrated the extent to which an approximate core clustering is guaranteed to exist.

There are many future directions that are worth exploring.
For example, it will be interesting to improve the approximation bounds for other spaces such as $d$-dimensional Euclidean space for $d\ge 2$, and simple graphs with unit weight.
In Appendix \ref{appendix:B}, we prove that even when all agents consist of all vertices of a unit-weight tree, an exact core can still be empty. A systematic analysis of upper- and lower-bounds is left for future study.

\section*{Acknowledgments}
This work is partially supported by Research Grants Council of HKSAR under Grant No. PolyU 25211321, NSFC under Grant No. 62102333, and The Hong Kong Polytechnic University under Grant No. P0034420.

\bibliographystyle{unsrt}
\bibliography{main}

\appendix
\begin{center}
    {\bf \Large Appendix}
\end{center}



\section{Classic Algorithms Can Be Arbitrarily Unfair}
\label{appendix:A}

Regarding core fairness, we note that the traditional learning algorithms (e.g., Lloyd-style algorithms for $k$-means and $k$-medians) can be arbitrarily unfair in the worst case. Take $k$-medians as an example. Recall that $k$-medians algorithm returns the cluster set $Y \in [\cM]^k$  minimizing the total distance.
Consider the following special setting when $\cX = \cM = \bR$ and $k = 1$. For this case, $k$-medians algorithm degenerates to {\em selecting the median point of $\cN$}. That is, given $n$ locations of agents with $a_1 \le a_2 \le \cdots \le a_n$, the algorithm outputs $Y =\{  a_{\lceil\frac{n}{2}\rceil} \}$. Similarly, when $k$ is arbitrary, $k$-medians algorithm partitions $\cN$ into $k$ subsets and within each subset, the median point of it is selected as the center.

It is not hard to verify that when $k = 1$, $k$-medians algorithm outputs a core clustering. However, when $k \ge 2$, $k$-medians algorithm can be arbitrarily unfair.
Consider $k$ groups (with different sizes) of points with a total size $n$, where each group is far from the others. Then $k$-medians algorithm will set one cluster center for each group (at its median point) to minimize the total distance. Assume there is a group of $2\left \lceil \frac{n}{k} \right \rceil+1$ points, where
$\lceil \frac{n}{k} \rceil$ of these points are located at $0$, a single point is at $1$, and the remaining $\lceil \frac{n}{k} \rceil$ are at $2$.
Accordingly, $k$-medians algorithm puts the cluster center for this group of points at $1$ to minimize the total $1$-norm distance. However, all the $\lceil \frac{n}{k} \rceil$ points located at $0$ can deviate to a new cluster center at $0$ that decreases their total distance to $0$; and thus
\[
\beta = \frac{\left \lceil \frac{n}{k} \right \rceil}{0} = \infty,
\]
which means $k$-medians algorithm returns an arbitrarily unfair clustering in the $\beta$-dimension relaxation.


\section{Lower Bound for Simple Trees}
\label{appendix:B}



\begin{theorem}
There exists an instance in an unweighted tree such that $(1,\beta)$-core and  $(\alpha,1)$-core for all $\beta < \frac{14}{13}$ and $\alpha<\frac{28}{25}$ are both empty, even if each vertex lies exactly one agent.
\end{theorem}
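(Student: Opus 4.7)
The plan is to construct a single unweighted tree instance and verify both emptiness claims by a case analysis on all $k$-clusterings. The two ratios $\tfrac{14}{13}$ and $\tfrac{28}{25}$ strongly suggest that the witnessing blocking coalitions have sizes close to $\tfrac{n}{k}$ and $\tfrac{28}{25}\cdot\tfrac{n}{k}$, with total-distance ratios $\tfrac{14}{13}$ and just above $1$, respectively. So I would first choose $n,k$ so that $\tfrac{n}{k}$ is a small integer (natural candidates are $\tfrac{n}{k}=25$ or a divisor thereof) making $\tfrac{28n}{25k}$ integral and the arithmetic in the blocking ratios clean.

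First I would design the tree $T$ to be highly symmetric, for instance a root vertex with several identical ``branches'' attached, where each branch is a short subtree (a path, a broom, or a small caterpillar) containing exactly $\tfrac{n}{k}$ agents. The point of the symmetry is to collapse the set of essentially distinct $k$-clusterings down to a manageable collection. The inner geometry of each branch (how many leaves hang from each internal node, at what depth the ``natural'' center sits) should be tuned so that within a branch there is a unique optimal center location, and so that placing the center at any other vertex leaves a well-identifiable subset of agents who can strictly benefit from deviating to the optimum.

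Next, I would carry out the case analysis. By a pigeonhole argument, for any $k$-clustering $Y$ there is at least one branch whose center is not at its uniquely optimal location (or no center at all lies in the branch), and by the symmetry between branches I may assume this branch is fixed. Within that branch, I would exhibit a coalition $S\subseteq \cN$ of size either exactly $\lceil\tfrac{n}{k}\rceil$ or exactly $\lceil\tfrac{28n}{25k}\rceil$, together with a new center $y'\in \cM\setminus Y$, such that $\sum_{i\in S}d(i,Y)/\sum_{i\in S}d(i,y')\ge \tfrac{14}{13}$ in the first case and $>1$ in the second. The edge counts in the branch should be chosen so that these two particular ratios fall out of the computation.

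The main obstacle will be arranging a \emph{single} tree that simultaneously witnesses both bounds, since the two relaxations pull in opposite directions: a coalition of size exactly $\lceil\tfrac{n}{k}\rceil$ must achieve the relatively large improvement ratio $\tfrac{14}{13}$, while a coalition of size $\lceil\tfrac{28n}{25k}\rceil$ is allowed only a marginal improvement above $1$. Matching both of these tight ratios against every placement of $Y$ forces the branch geometry to satisfy several delicate numerical constraints at once, and walking through all sub-cases of where centers may sit within a branch (and checking that some blocking coalition always survives) is where the bulk of the technical work will lie.
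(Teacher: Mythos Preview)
Your high-level plan---a symmetric tree consisting of several identical branches, a pigeonhole argument forcing one branch to have a badly placed (or absent) center, and a case analysis within that branch---is exactly the paper's strategy. The paper uses $n=50$, $k=7$, and a tree with seven identical branches attached to a single root.

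However, your reading of the arithmetic is off in a way that would make you do far more work than necessary. You propose taking $n/k$ to be a small integer (e.g.\ $25$) and then constructing \emph{two different} blocking coalitions, one of size $\lceil n/k\rceil$ for the $\beta$ bound and a larger one of size $\lceil \tfrac{28}{25}\cdot\tfrac{n}{k}\rceil$ for the $\alpha$ bound. You then flag the simultaneous realisation of both ratios as the ``main obstacle''. In the paper this obstacle simply does not exist: the parameters are chosen so that $n/k=50/7$ is \emph{not} an integer and
\[
\frac{28}{25}\;=\;\frac{56}{50}\;=\;\frac{\lceil n/k\rceil}{n/k}\;=\;\frac{8}{50/7}.
\]
Thus for every $\alpha<\tfrac{28}{25}$ one has $\alpha\cdot\tfrac{n}{k}<8=\lceil n/k\rceil$, so the \emph{same} size-$8$ coalition that witnesses the $(1,\beta)$ emptiness (with improvement ratio at least $\tfrac{14}{13}>1$) is automatically a valid blocking coalition for the $(\alpha,1)$ claim. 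The $(\alpha,1)$ part is therefore a one-line corollary of the $(1,\beta)$ part, not a separate construction.

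So the genuine content lies entirely in the $(1,\beta)$ case analysis: designing each branch so that, no matter where the seven centers are placed, some branch admits a size-$8$ coalition with improvement ratio at least $\tfrac{14}{13}$. Once you drop the assumption that $n/k$ must be integral and see where $\tfrac{28}{25}$ actually comes from, the problem collapses to a single bound.
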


\begin{figure}[H]
    \centering
    \includegraphics[width=6cm]{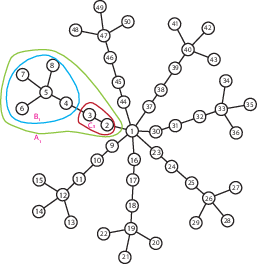}
    \caption{A lower bound for simple trees}
    \label{fig:unit-tree:lb}
\end{figure}

\begin{proof}
Consider the following instance constructed on the tree with unit weights. As shown in Figure \ref{fig:unit-tree:lb}, there are $n=50$ vertices distributed on $7$ branches and at each vertex lies an agent. Let $k=7$ be the number of cluster centers.

We first prove that $(1,\beta)$-core is empty for all $\beta < \frac{14}{13}$.
For clarity, in the figure we number the vertices and group them into $3$ types (sets): $A$, $B$ and $C$, each of which contains $7$ parts of vertices with identical structure on each branch denoted by $A=\bigcup_{i=1}^{7}A_i$, $B=\bigcup_{i=1}^{7}B_i$ and $C=\bigcup_{i=1}^{7}C_i$. Regarding deviation, by Lemma \ref{lem:nor}, we only need to consider a potential blocking coalition with size $\lceil \frac{n}{k} \rceil = 8$.

\emph{Case 1}: There exists at least one part of vertices in A without a center located within. W.l.o.g., suppose vertices in $A_1$ have no center located on them. Considering the group of agents $S=\{v_1,v_2,…,v_8\}$, their total distance from the center assigned to them in this case is at least $\sum_{i \in S} d(i,v_1) = 25$. However, they can form a blocking coalition to deviate together to a new center $v_5$, which can decrease their total distance to the lowest(optimal): $\sum_{i \in S} d(i,v_5) = 13$. Therefore, for $\beta < \frac{25}{13}$, all the clustering solutions in Case 1 are not a $(1,\beta)$-core clustering.

\emph{Case 2}: There is one center in every part of A type vertices. We further discuss two subcases.

\emph{Subcase 2.1}: There is exact one center in every part of B type vertices.
Considering a potential blocking coalition of $8$ agents $S=\{ v_1,v_2,v_9,v_{16},v_{23},v_{30},v_{37},v_{44} \}$ located at the center part of this tree, their total distance from the center assigned to them is at least $d(v_2,v_4)\cdot 7+d(v_1,v_4) = 17$. They can deviate to a new center $v_1$, which can decrease their total distance to $\sum_{i \in S} d(i,v_1) = 7$. Therefore, for $\beta < \frac{17}{7}$, all the clustering solutions in Subcase 2.1 are not a $(1,\beta)$-core clustering.

\emph{Subcase 2.2}: There exists one part of B without centers located in it, that is, there exists one part of C having a center located in it. W.l.o.g., suppose $C_1$ has a center located within while $B_1$ does not have one.
Consider the group of agents $S=\{v_1,v_2,…,v_8\}$. Their total distance from the centers assigned to them is at least $\sum_{i \in S} \min \{d(i,v_3),d(i,v_9\}) = 14$. However, they can form a blocking coalition to deviate together to a new center $v_5$, which can decrease their total distance to $\sum_{i \in S} d(i,v_5) = 13$. Hence, for $\beta < \frac{14}{13}$, all the clustering solutions in Subcase 2.2 are not a $(1,\beta)$-core clustering.

Therefore, in this instance, for all $\beta < \frac{14}{13}$, $(1,\beta)$-core is empty.

When $\beta=1$ and $\alpha<\frac{56}{50}=\frac{28}{25}$, for the above instance there always exists a blocking coalition of size $\lceil\frac{\alpha n}{k}\rceil=8$. Therefore, $(\alpha,1)$-core is empty for all $\alpha<\frac{28}{25}$.
\end{proof}




\end{document}